\documentclass[11pt,reqno]{amsproc}


\usepackage[margin=1in]{geometry}
\usepackage{amsmath, amsthm, amssymb}
\usepackage{hyperref}
\usepackage{times}
\usepackage[usenames,dvipsnames]{color}


\usepackage{cjhebrew}


\title[Inviscid Limits for a Stochastic Shell Model]
{Inviscid Limits for a Stochastically Forced Shell Model of Turbulent Flow}

\author{Susan Friedlander}
\address{University of Southern California, Los Angeles, CA 90089}
\email{susanfri@usc.edu}

\author{Nathan Glatt-Holtz}
\address{Virginia Tech, Blacksburg, VA 24061}
\email{negh@vt.edu}

\author{Vlad Vicol}
\address{Princeton University, Princeton, NJ 08544}
\email{vvicol@math.princeton.edu}

\theoremstyle{plain}
\newtheorem{theorem}{Theorem}[section]
\newtheorem*{theorem*}{Theorem}
\newtheorem{definition}[theorem]{Definition}
\newtheorem{lemma}[theorem]{Lemma}
\newtheorem{proposition}[theorem]{Proposition}
\newtheorem{corollary}[theorem]{Corollary}

\theoremstyle{definition}
\newtheorem{remark}[theorem]{Remark}

\def\RR{{\mathbb R}}

\def\NN{{\mathbb N}}
\def\RR{{\mathbb R}}

\def\IC{\underline{u}}


\def\Prb{{\mathbb P}}
\def\E{{\mathbb E}}

\def\tilde{\widetilde}
\numberwithin{equation}{section}

\newcommand{\bfU}{\boldsymbol{u}}
\newcommand{\TT}{\mathbb{T}}
\newcommand{\DD}{\mathcal{D}}

\newcommand{\JJ}{\mathcal{J}}
\newcommand{\AAA}{\mathcal{A}}
\newcommand{\MM}{\mathcal{M}}
\newcommand{\MD}{\mathfrak{D}}
\newcommand{\Mspc}{\mathbb{D}}

\newcommand{\TaN}{\mathfrak{T}_{\alpha,N}}
\newcommand{\AddSet}{\textcjheb{k}}


\begin{document}

\begin{abstract}
We establish the anomalous mean dissipation rate of energy in the inviscid limit for a 
stochastic shell model of turbulent fluid flow.  
The proof relies on viscosity independent bounds for stationary solutions and on establishing 
ergodic and mixing properties for the viscous model.
The shell model is subject to a degenerate stochastic forcing 
in the sense that noise acts directly only through one wavenumber.
We show that it is hypo-elliptic (in the sense of H\"ormander) and use
this property to prove a gradient bound on the Markov semigroup.
\hfill \today
\end{abstract}

\maketitle

\setcounter{tocdepth}{1}
\tableofcontents

\section{Introduction}

Although there is a vast body of literature on Kolmogorov's theory
of turbulence, the dissipation anomaly, and the inviscid limit, at present
there is no rigorous mathematical proof that solutions to the Navier-Stokes equations yield Kolmogorov's laws.  
On the other hand, considering these
questions from a numerical perspective is costly and indeed in many 
situations lies beyond capacity of the most sophisticated computers.  For this reason researchers have  extensively investigated 
certain toy models, called {\em shell} or {\em dyadic} models, which are much simpler than the 
Navier-Stokes equations but which retain certain features of the nonlinear
structure. One such model was introduced by Desnianskii and Novikov~\cite{DesnianskiiNovikov1974},
to simulate the cascade process of energy transmission in turbulent flows. See also~\cite{FriedlanderPavlovic04,KiselevZlatos05,KatzPavlovic05,ConstantinLevantTiti07,MattinglySuidanVanden07,BessaihMillet2009,BessaihFlandoliTiti10,BarbatoFlandoliMorandin2010,Romito11,BarbatoFlandoliMorandin2012,Tao14}.
 
In this article we analyze statistically invariant states for the following {\em stochastically
driven} shell model of fluid turbulence.  For $j =0$ we take
\begin{align}
  d u_0 + (\nu u_0  + u_0 u_1) dt = \sigma dW
  \label{eq:shell:model:0}
\end{align}
where $W$ is a 1D Brownian motion
and $\sigma \in \RR$ measures the intensity of the noise. 
For $j \geq 1$
\begin{align}
  \frac{d}{dt} u_j + \nu 2^{2j}u_j +  (2^{cj} u_j u_{j+1} - 2^{c(j-1)} u_{j-1}^2) = 0.
    \label{eq:shell:model:j}
\end{align}
Here $\nu \geq 0$ and $c$ lies in the range $[1, 3]$.  

The main goal of the work is to establish that in the context of the stochastic dyadic model \eqref{eq:shell:model:0}--\eqref{eq:shell:model:j} some primary features of the 
Kolmogorov '41 theory of turbulence~\cite{Kolmogorov41a,Kolmogorov41b} hold. More precisely:
\begin{enumerate}
\renewcommand{\labelenumi}{(\Roman{enumi})}
\item In Theorem~\ref{thm:inviscid:limits} we prove that for $c\in[1,3]$, statistically stationary solutions $\bar u^\nu$ of the viscous shell model \eqref{eq:shell:model:0}--\eqref{eq:shell:model:j} converge as $\nu \to 0$ to statistically stationary solutions $\bar u$ of the inviscid shell model. Moreover, {\em the stationary inviscid solutions $\bar u$ experience 
an anomalous (or turbulent) dissipation of energy}: for any $N\geq 0$ we have a constant mean energy flux (cf. \eqref{eq:flux:derivation:ansatz} below)
\begin{align}
\E (\Pi_N(\bar u)) := \E (2^{cN} \bar u_N^2 \bar u_{N+1}) = \frac{\sigma^2}{2} = \epsilon > 0 .
\end{align} 
Moreover, we obtain that $\sup_{N\geq 0} 2^{2cN/3} \E |\bar u_N|^2 \leq C \epsilon^{2/3}$, where $C$ is a universal constant. This upper bound is consistent with the Kolmogorov spectrum, as described in Remark~\ref{rem:Kolmogorov} below.
\item In Theorem~\ref{thm:uniqueness:IM} we show that for $c\in[1,2)$, and any $\nu>0$, there exists a {\em unique invariant measure} for the Markov semigroup induced by \eqref{eq:shell:model:0}--\eqref{eq:shell:model:j} on the phase space $H = \ell^2$, {\em which is ergodic and exponentially mixing}.  Since \eqref{eq:shell:model:0}--\eqref{eq:shell:model:j} corresponds to a degenerate parabolic system, the main step in the proof relies on establishing that \eqref{eq:shell:model:0}--\eqref{eq:shell:model:j} is \emph{hypoelliptic} in the sense of H\"ormander. Here, the locality of the energy transfer in the nonlinear term complicates the bracket computations, and leads to a combinatorial problem. 
\item In Theorem~\ref{thm:dissipation:anomaly} we prove that for $c\in [1,2)$, {\em the mean dissipation rate of energy is bounded from below independently of viscosity}. 
More precisely there exists $\epsilon>0$ such that 
\begin{align}
\lim_{\nu \to 0} \lim_{T\to \infty} \frac{\nu}{T} \int_0^T  |u(t)|_{H^1}^2 dt = \frac{\sigma^2}{2} = \epsilon > 0
\label{eq:K41}
\end{align}
for every initial data $\{ u_j(0)\}_{j\geq 0}$ of finite energy, where the convergence occurs in an almost sure (pathwise) sense. In particular, 
the dissipation anomaly  $\sigma^2/2$ matches the inviscid anomalous energy dissipation rate.
\end{enumerate}

The manuscript is organized as follows.  
We begin our exposition with some further background from turbulence theory that motivate the
rigorous results established in Sections \ref{sec:math:setting}--\ref{sec:anomal:dissipation}.  
In Section~\ref{sec:math:setting} we briefly recall the mathematical setting of the stochastic shell model 
\eqref{eq:shell:model:0}--\eqref{eq:shell:model:j} and fix various mathematical notations used throughout.  Section~\ref{sec:moment:bnds:inviscid:limits}
is concerned with establishing $\nu$-independent bounds on statistically stationary solution of \eqref{eq:shell:model:0}--\eqref{eq:shell:model:j}.   We then
use these bounds to pass to a limit as $\nu \to 0$ and establish the existence of stationary solutions of the inviscid model.  We then show that these solutions 
exhibit a form  of turbulent dissipation.   As we already alluded to above, the results in Section~\ref{sec:moment:bnds:inviscid:limits} are valid over the entire range of $c$. 
In Section~\ref{sec:unique:attraction} we tackle the question of uniqueness, mixing and other attraction properties
for invariant measures of the viscous model in the more restricted range of $c \in [1,2)$.  The restriction $c<2$ implies that the equations are morally speaking semilinear, which allows us to obtain Foias-Prodi-type bounds.
The section concludes by  demonstrating that \eqref{eq:shell:model:0}--\eqref{eq:shell:model:j} satisfies
a form of the H\"ormander bracket condition. With this condition in hand the rest of the proof largely follows by using arguments similar to~\cite{HairerMattingly06, HairerMattingly2008,HairerMattingly2011, FoldesGlattHoltzRichardsThomann2013}.  Finally, Section~\ref{sec:anomal:dissipation} is devoted to proving
the dissipation anomaly \eqref{eq:K41}.  Appendices detail how a gradient bound on the Markov semigroup associated to 
 \eqref{eq:shell:model:0}--\eqref{eq:shell:model:j} can be derived from the H\"ormander bracket condition.
 We then show how various attraction properties for invariant measures may be established from these gradient bounds.

\section{Physical Motivation}
\label{sec:turbulence:back}
In this section we describe some further background concerning the Kolmogorov and Onsager theories of
turbulence which motivate the analysis of \eqref{eq:shell:model:0}--\eqref{eq:shell:model:j} carried out in this work.
\subsection{The Energy Flux, Dissipation Anomaly, and Anomalous Dissipation}
\label{sec:turbulence}
The motion of an inviscid, incompressible fluid is typically described by the Euler equations 
\begin{align}
   &\partial_t \bfU + (\bfU \cdot \nabla) \bfU = - \nabla p + f, \qquad \nabla \cdot \bfU = 0
   \label{eq:euler:eqn}
\end{align}
where $\bfU$ is the velocity field $p$ is the scalar pressure.
The viscous analogue of \eqref{eq:euler:eqn}, the Navier-Stokes equations, are given by
\begin{align}
   &\partial_t \bfU^\nu + (\bfU^\nu \cdot \nabla) \bfU^\nu = - \nabla p^\nu + \nu \Delta \bfU^\nu + f, \qquad \nabla \cdot \bfU^\nu = 0.
   \label{eq:NS:eqn}
\end{align}
Here $f$ is a (deterministic or random) force
which is frequency localized to act only at large scales of motion and $\nu$ is
the  kinematic viscosity coefficient of the fluid. The fluid domain  $\DD$ is either $\RR^3$ or $\TT^3$.   

Onsager \cite{Onsager1949} conjectured that every weak solution $\bfU$ to the Euler equations
with H\"older exponent $h > 1/3$ does not dissipate the kinetic energy $\int_{\DD} | \bfU|^2 dx$.  On the other
hand, the conjecture states that there exist weak solutions with smoothness 
less $h \leq 1/3$ which dissipate energy.  Such energy dissipation due to the
roughness of the flow is called {\em anomalous (or turbulent) dissipation}.  

The presence of energy dissipation in a viscous fluid with $\nu > 0$ is clear.
The mean energy dissipation rate per unit mass for an ensemble of solution
$\bfU^\nu$ to the Navier-Stokes equations \eqref{eq:NS:eqn}
is defined by 
\begin{align}
   \epsilon^\nu :=  \nu \langle \| \nabla \bfU^\nu\|_{L^2}^2 \rangle
   \label{eq:enegy:disp}
\end{align}
where the brackets $\langle \cdot \rangle$ denote a suitable average of the putative statistically steady state of \eqref{eq:NS:eqn}.\footnote{This 
operation $\langle \cdot \rangle$ is commonly defined as
a long time average made of the observable, which may be seen as 
an implicit invocation of an ergodic hypothesis: long-time averages
and averages against an invariant measure associated to the equations yield the same statistics.
While significant progress has been made on providing rigorous justification
for this hypothesis for the 2D stochastic NSEs it is completely
open in the three dimensional case.} It is a basic assumption of the classical theory of homogeneous, isotropic
turbulence proposed by Kolmogorov~\cite{Kolmogorov41a,Kolmogorov41b}
in 1941 that
\begin{align}
   \liminf_{\nu \to 0} \epsilon^\nu = \epsilon > 0.
   \label{eq:disp:anomaly}
\end{align}
The positivity of the energy dissipation rate in the limit of vanishing 
viscosity is called the {\em dissipation anomaly}. It is consistent with turbulence theory that the limiting value of $\epsilon$ is the dissipation rate due
to anomalous dissipation in the Euler equations.  There is an extensive
literature on these subjects and the connection between Onsager's conjecture
and Kolmogorov's hypothesis.  Several informative
reviews are given by~\cite{Frisch95,Robert03,EyinkSreeniviasan06}, which contain abundant 
references to the development of the topic over more than half a century.

The fundamental object of study in both the Onsager and Kolmogorov theories is the {\em energy flux}.
Formally, one may define the energy flux through the sphere of radius $2^j$ in frequency space as 
\begin{align}
  \Pi_j :=  \int_{\DD} \bfU \cdot \nabla S_j^2 \bfU \cdot \bfU dx,
  \label{eq:j:cut:off:flux}
\end{align}
where $\widehat{S_j \bfU} = \hat{\bfU} \psi(\cdot 2^{-j})$, and $\psi$ is a radial, smooth cut-off function centered
at the origin. The total energy flux is then given by
\begin{align}
  \Pi :=  \int_{\DD} (\bfU \cdot \nabla) \bfU \cdot \bfU dx = \lim_{j\to \infty} \Pi_j.
    \label{eq:total:flux}
\end{align}
The energy equation derived from \eqref{eq:euler:eqn} is
\begin{align}
  \frac{1}{2} \frac{d}{dt}  \int_{\DD} |\bfU|^2 dx = - \Pi + \int_{\DD} \bfU \cdot f dx.
  \label{eq:energy:balance:euler:form}
\end{align}
If $\bfU$ is sufficiently smooth, 
then since $\bfU$ is divergence free one may show that the energy flux vanishes. 
See~\cite{ConstantinETiti94} and more recently 
\cite{CheskidovConstantinFriedlanderShvydkoy08} for the sharper condition $\bfU \in B^{1/3}_{3,c_0}$ which ensures that $\Pi=0$.\footnote{Here the 
Besov space $B^{1/3}_{3, c_0}$ consists of functions such that $\lim_{j \to \infty} 2^j \|\bfU_j\|_{L^3}^3=0$.}
We note that to date there is no example of a weak solution to the Euler
equations in the Onsager critical space $B^{1/3}_{3, \infty}$ for which the 
energy flux $\Pi \not = 0$ and hence produces anomalous dissipation.\footnote{For a discussion 
of results concerning the existence of weak solutions
to the Euler equations, which experience anomalous dissipation
see~\cite{DeLellisSzekelyhidi13,Isett12,DeLellisSzekelyhidiBuckmaster13}, and references therein.}

An upshot of the proof in~\cite{CheskidovConstantinFriedlanderShvydkoy08} is that 
\begin{align}
  | \Pi_j| \leq C \sum_{i  =1}^\infty 2^{-2/3 |j - i|} 2^{i} \| \bfU_i\|_{L^3}^3
  \label{eq:LP:piece:est}
\end{align}
where $\bfU_i = (S_{i+1} - S_{i}) \bfU$ is the $i$th Littlewood-Paley piece
of $\bfU$. The estimate \eqref{eq:LP:piece:est} shows that energy transfer from
one scale to another is controlled mainly by \emph{local} interactions, which is one of the main motivations for considering the shell model~\eqref{eq:shell:model:0}--\eqref{eq:shell:model:j}, as we shall discuss below.

We now turn to the energy flux through wavenumber $2^{j}$ in the Navier-Stokes equations \eqref{eq:NS:eqn}, labeled $\Pi_j^\nu$. 
As in Kolmogorov's theory of turbulence, assume that the solutions $\bfU^\nu$ tend
to a statistically steady state, i.e. the statistical properties are independent of time and the solutions have
bounded mean energy, independently of $\nu$. In this case the average energy flux $\langle \Pi_j^\nu \rangle$ 
satisfies
\begin{align}
  \langle \Pi_j^\nu \rangle = - \nu \langle \| \nabla S_j \bfU^\nu \|_{L^2}^2 \rangle + \langle \int_{\DD} f \cdot S_j \bfU^\nu dx \rangle.
  \label{eq:weird:energy:val}
\end{align} 
In view of \eqref{eq:weird:energy:val}, upon passing $j \to \infty$ we obtain
\begin{align}
\nu \langle \| \nabla \bfU^\nu \|_{L^2}^2 \rangle = \lim_{j \to \infty} \nu \langle \| \nabla S_j \bfU^\nu \|_{L^2}^2 \rangle =  \lim_{j \to \infty} \langle \int_{\DD} f \cdot S_j \bfU^\nu dx \rangle - \lim_{j \to \infty} \langle \Pi_j^\nu \rangle = \langle \int_{\DD} f \cdot \bfU^\nu dx \rangle 
\end{align}
since $\bfU^\nu$ is sufficiently smooth for each fixed $\nu$. Thus, {\em assuming} that the Euler solution $\bfU$ is stationary in time, one would obtain as $\nu \to 0$
\begin{align}
\epsilon = \lim_{\nu \to 0} \epsilon^\nu = \lim_{\nu \to 0} \nu \langle \| \nabla \bfU^\nu \|_{L^2}^2 \rangle = \langle \int_{\DD} f \cdot \bfU dx \rangle  = \langle \Pi \rangle.
\label{eq:anomaly}
\end{align}
Here it is implicitly assumed that the turbulent statistically stationary solutions converge $\bfU^\nu \to \bfU$ in a certain averaged $L^2(\DD)$ sense. The energy flux thus provides the putative connection between the Kolmogorov and Onsager theories: the mean energy dissipation rate of turbulent stationary Euler solutions should match the vanishing viscosity limit of the mean energy dissipation rate in a turbulent stationary solution of the Navier-Stokes equation.
For further discussion of the connection between the Euler equations and turbulence see, for example~\cite{Frisch95,FoiasManleyRosaTemam01}, the recent articles~\cite{Shvydkoy09,CheskidovShvydkoy11,CheskidovShvydkoy12}, and references therein.

\subsection{Dyadic Models of Turbulent Flow}
\label{sec:dyadic:motivation}

Motivated by the Littlewood-Paley decomposition of the velocity field $\bfU=\sum_{j\geq 0} \bfU_j$, where $\bfU_j = (S_{j+1}-S_j)\bfU$, one may define the energy in the wavenumber shell $2^j \leq k \leq 2^{j+1}$ as $u_j^2 = \|\bfU_j\|_{L^2}^2$. In view of the locality of the energy transfer iterations implied by  \eqref{eq:LP:piece:est} one may thus define the flux through the shell at wavenumber $k=2^j$ as 
\begin{align}
  \Pi_j := 2^{cj} u_j^2 u_{j+1}
  \label{eq:flux:derivation:ansatz}
\end{align}
where $c$ is an ``intermittency parameter'' such that $1 \leq c \leq 5/2$.  The model
energy balance equation that mimics the Littlewood-Paley decomposition
of the Navier-Stokes equation thus becomes
\begin{align}
 \frac{1}{2} \frac{d}{dt}  u_j^2 =  - \Pi_j + \Pi_{j-1} - \nu 2^{2j} u_j^2 + f_j u_j
   \label{eq:energy:shell}
\end{align}
which upon substituting for $\Pi_j$ the formula \eqref{eq:flux:derivation:ansatz}, and setting the force to act only at the lowest wavenumbers, we obtain
our dyadic model given by the coupled system of ODEs for $\{u_j\}_{j \geq 0}$
\begin{align}
  &\frac{d}{dt} u_0 + \nu u_0  + u_0 u_1 = f_0,
  \label{eq:shell:model:0:det}\\
  &\frac{d}{dt} u_j + \nu 2^{2j}u_j +  (2^{cj} u_j u_{j+1} - 2^{c(j-1)} u_{j-1}^2) = 0, \quad j \geq 1.
    \label{eq:shell:model:j:det}
\end{align}
For a detailed discussion regarding the derivation of the shell model \eqref{eq:shell:model:0:det}--\eqref{eq:shell:model:j:det}, we refer the reader to~\cite{CheskidovFriedlanderPavlovic07,CheskidovFriedlander09,CheskidovFriedlanderPavlovic10}.

At this stage we would like to briefly  comment on the intermittency parameter $c$. The 1941-Kolmogorov theory of turbulence produces a power law for the energy density 
spectrum given by
\begin{align}
	{\mathcal E}(k) \sim \epsilon^{2/3} k^{-5/3},
	\label{eq:K:41:spectrum}
\end{align}
in the inertial range. This power law requires that velocity fluctuations are uniformly
distributed over the three dimensional domain $\DD$.  When taking into account that some spatial 
regions are more intensely turbulent than others, the power
laws become 
\begin{align}
{{\mathcal E}(k) \sim \epsilon^{2/3} k^{- \frac{8-D}{3}}}
\label{eq:spectrum:intermittent:case}
\end{align}
where $D$ is the Hausdorff dimension of the region of turbulent activity, and $\epsilon$ is redefined in terms of $D$, to have consistent units.
This phenomenon is referred to as spatial intermittency (see, for example~\cite{Frisch95,CheskidovShvydkoy12} and references therein).
On the other hand, the energy density spectrum ${\mathcal E}(2^j)$ associated with the {Onsager critical norm $H^{c/3}$ norm} is consistent with
\begin{align}
{2^{-j} \langle u_j^2 \rangle \sim {\mathcal E}(2^j) \sim \epsilon^{2/3} 2^{-j}2^{-\frac{2c}{3}j}}
\label{eq:dyadic:spectrum:c}
\end{align}
which yields, upon identifying $k=2^j$ that 
\begin{align}
 c = \frac{5- D}{2}.
 \label{eq:dyadic:turbulence:dim}
\end{align}
In particular, the range $1 \leq c < 2$
corresponds to $1 < D \leq 3$ with the end point $c =1$ corresponding to 
$D=3$ and the classical $k^{-5/3}$ power spectrum.  The range $2 \leq c \leq 5/2$
corresponds to $0 \leq D \leq 1$ where the regions of turbulence are
concentrated on thin sets that degenerate to points at the extreme value
$D=0$, $c =5/2$.  The analysis of the stochastic forced model that
we will present in this paper is strongly sensitive to the range of the parameter
$c$, as we will discuss in detail in the following sections.  

The properties of the system with a  constant force $f = (f_0, 0, \dots)$ and
$L^2$ initial data were established in~\cite{CheskidovFriedlanderPavlovic07,CheskidovFriedlander09,
CheskidovFriedlanderPavlovic10}.
It was shown that both in the inviscid and the viscous
model there is a unique fixed point which  is an exponential global attractor.
In the inviscid case this is achieved via anomalous dissipation.  Onsager's 
conjecture is verified in full with $H^{c/3}$ being the critical space.
It is proved that as $\nu \to 0$ the viscous global attractor converges  to the inviscid
fixed point.  Thus the average dissipation rate of the viscous
system converges to the anomalous dissipation rate $\epsilon$ of the inviscid
system.  Kolmogorov's theory is thus validated for the dyadic model 
\eqref{eq:shell:model:0:det}--\eqref{eq:shell:model:j:det} with a constant in time deterministic
force.

In this article we further adapt the dyadic model to the context of turbulence by studying a stochastically forced version.  
Stochastic shell models have also been considered
in a number of recent works, see e.g. \cite{BessaihMillet2009, BessaihFlandoliTiti10, BarbatoFlandoliMorandin2010, Romito11, BessaihFerrario2012, BarbatoFlandoliMorandin2012} and references therein. However, the model \eqref{eq:shell:model:0}--\eqref{eq:shell:model:j} considered here is perturbed by a highly degenerate frequency localized additive noise.
This degenerate situation has so far been addressed only for \emph{linear} shell models~\cite{MattinglySuidanVanden07}.  The
current work may therefore be seen as a continuation of \cite{MattinglySuidanVanden07} to a nonlinear context, inspired by some aspects of the Kolmogorov 1941 theory, which we describe next.

\subsection{Towards K41 for stochastic shell models}
As discussed above, the basic elements of the Komogorov '41 theory are:
\begin{itemize}
\item[(i)] For each $\nu>0$ and any initial data $u^\nu_0$, as $t \to \infty$ the corresponding solution $u^\nu(t)$  approaches a unique statistically steady state $\bar u^\nu$. 
\item[(ii)] There exists $\epsilon>0$ such that the statistically stationary solutions $\bar u^\nu$ obey $\lim_{\nu\to0} \nu  \langle |\nabla \bar  u^\nu|^2 \rangle  \geq \epsilon$.
\item[(iii)] The family $\{ \bar u^\nu \}_{\nu>0}$ is compact in the associated class of probability measures, and  along subsequences it converges to a statistically stationary solution $\bar u$ of the forced Euler equations.  These stationary Euler solutions experience a constant mean energy dissipation rate which is the same as for the viscous equations, namely $\epsilon > 0$.
\end{itemize}
Proving (i)--(iii) directly from the Navier-Stokes equations, remains an outstanding open problem.

One common setting for studying (i)--(iii) is to consider a wave-number localized, gaussian and white in 
time forcing to the governing equations. This serves as a proxy for generic large scale processes 
driving turbulent cascades.  The stochastic framework has been used extensively both theoretically and numerically~\cite{Novikov1965, BensoussanTemam, VishikKomechFusikov1979, Eyink96,EyinkSreeniviasan06, HairerMattingly06} and references therein.
Here one may take advantage of the
tools and techniques of  stochastic analysis in a regime where the injection of noise does not 
wash out the intricate underlying deterministic dynamics of the Navier-Stokes and Euler equations.  
In this setting invariant measures, i.e. statistically invariant states, are expected to encode
the statistics of turbulent flow at high Reynolds number.

Progress towards establishing (i) and (ii) has so far occurred in settings which are far from the 3D Navier-Stokes equations.
The uniqueness and attracting properties of the invariant measure for the 2D stochastic Navier-Stokes equations on the torus has recently 
been established e.g. in~\cite{HairerMattingly06, HairerMattingly2011}.\footnote{Note that in the two-dimensional case, instead of $\epsilon$, in (ii)  
one should consider $\eta$ the mean enstrophy dissipation rate.}  
We emphasize however that if the amplitude of the noise does not vanish in the inviscid limit, the sequence of Navier-Stokes stationary solutions does not converge as 
$\nu \to 0$, in any norm whatsoever~\cite{KuksinShirikian12}. In particular, (iii) does hold here.\footnote{The tightness of the Navier-Stokes invariant measures when the noise scales as $
\sqrt{\nu}$ has been addressed e.g.~in~\cite{KuksinShirikian12, GlattHoltzSverakVicol2013}. These solutions however do not obey the Batchelor-Kraichnan spectrum. On the other hand the 
convergence (iii), has been proven in the setting of the 1D stochastic Burgers equations~\cite{EKhaninSinai00}.  This work makes fundamental use of explicit representations of  solutions 
through the Lax-Oleinik formula and furthermore subjects the equations to a space-time white noise.}  
This is one of the main differences between the main conclusions (Theorems~\ref{thm:inviscid:limits}, \ref{thm:uniqueness:IM}, and \ref{thm:dissipation:anomaly}) of our work and the results for 
the 2D stochastic Navier-Stokes equations: not only do our viscous solutions 
obey a $\nu$-independent energy dissipation rate, but they also converge as $\nu \to 0$ to the solutions of the corresponding inviscid model. Moreover the inviscid stationary solutions 
experience turbulent dissipation due to a non-vanishing energy flux.\footnote{Another situation where an inviscid stochastic dyadic model has been shown to evidence dissipative behavior is 
developed in~\cite{BarbatoFlandoliMorandin2010,BarbatoFlandoliMorandin2012}. However, here randomness enters the equations as a formally conservative multiplicative Stratonovich noise.}

\section{Mathematical Setting and Preliminaries}
\label{sec:math:setting}
In this section we set the mathematical framework that will be used throughout the manuscript.

\subsection{Functional Setting}

We begin by recalling various sequence space based analogues of the classical Sobolev spaces. 
We denote the $\ell^2$-type sequence spaces by
\begin{align*}
  H^\alpha := \Big\{ u \in \ell^2(\NN):  |u|_{H^{\alpha}}^2 =\sum_{j \geq 0} 2^{2 \alpha j} u_j^2 < \infty \Big\}
\end{align*}
and define $\ell^\infty$-based sequence spaces (the replacement of the usual Lipschitz classes) by
\begin{align*}
  W^{\alpha,\infty} := \Big\{ u \in \ell^\infty(\NN):  |u|_{W^{\alpha,\infty}} = \sup_{j \geq 0} 2^{ \alpha j} |u_j| < \infty \Big\},
  \quad W^{\alpha, \infty}_{c_0} := \Big\{ u \in W^{\alpha,\infty}: \lim_{j \to \infty} 2^{ \alpha j} |u_j|  =0\Big\}.
\end{align*}
Observe that $H^1\subset W^{\alpha,\infty}$ with continuous embedding for $\alpha \leq 1$. We shall denote $H^0$ simply by $H$, and the norm associated to $\alpha = 0$ by $|\cdot|$.   
Finally, since we will often restrict our attention to solutions which are ``positive'' (away from the directly forced zeroth component), we take
\begin{align}
 H_{+} = \{ u \in \ell^2: u_j \geq 0, j \geq 1\}
\end{align}
and note that $H_{+}$ is a closed subset of $H$.

We define the operators
\begin{align}
  A u = ( 2^{2j} u_j )_{j \geq 0}, \quad B(u, v)  =  (2^{cj}  u_{j} v_{j+1} - 2^{c(j-1)} u_{j-1} v_{j-1})_{j \geq 0}.
\end{align}
Here and throughout the paper we use the convention that $u_{-1} = v_{-1} = 0$.  
We denote by $P_N u$ the projection of $u$ onto its first $N+1$ coordinates, i.e. $P_N u = (u_j)_{0\leq j \leq N}$.
Regarding the bilinear operator $B$ observe that for $u \in H^{c-1}$, $v \in H^1$ and $w \in H$
\begin{align}
  |\langle B(u, v), w \rangle| =& \sum_{j \geq 0} \left( 2^{cj}  |u_{j} v_{j+1} w_j| +  2^{c(j-1)} |u_{j-1}  v_{j-1} w_j| \right)
  \notag\\
  \leq& C \Big(\sup_{j \geq 0} 2^j |v_j|\Big) \Big(\sum_{j \geq 0} 2^{2(c-1)j} u_{j}^2\Big)^{1/2} \Big(\sum_{j \geq 0}w_j^2\Big)^{1/2}
  \leq C |u|_{H^{c -1}} |v|_{H^1} |w|.
  \label{eq:B:bnds}
\end{align}
As such, we have the cancelation property for $u, v \in H^{c-1}$,
\begin{align}
  \langle B(u, v), v \rangle = \sum_{j \geq 0} (2^{cj}  u_{j} v_{j+1} v_j - 2^{c(j-1)} u_{j-1} v_j v_{j-1})  = 0.
    \label{eq:B:cancelation}
\end{align}
In fact this can be improved to $u, v \in W^{c/3,\infty}_{c_0} \supset H^1$ when $c\leq 3$. With this formalism we may now rewrite \eqref{eq:shell:model:0}--\eqref{eq:shell:model:j} in the more abstract notation
which will sometimes serve as a useful shorthand:
\begin{align}
  du + (\nu Au + B(u,u))dt = e_0 dW, \quad u(0) = \IC.
\label{eq:abstract:shell}
\end{align}
To make the notion of solution rigorous, we next  recall some well-posedness properties.

\subsection{Existence and Uniqueness of Solutions}

The existence and uniqueness of solutions of \eqref{eq:shell:model:0}--\eqref{eq:shell:model:j}
is recalled in the following proposition which is essentially due to \cite{Romito11} and follows along the lines
of \cite{CIME08} (see also the related works
\cite{ConstantinLevantTiti07, BarbatoFlandoliMorandin2010, BarbatoMorandinRomito11}).

\begin{proposition}[\bf Existence and uniqueness of solutions, statistically steady states]
\label{prop:well:poshness}
Fix $\nu > 0$ and any $\IC \in H$.  
\begin{itemize}
\item[(i)] When $c \in [1, 3]$ there exists a martingale solution $(u, \mathcal{S})$ solving \eqref{eq:shell:model:0}--\eqref{eq:shell:model:j}
relative to the initial condition $\IC$ with the regularity
\begin{align}
   u \in L^2(\Omega; L^\infty_{loc}( [0,\infty); H) \cap L^2_{loc}([0,\infty); H^1)), \quad u_j \in C([0, \infty)) \textrm{ a.s. for each } j \geq 0.
   \label{eq:weak:sol:reg}
\end{align}
Here $\mathcal{S}= (\Omega, \mathcal{F}, \{\mathcal{F}_t\}, \Prb,W)$ is a stochastic basis which is considered as an unknown in the problem.
\item[(ii)] If $\IC \in H_+$ then, for any martingale solution $(u, \mathcal{S})$, $u(t) \in H_+$ for every $t \geq 0$.  Moreover, the solution 
$(u, \mathcal{S})$ can be chosen in such a way that the following moment bounds hold
\begin{align}
\E |u(t)|^2 + 2 \nu \int_0^t \E |u(s)|_{H^1}^2 ds \leq |\IC|^2 + t \sigma^2,
\label{eq:mean:energy:bal}
\end{align}
and for any $\kappa < \frac{\nu}{8 \sigma^2}$
\begin{align}
  \E \exp\left( \kappa \left( |u(t)|^2 +  \exp\left(-\frac{\nu t}{2} \right)  \int_0^t |u(s)|_{H^1}^2 ds \right) \right)
  	\leq \exp\left( \frac 14 + \kappa e^{-\frac{\nu t}{2}} | \IC |^2 \right).
	\label{eq:exp:moment:decay:bnd}
\end{align}
\item[(iii)] For every $\nu >0$, $c \in [1,3]$ there exists a stationary martingale solution $(\bar{u}^\nu, \mathcal{S})$ of the dyadic model;
there exist a stochastic basis $\mathcal{S}$ and time stationary process $\bar{u}^\nu$ with the regularity $\eqref{eq:weak:sol:reg}$
and solving \eqref{eq:shell:model:0}--\eqref{eq:shell:model:j}.  Moreover $(\bar{u}^\nu, \mathcal{S})$ can be chosen so that
\begin{align}
  \bar{u}^\nu \in H^{+}, \textrm{ a.s. }
    \label{eq:SSS:positivity}
\end{align}
to so as to satisfy the moment bound
\begin{align}
  \E\exp( \kappa |\bar{u}^\nu|^2) \leq \exp(1/4)
	\label{eq:SSS:exp:moment}
\end{align}
valid for any $\kappa < \frac{\nu}{8 \sigma^2}$.
\item[(iv)] In the case when $c \in [1,2]$ we may fix a stochastic basis $\mathcal{S}= (\Omega, \mathcal{F}, \{\mathcal{F}_t\}, \Prb, W)$.
Then, there exists a unique (pathwise) solution $u = u(\cdot, u_0, W)$ satisfying  \eqref{eq:shell:model:0}--\eqref{eq:shell:model:j}
and which has the regularity \eqref{eq:weak:sol:reg}.  Moreover $u(t, u_0, W)$ satisfies \eqref{eq:mean:energy:bal} with 
an equality and depends continuously on both $u_0$ in $H$ and 
on $W \in C([0,T])$.
\end{itemize}
\end{proposition}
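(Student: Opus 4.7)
My plan is to treat (i) and (ii) together via a Galerkin truncation that already carries positivity and moment control, and then to reduce (iii) and (iv) to standard arguments taking these bounds as inputs.

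For (i)--(ii), I would work with $u^N = P_N u$, a finite-dimensional SDE with locally Lipschitz coefficients. The cancellation \eqref{eq:B:cancelation} survives the truncation, so Itô on $|u^N|^2$ gives the uniform energy identity
\begin{align*}
\E |u^N(t)|^2 + 2\nu \int_0^t \E |u^N(s)|_{H^1}^2 \, ds = |P_N \IC|^2 + \sigma^2 t,
\end{align*}
which yields global existence and, combined with \eqref{eq:B:bnds} and BDG applied in fractional time-Sobolev norms, supplies enough control (via Aubin--Lions) to make $\{\mathcal{L}(u^N)\}$ tight on $L^2_{loc}([0,\infty); H) \cap C_{loc}([0,\infty); H^{-s})$ for small $s>0$. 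A Jakubowski--Skorokhod extraction, together with the bilinear bound \eqref{eq:B:bnds} (which survives the $L^2_{loc}(H)$ convergence as long as $c\leq 3$), delivers the martingale solution of (i). Positivity in (ii) rests on the observation that on $\{u_j^N=0\}$ with $j\geq 1$ the drift of $u_j^N$ reduces to $2^{c(j-1)}(u_{j-1}^N)^2 \geq 0$, so the half-line is pathwise invariant for each coordinate at the Galerkin level (formalized via a penalization/comparison argument); this transfers to the limit by a.s.\ coordinatewise continuity. The bound \eqref{eq:mean:energy:bal} then follows from Fatou applied to the Galerkin identity.

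The delicate ingredient is the exponential moment \eqref{eq:exp:moment:decay:bnd}. I would apply Itô at the Galerkin level to the Lyapunov functional
\begin{align*}
\Psi_t(u) = \kappa |u|^2 + \kappa e^{-\nu t/2} \int_0^t |u(s)|_{H^1}^2 \, ds,
\end{align*}
whose evolution produces the Itô correction $2 \kappa^2 \sigma^2 u_0^2$; this is absorbed by $2\nu \kappa u_0^2 \leq 2 \nu \kappa |u|_{H^1}^2$ as soon as $\kappa < \nu/(4\sigma^2)$, and the further restriction $\kappa < \nu/(8\sigma^2)$ leaves room to absorb the $e^{-\nu t/2}|u|_{H^1}^2$ term generated by differentiating the time-weighted integral and to yield the explicit constants $1/4$ and $\kappa e^{-\nu t/2} |\IC|^2$ via Gronwall. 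With this $t$-uniform bound in hand, (iii) reduces to Krylov--Bogoliubov on $\mu_T = T^{-1}\int_0^T \mathcal{L}(u(s))\, ds$: tightness on $H$ follows from $\sup_t \E \exp(\kappa |u(t)|^2) < \infty$, any weak limit $\mu_\infty$ is invariant for the forward transition kernels of martingale solutions, and a stationary martingale solution with one-time marginal $\mu_\infty$ is reconstructed on path space by a second tightness/Skorokhod extraction from time-shifted trajectories; \eqref{eq:SSS:exp:moment} and \eqref{eq:SSS:positivity} then follow from \eqref{eq:exp:moment:decay:bnd} in the limit $t\to\infty$ and from invariance of $H_+$.

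Part (iv) rests on the structural gain available for $c\in[1,2]$, namely the interpolation $|w|_{H^{c-1}} \leq |w|^{2-c} |w|_{H^1}^{c-1}$. Writing $w = u-v$ for two pathwise solutions sharing $(\IC, W)$ and using $\langle B(u,w), w\rangle = 0$,
\begin{align*}
\tfrac 12 \tfrac{d}{dt} |w|^2 + \nu |w|_{H^1}^2 = -\langle B(w, v), w\rangle \leq C |w|^{3-c} |w|_{H^1}^{c-1} |v|_{H^1},
\end{align*}
and Young's inequality with exponents $2/(c-1)$, $2/(3-c)$ absorbs the $H^1$-factor on the left, producing $\tfrac{d}{dt}|w|^2 \leq C_\nu |v|_{H^1}^{2/(3-c)} |w|^2$. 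Since $v\in L^2_{loc}([0,\infty); H^1)$ a.s.\ and the exponent on $|v|_{H^1}$ is at most $2$, Gronwall forces $w\equiv 0$. Continuous dependence on $(\IC, W)$ then follows by subtracting the scalar Ornstein--Uhlenbeck process $z(t) = \sigma \int_0^t e^{-\nu(t-s)} \, dW(s)\, e_0$ to obtain a random ODE with drift continuous in $(u_0, z)$, and reapplying the same Gronwall step. The step I anticipate to be the hardest is \eqref{eq:exp:moment:decay:bnd}: the time-weighted dissipation integral must be designed so that the Itô correction from $u_0^2$, the dissipation $|u|_{H^1}^2$, and the decay factor $e^{-\nu t/2}$ all balance to produce a bound whose constants do not blow up in $t$ — a balance which is precisely what allows the Krylov--Bogoliubov argument in (iii) to close uniformly.
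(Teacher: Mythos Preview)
Your proposal is correct and follows essentially the same strategy as the paper: Galerkin approximation with Aubin--Lions/Skorokhod compactness for (i), Krylov--Bogolyubov for (iii), and the Ornstein--Uhlenbeck shift plus the $H^{c-1}$--$H^1$ Gronwall estimate for (iv). The only notable methodological difference is in the positivity argument for (ii): the paper obtains $u_j(t)\geq 0$ directly from the Duhamel representation
\[
u_j(t) = e^{-\nu 2^{2j}t - 2^{cj}\int_0^t u_{j+1}\,ds}\,\IC_j + \int_0^t e^{-\nu 2^{2j}(t-s) - 2^{cj}\int_s^t u_{j+1}\,dr}\,u_{j-1}^2\,ds,
\]
which is manifestly nonnegative, whereas you argue via invariance of the half-line under the drift at the Galerkin level; both are standard and yield the same conclusion. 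Your treatment of \eqref{eq:exp:moment:decay:bnd} is in fact more detailed than the paper's, which simply cites the analogous Navier--Stokes computation.
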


The proof of Proposition~\ref{prop:well:poshness} is somewhat technical but represents a standard application 
of existing techniques.  For brevity we omit complete details, sketching only the main points.  
For the existence of Martingale solutions, (i) the proof follows 
precisely along the line of \cite{CIME08} using
compactness arguments around a Galerkin approximation of \eqref{eq:shell:model:0}--\eqref{eq:shell:model:j}
and variants of the Aubin-Lions and Arzela-Ascoli compactness
theorems.  Passage to the limit is facilitated Skorokhod embedding and by a Martingale representation theorem
from \cite{ZabczykDaPrato1992}, or alternatively by including the driving noise in the compact sequence
(see \cite{Bensoussan1995} or more recently \cite{DebusscheGlattHoltzTemam1}).

For the desired properties in (ii) observe that for $\IC \in H_{+}$ applying the Duhamel principle 
to \eqref{eq:shell:model:j} for each $j\geq 1$, gives
\begin{align}
  u_j(t) =& \exp\left( - \nu 2^{2j}t + 2^{cj} \int_0^t u_{j+1} ds \right) \IC_j \notag\\
  	     &+ \int_0^t \exp\left( - \nu 2^{2j}(t-s) + 2^{cj} \int_s^t u_{j+1} dr \right) u_{j-1}^2 ds.
	     \label{eq:positivity:via:duhammel}
\end{align}
The moment estimates \eqref{eq:mean:energy:bal}, \eqref{eq:mean:energy:bal} 
are formally identical to well known moment estimates for the stochastic Navier-Stokes equations
(cf. \cite{HairerMattingly06, Debussche2013,KuksinShirikian12}).

The existence of stationary solutions in (iii) follows
from a Krylov-Bogolyubov averaging procedure, implemented at the level
or Galerkin approximations.
Regarding the positivity of $\bar{u}$, \eqref{eq:SSS:positivity}, by choosing $\IC \in H^+$
for the Krylov-Bogolyubov averaged measure $\mu_T$ we infer from \eqref{eq:positivity:via:duhammel}
that $\mu_T(H_+) =1$. Then since $H_+$ is closed $\mu(H_{+}) \geq \limsup_j \mu_{T_j} (H_{+}) =1$.  
The moment bounds, \eqref{eq:SSS:exp:moment} are inferred from 
\eqref{eq:exp:moment:decay:bnd} via standard argument making use of invariance and decay 
of initial conditions evident in \eqref{eq:exp:moment:decay:bnd}.  See, for instance, 
\cite{Debussche2013, KuksinShirikian12}.

Regarding (iv) and the existence and uniqueness of pathwise solutions, since we are in the case of an additive noise, we can transform \eqref{eq:shell:model:0}
to a random process as follows:  Consider the Ornstein--Uhlenbeck process $dz_0 + \nu z_0 = \sigma dW$, $z(0) = 0$ and take 
$\tilde{u} = u - z e_0$.  Then $\tilde{u}$ solves
\begin{align}
  &\frac{d}{dt} \tilde{u}_0 + \nu \tilde{u}_0 + (\tilde{u}_0 + z_0) u_1 = 0, 
  \label{eq:shifted:system:0}\\
  &\frac{d}{dt} \tilde{u}_1 + \nu 2^{2}\tilde{u}_1 +  2^{c} \tilde{u}_1 \tilde{u}_{2} - \tilde{u}_{0}^2 = 2 z_0 \tilde{u}_0 + z_0^2
  \label{eq:shifted:system:1}\\
  &\frac{d}{dt} \tilde{u}_j + \nu 2^{2j}\tilde{u}_j +  2^{cj} \tilde{u}_j \tilde{u}_{j+1} - 2^{c(j-1)} \tilde{u}_{j-1}^2 = 0, \quad j \geq 2.
  \label{eq:shifted:system:j}
\end{align}
With this transformation in hand we can then implement a Galerkin approximation procedure 
for the associated transformed system.  The necessary compactness to pass to the the limit 
can then be treated pathwise.  To show that the limiting object $u = \tilde{u} + z$ is suitably 
adapted to the given filtration one also shows that \eqref{eq:shifted:system:0}--\eqref{eq:shifted:system:j}
depends continuously on $z$.

The continuous dependence of solutions on data can be established for $c \in [1,2]$ in a direct fashion as follows:  
Suppose that $u^{(1)}, u^{(2)}$  are solutions  of \eqref{eq:weak:sol:reg} (relative to the same stochastic basis) and let $v  = u^{(1)} - u^{(2)}$.  We have that $v$ satisfies
 $
   \frac{d}{dt} v + A v + B(v, u^{(1)}) + B(u^{(2)}, v) = 0
$.   
 Since $v \in L^2(\Omega; L^2_{loc}([0,\infty); H^1))$ we can make use of \eqref{eq:B:cancelation} and \eqref{eq:B:bnds} to 
 infer
$
  \frac{1}{2} \frac{d}{dt} |v|^2 + |v|^2_{H^1} \leq C |u^{(1)}|_{H^1} |v| |v|_{H^1} 
$.
 With $\epsilon$-Young and the Gr\"onwall inequality we infer
 \begin{align}
	|v(t)|^2 \leq |\underline{v}|^2 \exp\left(C\int_0^t |u^{(1)}|_{H^1}^2\right)
	\label{eq:con:dep:est}
 \end{align} 
 Uniqueness of solutions and continuous dependence on initial conditions follows. When $c>2$, the equation is quasi-linear and establishing the continuous dependence on data in the topology of $H$ seems out of reach.

\section{Uniform Moment Bounds and Inviscid Limits}
\label{sec:moment:bnds:inviscid:limits}

In this section we establish a series of $\nu$-independent
moment bounds for statistically stationary 
states of \eqref{eq:shell:model:0}--\eqref{eq:shell:model:j}.
Note carefully that the forthcoming bounds are valid for $c \in [1, 3]$.
These bounds allow us to pass to inviscid limit in this class 
of statistically invariant states and hence to establish the existence 
of stationary solutions of the inviscid model, 
that is \eqref{eq:shell:model:0}--\eqref{eq:shell:model:j} with $\nu = 0$.  
Such solutions are evidence of a form of turbulent dissipation as we detail below.
The $\nu$ independent moment bounds we establish are:
\begin{proposition}[\bf $\nu$-Independent moment bounds]
\label{prop:uniform:moment:bnds}
For each $\nu >0$ consider a stationary martingale solution $(\bar{u}^\nu, \mathcal{S})$ as in
Proposition~\ref{prop:well:poshness}, satisfying the positivity condition \eqref{eq:SSS:positivity},
and moment bound \eqref{eq:SSS:exp:moment}.  Then
\begin{align}
\sup_{\nu \in(0,1]}  \sup_{j\geq 0} 2^{(c-1) j} \E \left( (\bar u^\nu_j)^2\right)  < \infty
  \label{eq:nu:unif:BDD}
\end{align}
and moreover we have
\begin{align}
  \sup_{ \nu \in (0,1]} \E |\bar{u}^\nu|_{H^{a}}^2 < \infty
  \label{eq:nu:unif:Ha}
\end{align}
for each
$
   -1 \leq a < (c-1)/2,
$
when $c\in[1,3]$.
\end{proposition}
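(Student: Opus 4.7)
My plan is to prove \eqref{eq:nu:unif:BDD} first, from which \eqref{eq:nu:unif:Ha} will then follow by a geometric summation. The proof of \eqref{eq:nu:unif:BDD} rests on three ingredients: $(a)$ an energy/flux balance identity for the stationary expectations, obtained by applying It\^o's formula to each $(\bar u^\nu_j)^2$; $(b)$ a uniform pointwise upper bound on the mean shell-wise energy fluxes $\Pi_j^\nu := 2^{cj} \E((\bar u_j^\nu)^2 \bar u_{j+1}^\nu)$, obtained from the telescoping structure of $(a)$ together with the positivity \eqref{eq:SSS:positivity}; and $(c)$ a Cauchy--Schwarz-based bootstrap that converts the flux bound into the claimed shell-wise second moment bound.

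For $(a)$, It\^o's formula applied to $(\bar u_0^\nu)^2$ contributes the diffusion correction $\sigma^2\,dt$, while the chain rule applied to $(\bar u_j^\nu)^2$ for $j \geq 1$ contributes no stochastic correction. Taking expectations in the stationary state (so that time derivatives vanish) yields the recurrence
\begin{align*}
\nu \E(\bar u_0^\nu)^2 + \Pi_0^\nu = \tfrac{\sigma^2}{2}, \qquad \nu 2^{2j} \E(\bar u_j^\nu)^2 + \Pi_j^\nu = \Pi_{j-1}^\nu \quad (j \geq 1),
\end{align*}
which iterates to the exact $\nu$-independent telescope identity $\Pi_j^\nu = \tfrac{\sigma^2}{2} - \nu \sum_{i=0}^j 2^{2i}\E(\bar u_i^\nu)^2$. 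For $(b)$, property \eqref{eq:SSS:positivity} forces $\bar u_j^\nu \geq 0$ for all $j \geq 1$, so each $\Pi_j^\nu \geq 0$, and therefore $0 \leq \Pi_j^\nu \leq \sigma^2/2$ uniformly in $j \geq 0$ and $\nu \in (0,1]$; in addition the telescope, extended as $j \to \infty$, yields the $\nu$-independent mean enstrophy identity $\nu \E|\bar u^\nu|_{H^1}^2 = \sigma^2/2$.

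For $(c)$, the main technical step, the idea is to combine the Cauchy--Schwarz inequality
\begin{align*}
\Pi_{j-1}^\nu = 2^{c(j-1)} \E\bigl((\bar u_{j-1}^\nu)^2 \bar u_j^\nu\bigr) \leq 2^{c(j-1)} \bigl(\E (\bar u_{j-1}^\nu)^4\bigr)^{1/2} \bigl(\E(\bar u_j^\nu)^2\bigr)^{1/2}
\end{align*}
with an inductive control of the fourth moments $\E(\bar u_j^\nu)^4$, obtained from an analogous It\^o computation for $(\bar u_j^\nu)^4$ and the non-negativity of the associated quartic flux. I expect the hardest part to be the initial shell $j = 0$: the $j=0$ balance reads $\nu \E(\bar u_0^\nu)^2 = \tfrac{\sigma^2}{2} - \Pi_0^\nu$, so a uniform bound on $\E(\bar u_0^\nu)^2$ requires a quantitative rate of anomalous dissipation at the injection scale, $\sigma^2/2 - \Pi_0^\nu = O(\nu)$. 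Establishing this uniformly in $\nu$ will require combining the positivity of the full sequence $\{\Pi_j^\nu\}_{j \geq 0}$ with the exponential moment estimate \eqref{eq:SSS:exp:moment}.

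With \eqref{eq:nu:unif:BDD} established, \eqref{eq:nu:unif:Ha} is immediate: for $-1 \leq a < (c-1)/2$ the exponent $2a - (c-1)$ is strictly negative, so
\begin{align*}
\E|\bar u^\nu|_{H^a}^2 = \sum_{j \geq 0} 2^{2aj} \E(\bar u_j^\nu)^2 \leq C \sum_{j \geq 0} 2^{(2a - c + 1)j} < \infty,
\end{align*}
with a constant independent of $\nu \in (0,1]$.
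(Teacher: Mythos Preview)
Your steps $(a)$ and $(b)$ are correct and match the paper. The gap is in step $(c)$. The Cauchy--Schwarz inequality you write,
\[
\Pi_{j-1}^\nu \;\leq\; 2^{c(j-1)}\bigl(\E(\bar u_{j-1}^\nu)^4\bigr)^{1/2}\bigl(\E(\bar u_j^\nu)^2\bigr)^{1/2},
\]
bounds the flux \emph{from above} in terms of moments. But you already have the upper bound $\Pi_{j-1}^\nu\le\sigma^2/2$ from $(b)$; what you need is the reverse direction, something that places $\E(\bar u_{j-1}^\nu)^2$ on the large side of an inequality whose small side is controlled by the flux. No fourth-moment It\^o computation repairs this: the quartic balance $\nu 2^{2j}\E(\bar u_j^\nu)^4 + 2^{cj}\E((\bar u_j^\nu)^4\bar u_{j+1}^\nu)=2^{c(j-1)}\E((\bar u_{j-1}^\nu)^2(\bar u_j^\nu)^3)$ does not telescope and introduces a fifth-order mixed moment that cannot be closed. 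Your worry about $j=0$ is a symptom of this same issue, not a separate difficulty.

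The ingredient you are missing is the \emph{first-order} stationary balance, obtained by taking the expectation of \eqref{eq:shell:model:j} itself rather than its square: for $j\ge 1$,
\[
2^{c(j-1)}\,\E\bigl((\bar u_{j-1}^\nu)^2\bigr)
\;=\;\nu\,2^{2j}\,\E(\bar u_j^\nu)\;+\;2^{cj}\,\E(\bar u_j^\nu\bar u_{j+1}^\nu).
\]
This identity puts $\E((\bar u_{j-1}^\nu)^2)$ directly on the left. On the right, the linear term is handled via $\E(\bar u_j^\nu)\le(\E(\bar u_j^\nu)^2)^{1/2}$ and the $H^1$ bound from $(b)$, while the cross term is estimated using positivity and the flux bound through
\[
\E(\bar u_j^\nu\bar u_{j+1}^\nu)\;\le\;\bigl(\E((\bar u_j^\nu)^2\bar u_{j+1}^\nu)\bigr)^{1/2}\bigl(\E(\bar u_{j+1}^\nu)^2\bigr)^{1/4}.
\]
After Young's inequality this yields a recursion of the form
\[
\E\bigl((\bar u_{j-1}^\nu)^2\bigr)\;\le\;\tfrac{1}{32}\,\E\bigl((\bar u_{j+1}^\nu)^2\bigr)\;+\;\nu\,2^{c}2^{(2-c)j}\E(\bar u_j^\nu)\;+\;C\sigma^{4/3}2^{-2cj/3},
\]
which one closes by multiplying by $2^{(c-1)j}$, taking a supremum over $0\le j\le N$, and using that $2^{2j}\E((\bar u_j^\nu)^2)\to 0$ as $j\to\infty$ (a consequence of the $H^1$ bound) to dispose of the tail. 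The shell $j=0$ requires no special treatment; it is covered by the first-order balance at $j=1$. Your derivation of \eqref{eq:nu:unif:Ha} from \eqref{eq:nu:unif:BDD} by geometric summation is correct.
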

\noindent In particular, for any $c \in [1,3]$ the above proposition implies
\begin{align}
  \sup_{\nu \in(0,1]} \E |\bar{u}^\nu|^2_{H^{-1/2}} < \infty.
  \label{eq:nu:unif:H}
\end{align}

Working from the uniform bounds \eqref{eq:nu:unif:H} we are able to derive
the existence of stationary solutions $\bar u$ of the inviscid counterpart of the dyadic model
\eqref{eq:shell:model:0}--\eqref{eq:shell:model:j} namely
\begin{align}
  &d \bar u_0 +  \bar u_0 \bar u_1 dt = \sigma dW
  \label{eq:shell:model:0:inviscid}\\
  &\frac{d  \bar u_j}{dt} +  (2^{cj} \bar u_j \bar u_{j+1} - 2^{c(j-1)} \bar u_{j-1}^2) = 0, \quad j \geq 1
    \label{eq:shell:model:j:inviscid}
\end{align}
Motivated by the discussion in Section~\ref{sec:turbulence:back}, we define the \emph{energy 
flux through the $N$th shell} by
\begin{align}
  \Pi_N( u) := \langle P_N B( u, u), P_N  u \rangle = 2^{cN}  u_N^2  u_{N+1}
  \label{eq:flux:def}
\end{align}
for any $u \in H$.  We will see that statistically stationary solutions of 
\eqref{eq:shell:model:j:inviscid} must exhibit a constant average flux
independent of $N$. Our results concerning \eqref{eq:shell:model:0:inviscid}--\eqref{eq:shell:model:j:inviscid} 
are summarized as follows:

\begin{theorem}[\bf Stationary solutions of the Inviscid dyadic model]
\label{thm:inviscid:limits}
There exists a stationary martingale solution $(\bar{u}, \mathcal{S})$ of \eqref{eq:shell:model:0:inviscid}--\eqref{eq:shell:model:j:inviscid}
which satisfies the regularity
\begin{align*}
  \bar{u} \in L^{\infty}_{loc}([0,\infty); H^{a}), \quad \bar{u}_N \in C([0,\infty)) \mbox{ for each } N \geq 0, \qquad a.s.
\end{align*}
for any $a < c/3$.  Also, we have that the moment estimate
\begin{align}
\sup_{N \geq 0} 2^{2cN /3}\E ( \bar u_N^2) \leq C \sigma^{4/3}
\label{eq:best:quadratic:moments}
\end{align} 
holds, where $C>0$ is a universal constant. 
Furthermore, 
\begin{itemize}
\item[(i)]  Such solutions $\bar{u}$ may be obtained as an inviscid limit, namely,
there exists Borel probability measures $\{\mu_{\nu_j}\}$ and $\mu_0$ on $H$ such that
\begin{align}
  \mu_{\nu_j} \rightharpoonup \mu_0 \textrm{ in }  H^{-1/2} \textrm{ as } \nu_j \to 0
\end{align}
where $\mu_{\nu_j}(\cdot) = \Prb( \bar{u}^{\nu_j} \in \cdot)$ with $\bar{u}^\nu$ stationary solutions of \eqref{eq:shell:model:0}--\eqref{eq:shell:model:j}
and $\mu_0(\cdot) = \Prb( \bar{u} \in \cdot)$.
\item[(ii)] \label{thm:inviscid:limits:ii}
These inviscid stationary solutions $\bar u$ 
have a constant mean energy flux, i.e.
\begin{align}
   \E (2^{cN} \bar{u}_N^2 \bar{u}_{N+1}) = \E \Pi_N(\bar{u})  = \frac{\sigma^2}{2}
   \label{eq:flux:bound}
\end{align}
holds for any $N \geq 0$.
In particular we infer that
\begin{align}
\lim_{N \to \infty} 2^{cN} \E |\bar u_N|^3 > 0.
\label{eq:Onsager:moment}
\end{align}
\end{itemize}
\end{theorem}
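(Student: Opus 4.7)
The strategy is to construct $\bar u$ as a subsequential inviscid limit of the stationary viscous solutions $\bar u^\nu$ and then to extract the constant-flux identity \eqref{eq:flux:bound} \emph{directly from an It\^o balance in the limiting inviscid equation}, which sidesteps the delicate uniform integrability issue one would face if one tried to pass to the limit in the viscous cubic flux.

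\emph{Existence via tightness and Skorokhod (item (i)).} Proposition~\ref{prop:uniform:moment:bnds} gives $\sup_{\nu}\E|\bar u^\nu|_{H^{-1/2}}^2 < \infty$ and, when $c > 1$, even $\sup_{\nu}\E|\bar u^\nu|_{H^a}^2 < \infty$ for some $a > -1/2$. Bounded sets in $H^a$ are precompact in $H^{-1/2}$ (a Prokhorov/diagonal argument on the weighted $\ell^2$ scale, which also covers the $c=1$ case via the componentwise bound $\E(\bar u^\nu_j)^2 \leq C 2^{-(c-1)j}$), so $\{\mu_\nu\}$ is tight on $H^{-1/2}$. To upgrade this to tightness at the level of trajectories (needed because the drift is quadratic) I would work on the path space $C([0,T]; H^{-\kappa})$ for $\kappa$ sufficiently large, using the equation together with the uniform moment bounds to derive uniform-in-$\nu$ modulus-of-continuity estimates. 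A Skorokhod representation then yields $\bar u^{\nu_j} \to \bar u$ almost surely on a new probability space, and a martingale representation theorem as in \cite{ZabczykDaPrato1992} identifies the limit as a stationary martingale solution of \eqref{eq:shell:model:0:inviscid}--\eqref{eq:shell:model:j:inviscid}. The viscosity drift vanishes in this limit, since $\nu 2^{2j}\E(\bar u^\nu_j)^2 \lesssim \nu 2^{(3-c)j} \to 0$ for each fixed $j$; the bilinear drift $B$ couples only neighboring shells, so coordinate-wise a.s.\ convergence together with the uniform $L^2(\Omega)$ bound is enough to pass it to the limit. Positivity \eqref{eq:SSS:positivity} survives because $H_+$ is closed.

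\emph{Flux identity and non-vanishing cubic moments (item (ii)).} Applying It\^o's formula to $|P_N\bar u|^2$ in the inviscid system, the bilinear contributions from shells $1,\ldots,N$ telescope and leave only the boundary term $2\Pi_N(\bar u)$, while the quadratic variation of the forcing contributes $\sigma^2\,dt$:
\begin{equation*}
 d|P_N \bar u|^2 = -2\Pi_N(\bar u)\,dt + \sigma^2\,dt + 2\sigma \bar u_0\,dW.
\end{equation*}
Taking expectation and invoking time stationarity kills the left-hand side and the martingale term, giving \eqref{eq:flux:bound}. The finiteness $\E\Pi_N(\bar u) < \infty$ needed to apply It\^o is guaranteed by Fatou applied to the viscous identity $\E\Pi_N(\bar u^{\nu_j}) = \sigma^2/2 - \nu_j\E|P_N\bar u^{\nu_j}|_{H^1}^2 \leq \sigma^2/2$. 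The non-vanishing \eqref{eq:Onsager:moment} is then immediate: H\"older with exponents $(3/2,3)$ applied to the positive triple $\bar u_N^2\bar u_{N+1}$ gives $\sigma^2/2 \leq 2^{-c/3}\, a_N^{2/3} a_{N+1}^{1/3}$ where $a_N := 2^{cN}\E|\bar u_N|^3$, so the sequence $a_N$ cannot be driven to zero.

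\emph{Kolmogorov scaling and regularity.} For the universal bound \eqref{eq:best:quadratic:moments} I would combine \eqref{eq:flux:bound} with Cauchy-Schwarz, $(\sigma^2/2)^2 \leq 2^{2cN}\E\bar u_N^2\cdot\E\bar u_N^2\bar u_{N+1}^2$, and close the system via a further balance for the quartic observable $\bar u_N^2\bar u_{N+1}^2$ obtained from It\^o applied in the inviscid stationary system, which relates quartic moments across neighboring shells. Dimensional analysis then fixes the resulting constant as a universal multiple of $\sigma^{4/3}$. Given \eqref{eq:best:quadratic:moments}, the regularity $\bar u \in L^\infty_{\mathrm{loc}}([0,\infty);H^a)$ for $a<c/3$ follows from
\begin{equation*}
  \E|\bar u|_{H^a}^2 = \sum_{N \geq 0} 2^{2aN}\E\bar u_N^2 \leq C\sigma^{4/3}\sum_{N\geq 0} 2^{2(a-c/3)N} < \infty,
\end{equation*}
combined with stationarity, Fubini, and the coordinate-wise pathwise continuity provided by Proposition~\ref{prop:well:poshness}.

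\emph{Main obstacle.} The hardest step is the \emph{universal} character of the constant in \eqref{eq:best:quadratic:moments}: cubic lower bounds come for free from the flux identity, but matching quadratic upper bounds require an independent higher-moment balance that must be stable under the $\nu \to 0$ limit, without the constant blowing up as $c$ approaches either endpoint of $[1,3]$. Producing a genuinely closed-form estimate rather than a coupled system across many shells is the key technical hurdle.
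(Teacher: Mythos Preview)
Your treatment of items (i) and (ii) is essentially correct and matches the paper: tightness from Proposition~\ref{prop:uniform:moment:bnds}, Skorokhod representation to pass to the limit, and then the It\^o balance on $|P_N\bar u|^2$ under stationarity to extract the constant-flux identity. The paper also proves \eqref{eq:Onsager:moment} by contradiction rather than your explicit H\"older inequality, but the content is the same.

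The genuine gap is your proposed route to the quadratic moment bound \eqref{eq:best:quadratic:moments}. Your Cauchy--Schwarz inequality $(\sigma^2/2)^2 \leq 2^{2cN}\,\E\bar u_N^2\cdot\E(\bar u_N^2\bar u_{N+1}^2)$ is a \emph{lower} bound on a product, so it does not by itself control $\E\bar u_N^2$ from above. You then propose to close via an It\^o balance for the quartic observable $\bar u_N^2\bar u_{N+1}^2$, but this does not close: differentiating $\bar u_N^2\bar u_{N+1}^2$ in the inviscid system produces degree-five terms such as $2^{cN}\bar u_N^4\bar u_{N+1}$ and $2^{c(N+1)}\bar u_N^2\bar u_{N+1}^2\bar u_{N+2}$, so you are pushed to ever higher moments rather than back to quartics across neighboring shells. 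Dimensional analysis alone cannot rescue this, and you correctly flag it as the main obstacle.

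The paper takes a completely different route that you did not anticipate. It exploits the positivity $\bar u\in H_+$ to \emph{divide} the $j$th inviscid equation by $\bar u_j+\eta$, yielding (after stationarity)
\[
2^{c(j-1)}\,\E\!\left(\frac{\bar u_{j-1}^2}{\bar u_j+\eta}\right)\;\leq\;2^{cj}\,\E\bar u_{j+1}.
\]
Combining this with Cauchy--Schwarz in the form $\E\bar u_{j-1}^2 \leq \bigl(\E\bar u_{j-1}^2(\bar u_j+\eta)\bigr)^{1/2}\bigl(\E\bar u_{j-1}^2(\bar u_j+\eta)^{-1}\bigr)^{1/2}$ and the cubic flux bound $\E(\bar u_{j-1}^2\bar u_j)\leq C\sigma^2 2^{-cj}$ gives the key recursion
\[
\E\bar u_{j-1}^2 \;\leq\; C_0\,\sigma\,2^{-cj/2}\bigl(\E\bar u_{j+1}^2\bigr)^{1/4}.
\]
This is then bootstrapped: starting from the inherited bound $\sup_j 2^{(c-1)j}\E\bar u_j^2<\infty$, one iterates $a_{k+1}=c/2+a_k/4$, which converges to $2c/3$, while the constants satisfy $C_{k+1}=C_0\sigma C_k^{1/4}\to (C_0\sigma)^{4/3}$. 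The division trick is what makes the argument close at the quadratic level, and it is the idea your proposal is missing.
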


\begin{remark}[\bf Consistency with Kolmogorov and Onsager]
\label{rem:Kolmogorov}
In view of \eqref{eq:flux:bound}  the constant mean energy flux is $\epsilon = \sigma^2/2$, so that $\epsilon^{2/3} \sim \sigma^{4/3}$. As such, the estimate \eqref{eq:best:quadratic:moments} is an upper bound consistent with the Kolmogorov power spectrum, in the case $c=1$, as described in 
\eqref{eq:dyadic:spectrum:c} above. Additionally, \eqref{eq:Onsager:moment} indicates that the inviscid steady state $\bar u$ has regularity below the Onsager critical space.
\end{remark}

\subsection{Uniform in $\nu$ Bounds}
\label{sec:Uniform:nu:bound}

Take $\{\bar{u}^{\nu}\}_{\nu > 0}$ to be statistically stationary solutions of  
\eqref{eq:shell:model:0}--\eqref{eq:shell:model:j} whose existence follows
from the Krylov-Bogolyubov and a possible usage of Galerkin approximations
with an appropriate limiting procedure.\footnote{In the case that $c \in [1,2]$ these
stationary solutions are unique and correspond to the (mixing) invariant measures $\{\mu_\nu\}_{\nu > 0}$
studied below in Section~\ref{sec:unique:attraction}.  These additional uniqueness
properties will have no bearing for the results in this section.}  As we explain in Section~\ref{sec:math:setting},
we can choose these elements $\bar{u}^\nu$ so that $\bar{u}^\nu \in H_+$.   We will
make crucial use of this positivity condition in the forthcoming computations.

Working from \eqref{eq:shell:model:0}--\eqref{eq:shell:model:j} and using stationarity we immediately have that, 
\begin{align}
  \nu 2^{2j} \E\left( \bar{u}_j^\nu \right) + 2^{cj} \E \left(  \bar{u}_j^\nu \bar{u}_{j+1}^\nu \right) = 2^{c(j-1)} \E \left( (\bar{u}^\nu_{j-1})^2 \right),
  \label{eq:stat:bal:0}
\end{align}
which holds for each $j \geq 0$. Here we are maintaining the convention that $\bar{u}^\nu_{-1} \equiv 0$.
Applying the It\={o} lemma to \eqref{eq:shell:model:0}--\eqref{eq:shell:model:j} 
we again infer from stationarity:
\begin{align}
 \nu 2^{2j} \E\left( (\bar{u}_j^\nu)^2 \right) + 2^{cj} \E \left(  (\bar{u}_j^\nu)^2 \bar{u}_{j+1}^\nu \right) 
      = 2^{c(j-1)} \E \left( (\bar{u}^\nu_{j-1})^2 \bar{u}_j^\nu\right) + \frac{\sigma^2}{2} \delta_{j - 0},
   \label{eq:stat:bal:1}
\end{align}
for each $j \geq 0$.
Summing \eqref{eq:stat:bal:1} from $j = 0, \ldots, N$ we observe that
\begin{align}
  \nu \sum_{j = 0}^N 2^{2j}  \E\left( (\bar{u}_j^\nu)^2 \right) +  2^{cN} \E \left(  (\bar{u}_N^\nu)^2 \bar{u}_{N+1}^\nu \right)  = \frac{\sigma^2}{2}.
  \label{eq:partial:sum:energy}
\end{align}
In particular we infer that
\begin{align}
   \E \left(  (\bar{u}_N^\nu)^2 \bar{u}_{N+1}^\nu \right) \leq \sigma^2 2^{-cN -1} .
  \label{eq:NL:bnd:1}
\end{align}
We can also deduce from \eqref{eq:partial:sum:energy} and the fact that $\bar{u}^\nu \in H^{+}$
that $\E | \bar u^\nu|_{H^1}^2 \leq \sigma^2/(2 \nu) < \infty$  and thus that 
$\lim_{j \to \infty} 2^{2j} \E |\bar u^\nu_j|^2 =  0$.   This implies with $c/3 \leq 1$ that
\begin{align}
   \nu \sum_{j = 0}^\infty 2^{2j}  \E\left( (\bar{u}_j^\nu)^2 \right) =\nu \E |\bar{u}^\nu|^2_{H^1} \leq \frac{\sigma^2}{2}.
   \label{eq:stationary:H1:balance}
\end{align}
Rearranging in \eqref{eq:stat:bal:0} and using \eqref{eq:NL:bnd:1}
\begin{align}
     \E \left( (\bar{u}^\nu_{j-1})^2 \right)  =& \nu 2^{c} 2^{(2- c)j} \E\left( \bar{u}_j^\nu \right) + 2^{c} \E \left(  \bar{u}_j^\nu \bar{u}_{j+1}^\nu \right) 
     \notag\\
     	&\leq \nu 2^{c} 2^{(2- c)j} \E\left( \bar{u}_j^\nu \right) + 
	2^{c}  \left(\E \left(  (\bar{u}_j^\nu)^2 \bar{u}_{j+1}^\nu \right) \right)^{1/2} 
		\left(\E \left(  ( \bar{u}_{j+1}^\nu)^2 \right) \right)^{1/4}
		     \notag\\
	&\leq   \frac{1}{32} \E \left(  ( \bar{u}_{j+1}^\nu)^2 \right)+ \nu 2^{c} 2^{(2- c)j} \E\left( \bar{u}_j^\nu \right) +
	C \sigma^{4/3} 2^{-2 cj/3}
	\label{eq:strange:cascade:bbd}
\end{align}
Note that the second inequality in this computation was justified by the fact that $\bar{u}^\nu \in H_{+}$. 
Multiplying \eqref{eq:strange:cascade:bbd}  by $2^{(c-1)j}$ and taking the supremum for $1\leq j \leq N+1$, we arrive at
\begin{align*}
&(2^{c-1} - 2^{-c-6}) \sup_{0 \leq j \leq N} 2^{(c-1)j} \E( (\bar u_j^\nu)^2) \notag\\
&\leq \nu 2^c \sup_{0\leq j \leq N+1}   \left( 2^{2j} \E( (\bar u_j^\nu)^2) \right)^{1/2}+ C \sigma^{4/3} \sup_{0\leq j \leq N+1} 2^{(c-1-2c/3)j} + 2^{-c-6} \sup_{N+1 \leq j \leq N+2}   2^{(c-1)j} \E( (\bar u_{j} ^\nu)^2)  \notag\\
&\leq C \nu^{1/2} \left( \nu \E | \bar u^\nu|_{H^1}^2 \right)^{1/2}  + C \sigma^{4/3} \sup_{0\leq j \leq N+1} 2^{(c-1-2c/3)j} + C  2^{(c-3)N}  \left(  \sup_{N+1\leq j \leq N+2} 2^{2j} \E ( (\bar u^\nu_j)^2) \right).
\end{align*}
For $1\leq c \leq 3$ we have $c-1 \leq 2c /3 $ and thus arrive at
\begin{align}
 \sup_{0 \leq j \leq N} 2^{(c-1)j} \E( (\bar u_j^\nu)^2) \leq C \nu^{1/2} \sigma + C \sigma^{4/3} + C  \left(  \sup_{N+1\leq j \leq N+2} 2^{2j} \E ( (\bar u^\nu)^2) \right).
 \label{eq:perrier}
\end{align}
By \eqref{eq:stationary:H1:balance} we have that $\lim_{N\to \infty} 2^{2N} \E( (\bar u_{N}^\nu)^2) = 0$, and upon passing $N\to \infty$ in \eqref{eq:perrier} we obtain
\begin{align}
   \sup_{j \geq 0} 2^{(c-1) j} \E( (\bar u_j^\nu)^2) \leq C \nu^{1/2} \sigma + C \sigma^{4/3} 
\end{align}
which proves \eqref{eq:nu:unif:BDD}.
Now, for $-1 \leq a < (c-1)/2$, the above estimate implies
\begin{align}
\sum_{j=0}^N 2^{2aj} \E ( (\bar u^\nu_j)^2) \leq C ( \nu^{1/2} \sigma + \sigma^{4/3} ) \sum_{j=0}^N 2^{(2a-c+1)j} 
\end{align} 
which proves \eqref{eq:nu:unif:Ha} upon passing $N \to \infty$.

\subsection{Convergence to the Inviscid Model}

Fix any $c\in [1,3]$ and let $\{\bar u^{\nu}\}_{\nu>0}$ be a family of statistically stationary Martingale solutions of \eqref{eq:shell:model:0}--\eqref{eq:shell:model:j}
satisfying \eqref{eq:SSS:positivity}--\eqref{eq:SSS:exp:moment}.
We obtain from the estimates in the previous section the $\nu$-independent bound \eqref{eq:nu:unif:Ha}. 
Since we wish to consider the entire range $c\in[1,3]$, we henceforth fix $a = -1/2$ in \eqref{eq:nu:unif:Ha}.

Fix any $T >0$ and consider the measures
\begin{align*}
  \mu^{\nu}_{E} = \Prb(\bar{u}^{\nu} \in A) \quad A \in \mathcal{B}(C([0,T]; H^{-5})).
\end{align*}
To obtain sufficient compactness to pass to a limit we would like to show that
\begin{align}
  \bar u^{\nu} \mbox{ is uniformly bounded in } L^2(\Omega; L^\infty(0,T;H^{-1/2})).
   \label{eq:uniform:bnd:3}
\end{align}
For this we borrow a trick from \cite{BarbatoMorandinRomito11}.  Working from
\eqref{eq:shell:model:0}--\eqref{eq:shell:model:j} and using that
$\bar u^{\nu} \in H^+$ we infer
\begin{align*}
&d (\bar u^{\nu}_0)^2 + 2 \nu (\bar u^{\nu}_0)^2dt = -(\bar u^{\nu}_0)^2 \bar u^{\nu}_1dt + \sigma^2 dt + 2\sigma \bar u^{\nu}_0 dW,\\
&\frac{d}{dt} \frac{1}{2^j} (\bar u^{\nu}_j)^2 + 2 \nu 2^j (\bar u^{\nu}_j)^2 =- 2\cdot 2^{(c-1)j} (\bar u^{\nu}_j)^2 \bar u^{\nu}_{j+1} + 2^{(c-1)(j-1)} (\bar u^{\nu}_{j-1})^2 \bar u^{\nu}_j \\
	&\qquad \qquad \qquad \qquad \qquad \leq  - 2^{(c-1)j} (\bar u^{\nu}_j)^2 \bar u^{\nu}_{j+1} + 2^{(c-1)(j-1)} (\bar u^{\nu}_{j-1})^2 \bar u^{\nu}_j.
\end{align*}
Summing over $j = 0, \ldots, N$ we obtain:
\begin{align*}
      \sum_{j=0}^N \frac{1}{2^j} (\bar u^{\nu}_j)^2(t) \leq |\bar u^{\nu}(0)|^2_{H^{-1/2}}  + t\sigma^2 + 2 \int_0^t\sigma \bar u^{\nu}_0 dW.
\end{align*}
With Doob's inequality, we now conclude \eqref{eq:uniform:bnd:3}.

In view of the compact embeddings 
\begin{align*}
 L^2([0,T]; H^{-1/2}) \cap W^{1/4,2}([0,T]; H^{-4}) &\subset L^2([0,T]; H^{-1}), \\
 W^{1/4,8}([0,T]; H^{-4}) + W^{1,2}([0,T]; H^{-4}) &\subset C([0,T]; H^{-5}), 
\end{align*}
and using the estimate
\begin{align}
&\Prb\left( \left|\int_0^\cdot (\nu A \bar u^{\nu} + B(\bar u^{\nu})) dt \right|_{W^{1,2}([0,T]; H^{-4})}^2 \geq \frac R 8\right) \notag\\ 
&\qquad \leq \Prb\left(C \sup_{t \in [0,T]}(|\bar u^{\nu}|^2_{H^{-1/2}} + 1) \geq \sqrt{R}\right)   \leq \frac{C}{\sqrt{R}} \E \left( \sup_{t \in [0,T]}|\bar u^{\nu}|^2_{H^{-1/2}} + 1 \right).
     \label{eq:B:R:1:bnd:2}
\end{align}
along with $\Prb( |\sigma W|_{W^{1/4,8}([0,T]; H^{-4})} \geq R)  \leq \frac{C}{R}$ and \eqref{eq:uniform:bnd:3}
we one may deduce that
\begin{align*}
\{\mu^\nu_E\}_{\nu > 0} \mbox{ is tight on } L^2([0,T]; H^{-1}) \cap C([0,T]; H^{-5}).
\end{align*}
See \cite{DebusscheGlattHoltzTemam1} for further details.
We can infer with the Skorokhod embedding theorem as in \cite{Bensoussan1995}  that there exists a probability 
space $(\tilde{\Omega}, \tilde{\mathcal{F}}, \tilde{\Prb})$ and sequence
of solutions stationary martingale solutions $(\tilde{u}^{\nu}, \mathcal{S}_\nu)$ with $\mathcal{S}_\nu = (\tilde{\Omega}, \tilde{\mathcal{F}}, \tilde{\Prb}, \tilde{\mathcal{F}}_t^\nu, \tilde{W}^\nu )$ such that
$\tilde{u}^{\nu} \to \bar{u}  \textrm{ almost surely in } L^2([0,T]; H^{-1}) \cap C([0,T]; H^{-5})$
and $\tilde{W}^\nu \to \overline{W} \textrm{ almost surely in } C([0,T])$.

These convergences are sufficient to show that limiting process $(\bar{u}, \tilde{S})$ is a stationary martingale solutions of the inviscid shell model
\begin{align}
d \bar u_j + (2^{cj} \bar u_j \bar u_{j+1} - 2^{c(j-1)} (\bar u_{j-1})^2 ) dt = \sigma \delta_{j,0} d \overline{W}
\label{eq:inviscid:shell}
\end{align}
with the convention $\bar u_{-1}=0$. Moreover, we infer from $\bar{u}^\nu$ that
\[
\bar u (t) \in H_+ \qquad \mbox{and} \qquad \E | \bar u |_{H^{-1/2}}^2 \leq C.
\]
In fact, a simple argument shows that the uniform in $\nu$ bound  \eqref{eq:nu:unif:BDD} is carried to the limiting stationary solutions $\bar u$, namely we have
\begin{align}
 \sup_{j \geq 0} 2^{(c-1)j}\E ( \bar u_j^2) < \infty.
\label{eq:inviscid:sup:j:bnd}
\end{align}
To see this, fix any $R > 0$.  Observe that by \eqref{eq:nu:unif:BDD} there exists $C < \infty$, independent of $\nu$ and $j$ and $R$, such that 
\[
2^{j(c-1)} \E \left( (\bar{u}_j^\nu)^2  \wedge R \right) \leq  C.
\]
From the Skhorokhod embedding we have $\bar u^\nu_j \to \bar{u}_j$ a.s. for each $j$ as $\nu \to 0$, 
and therefore
\[
2^{j(c-1)} \E \left( (\bar{u}_j)^2  \wedge R \right) \leq  C
\]
via dominated convergence. The monotone convergence theorem and the fact that $\E ( \bar{u}_j^2) < \infty$ for any $j$ proves \eqref{eq:inviscid:sup:j:bnd}, upon sending $R\to \infty$.
Similarly arguing from uniform in $\nu$ bound \eqref{eq:NL:bnd:1} we obtain that
\begin{align}
   \E \left( 2^{cj}  \bar{u}_j^2 \bar{u}_{j+1} \right) \leq \frac{\sigma^2}{2},
   \label{eq:i:did:my:HW:properly:prof:vicol}
\end{align}
which holds for every $j \geq 0$.

\subsection{Enhanced Moment Bounds for the Inviscid Model}

In this section we establish improved regularity, \eqref{eq:best:quadratic:moments}, for the stationary solutions
$\bar{u}$ of \eqref{eq:shell:model:0:inviscid}--\eqref{eq:shell:model:j:inviscid}.

Fix $\eta > 0$ to be determined later. 
For $j\geq 1$, since $\bar u \in H_+$,  upon multiplying \eqref{eq:inviscid:shell} by $1/(\bar u_j + \eta)$ we obtain,
\begin{align}
 2^{c(j-1)} \E \left( \bar u_{j-1}^2 (\bar u_j+ \eta)^{-1}\right)
 = 2^{cj} \E \left( \bar u_{j+1} \bar{u}_j(\bar{u}_j+ \eta)^{-1} \right)  \leq 2^{cj} \E \left( \bar u_{j+1}  \right).
 \label{eq:log:moment}
\end{align}
Now, for $j\geq 2$, since $\bar u_{j-1} \geq 0$ we may use the Cauchy-Schwartz inequality in the above identity.  With \eqref{eq:i:did:my:HW:properly:prof:vicol}
and \eqref{eq:inviscid:sup:j:bnd} to obtain
\begin{align}
\E \left( \bar u_{j-1}^2 \right) 
&\leq \left( \E \left( \bar u_{j-1}^2 (\bar u_{j} + \eta) \right) \right)^{1/2} \left( \E \left(  \bar u_{j-1}^2 (\bar u_{j} + \eta)^{-1} \right) \right)^{1/2} \notag\\
&\leq C( \sigma 2^{-cj/2} + [ \eta  \E( \bar{u}_{j-1}^2)]^{1/2})  \left( \E \left( \bar  u_{j+1} \right) \right)^{1/2} \notag\\
&\leq C_0 \sigma 2^{-cj/2}  \left( \E \left( \bar u_{j+1}^2 \right) \right)^{1/4},
\label{eq:inviscid:regularity:bootstrap:key}
\end{align}
where we obtain the last inequality by setting $\eta = \sigma^2 2^{-j}$.  Note that the constant $C_0$ is independent of $j$ and 
$\sigma$.

Working from \eqref{eq:inviscid:regularity:bootstrap:key} we may now apply the following iterative argument.  Let $b\geq 0$, and assume we know that
\begin{align}
   \sup_{j \geq 0} 2^{jb} \E \left( \bar u_{j}^2 \right) \leq C_b < \infty.
\label{eq:inviscid:Hb:bound}
\end{align}
Let $a\geq 0$. Using \eqref{eq:inviscid:regularity:bootstrap:key} and \eqref{eq:inviscid:Hb:bound} we conclude
\begin{align*}
2^{ja} \E \left( \bar u_j^2 \right)
\leq C_0 \sigma 2^{(a-c/2 - b/4)j}  \left( 2^{b(j+2)} \E \left( \bar u_{j+2}^2 \right) \right)^{1/4} 
\leq C_0 \sigma  2^{(a-c/2 - b/4)j}  C_b^{1/4},
\end{align*}
and therefore, if $a \leq c/2 + b/4$,
we arrive at
\begin{align}
   \sup_{j \geq 0} 2^{ja} \E \left( \bar u_j^2 \right) \leq C_a=: C_0 \sigma C_b^{1/4}.
\label{eq:inviscid:Hb:bound:iterate}
\end{align}
When $b < 2c/3$ in \eqref{eq:inviscid:Hb:bound:iterate} we have gained decay with respect to $j$ in comparison to   \eqref{eq:inviscid:Hb:bound}. This represents an induction step. The base step of the induction argument is given by \eqref{eq:inviscid:sup:j:bnd} above, for $b=c-1$. To conclude, we define 
\[ 
a_1 = c-1 \quad \mbox{and} \quad a_{k+1} = \frac{c}{2} + \frac{a_k}{4},
\]
let $C_1 > 0$ be the constant for which \eqref{eq:inviscid:sup:j:bnd} holds, and define the iteration
\[
C_{k+1} = C_0 \sigma C_k^{1/4}
\]
where $C_0$ is fixed and independent of $\sigma$. By induction, it follows by \eqref{eq:inviscid:Hb:bound} and \eqref{eq:inviscid:Hb:bound:iterate} that
\begin{align}
 \sup_{j \geq 0} 2^{a_k j} \E (\bar u_j^2) \leq C_k
\label{eq:inductive:inviscid:bound}
\end{align}
for all $k\geq 1$. But note that 
\begin{align*}
a_{k+1} = (c-1) 4^{-k} + \frac{c}{2}  \sum_{j=0}^{k-1} 4^{-j}    = \frac{2c}{3} - \frac{3-c}{3 \cdot 4^{k}} \to \frac{2c}{3} \quad \mbox{as} \quad k \to \infty.
\end{align*}
Moreover, we have that 
\begin{align*}
C_{k+1} =C_1^{4^{-k}} (C_0 \sigma)^{\sum_{j=0}^{k-1} 4^{-j} }  \to (C_0 \sigma)^{4/3} \quad \mbox{as} \quad k \to \infty.
\end{align*} 
Thus, passing $k\to \infty$ in \eqref{eq:inductive:inviscid:bound} we arrive at the desired estimate \eqref{eq:best:quadratic:moments}.

\subsection{Anomalous/Turbulent Dissipation}
\label{sec:anomalous:disp}

We finally establish the claims concerning turbulent dissipation stated in item (ii) of Theorem~\ref{thm:inviscid:limits}.
Observe that, for any solution of \eqref{eq:shell:model:0:inviscid}--\eqref{eq:shell:model:j:inviscid}, we infer from the It\={o} lemma 
that
\begin{align}
\frac{d}{dt} \E(|P_N u|^2) = \sigma^2   - 2   \E(\Pi_N(u))
\label{eq:energy:bal:galerkin}
\end{align}
holds for each $N$.
Given any stationary solutions $\bar u$ of  \eqref{eq:shell:model:0:inviscid}--\eqref{eq:shell:model:j:inviscid}
we immediately infer \eqref{eq:flux:bound} from \eqref{eq:energy:bal:galerkin} and stationarity.
We see moreover that $\bar{u}$ satisfies the low regularity bound \eqref{eq:Onsager:moment} since
otherwise
\begin{align}
\lim_{N\to \infty} \E(\Pi_N(\bar{u})) ds = \lim_{N\to \infty}  \E(2^{cN} u_N^2 u_{N+1}) ds = 0,
\label{eq:vanishing:flux:smooth}
\end{align}
in contradiction to \eqref{eq:flux:bound}.
This shows that stationary solutions cannot be smooth and must exhibit anomalous/turbulent dissipation of energy; 
the flux cannot vanish as $N\to \infty$, and the energy balance $\frac{d}{dt} \E (|u|^2) = \sigma^2$ is violated.

\section{Unique Ergodicity and Attraction Properties}
\label{sec:unique:attraction}

In this section we address the question of unique ergodicity and attraction properties for 
the invariant measure associated with \eqref{eq:shell:model:0}--\eqref{eq:shell:model:j} when $\nu > 0$ and
$c$ lies in the range $[1,2)$.  While the existence of an invariant measure follows from the Krylov-Bogolyubov 
averaging procedure (see item (iv) in Proposition~\ref{prop:well:poshness}), the uniqueness of statistically 
steady states is a more delicate issue.  It requires a detailed understanding of the interaction between
the nonlinear and stochastic terms in \eqref{eq:shell:model:0}--\eqref{eq:shell:model:j} as well as 
a number of more involved moment estimates. In Section~\ref{sec:anomal:dissipation} we make use of these results to establish
the anomalous dissipation of energy in the inviscid limit, for $c\in[1,2)$.

Our analysis is carried out in a Markovian framework and makes essential use of the continuous dependence on data (in the topology of $H$), which insofar is 
valid only for  $c \in [1, 2]$.\footnote{See however the generalized framework \cite{Romito11} which builds on \cite{FlandoliRomito08}.}
As described in Section~\ref{sec:smoothing} below, the main step in the proof  is to establish a smoothing
condition  for the Markov semigroup associated to  \eqref{eq:shell:model:0}--\eqref{eq:shell:model:j}, which leads to estimates reminiscent 
of those needed to bound the dimension of the attractor for dissipative dynamical systems~\cite{ConstantinFoiasTemam85,Temam1997}. 
Here the restriction $1\leq c < 2$ plays an important role; the equations are semilinear in this range.

In comparison to previous works on the uniqueness of invariant measures for (semilinear) infinite dimensional systems, 
\cite{HairerMattingly06, HairerMattingly2008,HairerMattingly2011, FoldesGlattHoltzRichardsThomann2013}, a 
new mathematical challenge arrises in verifying an algebraic condition, the so called {\em H\"ormander bracket condition}.
This condition describes the interaction between the nonlinear and stochastic terms and its verification, depending on the structure of the equations, can require
an involved analysis.  It turns out that previous related works, \cite{EMattingly2001,Romito2004,HairerMattingly06, HairerMattingly2011, FoldesGlattHoltzRichardsThomann2013}, 
make significant use of non-local wave number interactions in verifying H\"ormander's condition.  As such the approach taken in these
works can not be repeated here.

After reviewing a few standard preliminaries we introduce the main 
result Theorem~\ref{thm:uniqueness:IM}. In Section~\ref{sec:smoothing} we
briefly recall some generalities which explain the connection between smoothing in the Markovian dynamics,
H\"ormander's condition and question of unique ergodicity.  Section~\ref{sec:Hormander:Cond}
is then devoted to the verification of H\"ormander's condition. The remainder of the 
proof of Theorem~\ref{thm:uniqueness:IM}, while highly nontrivial, is quite similar
to previous works \cite{HairerMattingly06, HairerMattingly2008,HairerMattingly2011, FoldesGlattHoltzRichardsThomann2013}.
Further details are postponed to Appendix~\ref{sec:uniqueness:attraction:proof}.

 \subsection{Markovian setting; Summary of uniqueness and attraction properties of invariant measures}

Before stating Theorem~\ref{thm:uniqueness:IM} we first
recall some generalities and notations for the Markovian framework associated to \eqref{eq:shell:model:0}--\eqref{eq:shell:model:j}. 
For each $\nu >0$ and any $c \in [1,2]$ we define the \emph{Markov transition function}
\begin{align*}
  P_t(\IC, A) = \Prb(u(t, \IC) \in A), \quad \IC \in H, A \in \mathcal{B}(H),
\end{align*}
where $u(t, \IC)$ is the unique pathwise solution of  \eqref{eq:shell:model:0}--\eqref{eq:shell:model:j} 
and $\mathcal{B}(H)$ are the Borel subsets of $H$.
We then we define the Markov semigroup
\begin{align}
  P_t \phi(\IC) = \E \phi(u(t, \IC)) = \int_H  \phi(u) P_t(\IC, du),
  \label{eq:msg}
\end{align}
for any $\phi \in M_b(H)$.  Here $M_b(H)$ denotes the collection of real valued, measurable and bounded
function on $H$.
We take $P_t^*$ (which is the dual of $P_t$) according to 
\begin{align*}
  P_t^* \mu(A) = \int_H P_t(\IC, A) d\mu(\IC)
\end{align*}
for elements $\mu \in Pr(H)$, the collection of Borealian probability measures on $H$.
An element $\mu \in Pr(H)$ is an \emph{invariant measure} of the Markovian semigroup if 
it is a fixed point of $P_t^*$ for every $t  \geq 0$.  Such elements represent statistically steady
states of \eqref{eq:shell:model:0}--\eqref{eq:shell:model:j}.

Take $C_b(H)$ to be the collection of real valued continuous bounded functions mapping from $H$. 
Recall that $P_t$ is said to be \emph{Feller} if $P_t: C_b(H) \to C_b(H)$
for every $t \geq 0$.   This property is needed for all that follows and 
indeed some form of the Feller property is required even to prove the existence of an invariant measure
of \eqref{eq:shell:model:0}--\eqref{eq:shell:model:j}. With this in mind, we now specialize to case $c \in [1,2]$.  In this situation observe that if $\IC^{n} \to \IC$ in $H$ then, 
in view of \eqref{eq:con:dep:est}, $u(t, \IC^{n}) \to u(t, \IC)$ a.s. in $H$.   It follows from the dominated 
convergence theorem that $P_t \phi(\IC^{n}) \to P_t \phi(\IC)$ which establishes that $P_t$ is Feller
when $c \in [1,2]$.

Beyond $\mathcal{M}_b(H)$ and $C_b(H)$ we will make use of several further classes
of test functions on $H$.  Define
\begin{align*}
   \| \phi \|_\gamma  := \sup_{u \in H} \exp(- \gamma |u|^2) \left( | \phi(u)| + | \nabla \phi(u)|^2  \right)
\end{align*}
and take
\begin{align}
  \mathcal{B}_\gamma :=\{ \phi \in C^1(H): \| \phi \|_\gamma < \infty\},
  \quad \mathcal{G} := \{\phi \in C^1(H): \| \phi \|_\gamma < \infty, \textrm{ for each } \gamma > 0\}
  \label{eq:exp:grow:space}
\end{align}
We also consider the classes acting on higher regularity space with at most polynomial growth at infinity namely
\begin{align*}
   \mathcal{P}_{m, p} := \left\{ \phi \in C^1(H^m):  \sup_{u \in H^m}  \frac{ |\phi(u)| + |\nabla \phi(u)| }{1 + | u |_{H^m}^p}  < \infty  \right\}
\end{align*}
for any $m \geq 0$ and any $p \geq 2$.

With these preliminaries in hand we state main results concerning the uniqueness and attraction
properties of invariant measures for $P_t$ as follows:
\begin{theorem}
\label{thm:uniqueness:IM}
Suppose that $c \in [1,2)$, $\nu > 0$ and consider solutions $u(t, \IC)$ of the stochastic dyadic shell model \eqref{eq:shell:model:0}--\eqref{eq:shell:model:j}
corresponding to any initial condition $\IC \in H$.  Then there exists a unique invariant measure $\mu_\nu$ of the corresponding 
Markov semigroup which is ergodic.  More precisely, for any $t > 0$, $P_t$ is ergodic with respect the probability space 
$(H, \mathcal{B}, \mu_\nu)$ and this implies that, for any $\phi \in L^2(H; \mu_\nu)$,
\begin{align}
  \frac{1}{T} \E \int_0^T  \phi(u(t,\IC) dt \to \int_H \phi(u) d \mu_\nu (u),
  \label{eq:ergodicity:time:average:form}
\end{align}
for $\mu_\nu$ almost every $\IC$.
Additionally, the invariant measures $\mu_\nu$ obey the attraction properties
\begin{itemize}
\item[(i)] (Mixing) For any $\eta > 0$ there exists positive constants $\gamma_1, \gamma_2 > 0$ (depending on $\nu, c, \eta$)
such that
\begin{align}
   \left|  \E \phi( u(t, \IC)) - \int_H \phi(u) d \mu_\nu(u) \right| \leq C \exp( - \gamma_1 t + \eta | \IC |^2)  \| \phi \|_{\gamma_2}
     \label{eq:mixing:wash:norm}
\end{align}
holds every $\phi \in \mathcal{B}_{\gamma_2}$ and any $\IC \in H$.  Moreover for any $m \geq 0$, $p \geq 2$
and $\phi \in \mathcal{P}_{m,p}$
\begin{align}
  \lim_{T \to \infty} \E \phi(u(T,\IC)) = \int \phi(u) d\mu_\nu(u).
  \label{eq:more:mixing}
\end{align}
\item[(ii)] (Strong law of large numbers) For every $\phi \in \mathcal{G}$ and any $\IC \in H$,
\begin{align}
  \frac{1}{T} \int_0^T \phi(u(t,\IC)) dt \rightarrow \int_H \phi(u) d \mu_\nu(u) \quad \textrm{ almost surely}.
  \label{eq:SLLN}
\end{align}
\item[(iii)] (Central limit theorem) For each $\phi \in C_b^1(H)$, $\IC \in H$ define
\begin{align*}
  m_\phi := \int_H \phi(u) d \mu_\nu(u), \quad v_\phi := \lim_{T \to \infty} \frac{1}{T} \E \left(\int_0^T (\phi(u(t,\IC) - m_\phi) dt \right)^2,
\end{align*}
and let $F_{\phi}$ be the distribution function of a normal random variable with mean $0$ and variance $v_\phi$.
Then, for any $x \in \RR$
\begin{align}
  \lim_{T \to \infty} \Prb \left( \frac{1}{\sqrt{T}} \int_0^T (\phi(U(t, U_0)) - m_\phi) dt    < x \right) = F_{\phi}(x).
  \label{eq:CLT}
\end{align}
In other words  $\frac{1}{\sqrt{T}} \int_0^T (\phi(U(t, U_0)) - m_\phi)  dt$ converges in distribution to normal random variable
with mean $0$ and variance $v_\phi$.
\end{itemize} 
\end{theorem}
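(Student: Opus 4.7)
\medskip
\noindent\textbf{Plan of proof.} The plan is to fit Theorem~\ref{thm:uniqueness:IM} into the Hairer--Mattingly framework for unique ergodicity of degenerately forced SPDEs. The existence of $\mu_\nu$ follows from Krylov--Bogolyubov together with the Feller property established just before the theorem and the exponential moment bound \eqref{eq:SSS:exp:moment} providing a Lyapunov structure. For uniqueness, the strategy is to show that $P_t$ is \emph{asymptotically strong Feller} in the sense of Hairer--Mattingly and that the support of any invariant measure contains a common accessible point (for instance $0$). Combined with the Lyapunov function $V(u) = \exp(\kappa |u|^2)$ coming from \eqref{eq:exp:moment:decay:bnd}, these three ingredients force all invariant measures to coincide. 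Ergodicity of $P_t$ on $(H,\mathcal{B},\mu_\nu)$ and the time-average identity \eqref{eq:ergodicity:time:average:form} then follow from the standard Birkhoff argument.

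To obtain the asymptotic strong Feller property, I would combine two ingredients. The first is a Foias--Prodi--type bound: using the restriction $c<2$, the nonlinear term $B(u,u)$ can be controlled by $|u|_{H^1}$ with a loss strictly less than a full derivative, so that the dynamics of high modes is contractive once the low modes of two solutions are sufficiently close. This is the place where semilinearity is used, and it allows us to synchronize tails of trajectories driven by the same noise. The second ingredient is a quantitative smoothing estimate of the form
\begin{align*}
 |\nabla P_t \phi(\IC)| \leq C(\IC)\bigl( \sqrt{P_t |\phi|^2(\IC)} + e^{-\eta t}\sqrt{P_t |\nabla \phi|^2(\IC)}\bigr),
\end{align*}
obtained via a Malliavin calculus representation $\nabla P_t \phi(\IC) \cdot \xi = \E[(D\phi(u(t,\IC))\cdot J_{0,t}\xi)]$ and by constructing an approximate Malliavin inverse along the lines of the appendix. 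This inverse exists provided the Malliavin covariance $\mathcal{M}_t$ is invertible on the low modes with quantitative bounds, which is where the H\"ormander condition enters.

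The hard step, and the one that is genuinely new relative to the 2D Navier--Stokes case, is the verification of H\"ormander's bracket condition for the degenerate drift $F(u) = -\nu A u - B(u,u)$ with a single stochastic direction $e_0$. I would iteratively compute brackets $[F,e_0]$, $[F,[F,e_0]]$, etc., and argue by induction on $j$ that a well-chosen sequence of brackets produces a vector whose leading nonzero coordinate is the $j$-th one, thereby spanning the whole of $H$ (modulo truncation). Because the nonlinearity couples only nearest-neighbour shells $u_{j-1},u_j,u_{j+1}$, brackets along $e_0$ propagate excitation one shell at a time, and the main combinatorial obstacle is to keep track of the lower-order ``back-scatter'' terms produced at each bracket step and show they do not cancel the leading coordinate. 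The positivity bound $\bar{u}\in H_+$ plus the explicit factors $2^{cj}$ and $2^{c(j-1)}$ will be used to ensure non-degeneracy at each induction step; this is the content of the combinatorial problem alluded to in the introduction.

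Once H\"ormander holds, the appendix machinery delivers the gradient bound, which combined with the Foias--Prodi synchronization yields asymptotic strong Feller. Irreducibility at $0$ is straightforward in the additive noise setting, because choosing $W$ to approximately cancel $\sigma^{-1}$ times the drift of $u_0$ over a long interval drives the whole solution toward the origin via the dissipation $-\nu Au$. Applying the Hairer--Mattingly theorem then yields uniqueness and the exponential mixing bound \eqref{eq:mixing:wash:norm}, from which \eqref{eq:more:mixing} follows by truncation and the moment bounds \eqref{eq:exp:moment:decay:bnd} applied to $\phi \in \mathcal{P}_{m,p}$. The strong law \eqref{eq:SLLN} and the central limit theorem \eqref{eq:CLT} are then consequences of the exponential decay of correlations combined with standard martingale-approximation arguments (Shirikov, Kuksin--Shirikyan), using $\phi \in \mathcal{G}$ respectively $\phi \in C_b^1(H)$ to control the relevant variance.
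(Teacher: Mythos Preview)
Your overall architecture is correct and matches the paper: existence via Krylov--Bogolyubov, uniqueness and mixing via the Hairer--Mattingly framework (asymptotic strong Feller from a gradient bound, irreducibility near $0$, Lyapunov structure from \eqref{eq:exp:moment:decay:bnd}), with the Foias--Prodi estimate exploiting $c<2$ to control high modes. The SLLN and CLT are indeed derived from exponential mixing via martingale approximation, as you indicate.

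The gap is in your treatment of the H\"ormander step. First, the condition is purely algebraic on the vector fields $F(u)=\nu Au+B(u,u)$ and $e_0$; the positivity $\bar u\in H_+$ plays no role here (it is used only for the $\nu$-independent moment bounds of Section~\ref{sec:moment:bnds:inviscid:limits}), so invoking it to ``ensure non-degeneracy'' is a misconception. Second, the naive inductive scheme you sketch --- tracking a leading coordinate through brackets $[F,e_0]$, $[F,[F,e_0]]$, \dots --- does not obviously close. The paper explicitly notes that the obvious bracket $[[F,e_k],e_k]=-2^{ck+1}e_{k+1}$ is \emph{not} a priori admissible in the sense of Definition~\ref{def:Hormander:bracket}, because $e_k$ is not yet known to lie in the span generated from $e_0$ alone. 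The actual argument is more delicate: assuming $e_0,\dots,e_{m+1}$ are already admissible, one brackets $[F,e_{m+1}]$ against $F$ exactly $2^m$ times and then against $e_0$ exactly $2^m$ times. The resulting vector field, evaluated at $e_0$, is shown to equal a nonzero multiple of $e_{m+2}$ plus an error in $\mathrm{span}\{e_0,\dots,e_{m+1}\}$. Controlling this error requires a combinatorial lemma (Lemma~\ref{lem:pain:over:graphs}) distinguishing a privileged subclass $\mathcal S_m$ of iterated bilinear forms --- those built by pairing at every stage --- from all other $k$-linear forms in $\mathcal M_k$: the former produce $e_{m+1}$ with nonzero coefficient, while the latter land in lower modes. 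Your proposal does not identify this dichotomy, and without it the ``back-scatter'' terms you mention could in principle cancel the leading coordinate.
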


\subsection{Smoothing of the Markovian Semigroup in Infinite Dimensions}
\label{sec:smoothing}
We turn next to describe the key ingredients that we use to prove the Theorem~\ref{thm:uniqueness:IM}.
We follow a strategy going back to  
Doob \cite{Doob1948} and Khasminskii \cite{Khasminskii1960}.   
These results identify that uniqueness
and attraction properties similar to Theorem~\ref{thm:uniqueness:IM} hold 
when $P_t$ is \emph{strong Feller} meaning that
$P_t$ maps bounded measurable functions to continuous functions and \emph{irreducible}
which says that from any starting point in the phase space there is a non-zero probability of 
ending up in any other part of the phase space after a finite time.  

Both the strong Feller property and irreducibility condition are too stringent for infinite dimensional systems where the 
stochastic forcing acts directly in only a few directions in phase space, as is the case with our model 
\eqref{eq:shell:model:0}--\eqref{eq:shell:model:j}. Inspired by the insights of recent works \cite{HairerMattingly06, HairerMattingly2008,HairerMattingly2011} 
Theorem~\ref{thm:uniqueness:IM} can be shown to follow from the following two weaker properties.  
The first condition, replacing classical irreducibility, requires that only one point is
universally reachable in phase space.
\begin{proposition}
\label{thm:irr:simple}
For any $\epsilon > 0$, $R > 0$ there exist a time $t^* = t^*(\epsilon,R)$ such that
\begin{align}
  \sup_{\IC \in H,  | \IC | \leq R}  \Prb( | u(t, \IC) | < \epsilon) > 0
\end{align}
for every $t > t^*$.
\end{proposition}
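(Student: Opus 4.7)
The plan is to combine three ingredients: exponential decay of the unforced deterministic shell model, the pathwise continuous dependence of solutions on the driving noise supplied by Proposition~\ref{prop:well:poshness}(iv) (which requires $c \in [1,2]$), and the classical small-ball estimate for Brownian motion. Throughout I fix $\nu>0$ and $c \in [1,2)$.

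First, let $v(\cdot\,;\IC)$ solve the deterministic unforced equation $dv/dt + \nu Av + B(v,v) = 0$ with $v(0) = \IC$. Testing against $v$ and using the cancellation \eqref{eq:B:cancelation} together with the elementary bound $|v|_{H^1}^2 \geq |v|^2$ (since $2^{2j} \geq 1$) gives
\begin{equation*}
\tfrac{1}{2}\tfrac{d}{dt}|v|^2 + \nu |v|^2 \leq 0, \qquad \text{hence} \qquad |v(t;\IC)|^2 \leq e^{-2\nu t}|\IC|^2.
\end{equation*}
Choose $t^* = t^*(\epsilon, R) := (2\nu)^{-1}\log(4R^2/\epsilon^2)$, so that $|v(t;\IC)| < \epsilon/2$ whenever $t \geq t^*$ and $|\IC| \leq R$.

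Now fix any $\IC$ with $|\IC| \leq R$ and any $t \geq t^*$. By Proposition~\ref{prop:well:poshness}(iv), the solution map $W \mapsto u(t, \IC, W)$ is continuous from $C([0,t])$ into $H$; this is precisely where the restriction $c \leq 2$ enters, through the Gr\"onwall-type estimate \eqref{eq:con:dep:est}. Since the zero path reproduces the deterministic solution, $u(t,\IC,0) = v(t;\IC)$, there exists $\delta = \delta(\epsilon, t, \IC) > 0$ such that
\begin{equation*}
\sup_{s \in [0,t]}|W(s)| < \delta \implies |u(t,\IC,W) - v(t;\IC)| < \epsilon/2, \quad \text{hence} \quad |u(t,\IC,W)| < \epsilon.
\end{equation*}
The standard small-ball estimate $\Prb(\sup_{s \in [0,t]}|W(s)| < \delta) > 0$ then completes the argument, since it forces $\Prb(|u(t, \IC)| < \epsilon) > 0$, and a fortiori the supremum over $|\IC| \leq R$ is positive.

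There is no serious analytic obstacle once Proposition~\ref{prop:well:poshness}(iv) is in hand; the argument is a plain Stroock--Varadhan-style controllability computation in which the zero control already drives the viscous dyadic system to the origin. The only subtlety is that $\delta$ depends in principle on $\IC$ and $t$, but this is inconsequential for the stated conclusion. If one wanted $\delta$ uniform over $\{|\IC|\leq R\} \times [t^*, T]$, one could read this off from \eqref{eq:con:dep:est} together with the $L^2(0,T;H^1)$ control on $v(\cdot;\IC)$ implied by the deterministic energy identity above.
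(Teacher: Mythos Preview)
Your proof is correct and follows precisely the approach the paper indicates: it does not give a detailed argument, merely noting that Proposition~\ref{thm:irr:simple} ``is an expression of the triviality of the long term dynamics of the unforced version'' and referring to \cite{EMattingly2001, ConstantinGlattHoltzVicol2013}, which carry out exactly the Stroock--Varadhan-type controllability argument you present (exponential decay of the unforced flow, continuous dependence on the driving path, small-ball estimate for Brownian motion). Note that the $\sup$ in the displayed statement is almost certainly a typo for $\inf$---as written the claim is nearly trivial---and your pointwise argument together with the uniformity remark in your final paragraph already addresses the intended $\inf$ version.
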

The second estimate immediately implies a form of infinite time smoothing \`a la the \emph{asymptotic strong Feller condition}
introduced in \cite{HairerMattingly06}.
\begin{proposition}
\label{thm:grad:est:MSG}
For any $\gamma, \eta > 0$
\begin{align}
  \| \nabla P_t \phi(\IC) \| \leq C \exp( \gamma |\IC | ) \left( \sqrt{ P_t (|\phi|^2)(\IC) } + \exp(- \eta t) \sqrt{ P_t( \| \nabla \phi \|^2)(\IC)}  \right)
  \label{eq:main:grad:MSG:est}
\end{align}
for every $\phi \in C^1_b(H)$, $\IC \in H$ where the constant $C = C(\gamma, \eta)$ is independent of $t$ and $\phi$ and $\IC$.
\end{proposition}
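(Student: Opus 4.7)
The plan is to follow the Malliavin calculus strategy of Hairer--Mattingly, adapted to the dyadic setting where the noise enters only through the zeroth mode. The starting point is the chain-rule identity
\begin{align*}
\nabla P_t \phi(\IC) \cdot \xi = \E \left[ \nabla \phi(u(t,\IC)) \cdot J_{0,t} \xi \right],
\end{align*}
where $J_{0,t}$ is the linearization of the flow acting on a perturbation $\xi \in H$. The goal is to trade the Jacobian direction $J_{0,t}\xi$ for a Malliavin derivative $\DD^v u(t,\IC)$ along some control $v \in L^2(0,t)$, plus a residual $\rho_t = J_{0,t}\xi - \DD^v u(t,\IC)$ which is exponentially small in $t$.

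First I would construct the control $v$ as an approximate preimage of $J_{0,t}\xi$ under the Malliavin map. The standard recipe is the regularized inverse
\begin{align*}
v_s = \AAA^*_{s,t} \big( P_N \MM_{0,t} P_N + \beta P_N \big)^{-1} P_N J_{0,t}\xi,
\end{align*}
where $\MM_{0,t}$ is the Malliavin covariance matrix, $\AAA^*_{s,t}$ is the adjoint of the Malliavin map, $P_N$ is the projection onto the first $N+1$ modes, and $\beta = \beta(t) \to 0$ at a suitable rate. To absorb the tail modes $(I - P_N) J_{0,t}\xi$ into the residual, I would invoke the Foias--Prodi-type contraction on high modes, available here because the restriction $c \in [1,2)$ renders the equation semilinear; choosing $N = N(t)$ to grow slowly with $t$ makes this tail contribution exponentially small. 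Applying Malliavin integration by parts to the principal part gives
\begin{align*}
\E \left[ \nabla \phi(u(t,\IC)) \cdot \DD^v u(t,\IC) \right] = \E \left[ \phi(u(t,\IC)) \int_0^t v_s \, dW_s \right],
\end{align*}
so Cauchy--Schwarz yields
\begin{align*}
|\nabla P_t \phi(\IC) \cdot \xi | \leq \sqrt{P_t(|\phi|^2)(\IC)} \sqrt{\E \Big| \!\int_0^t v\, dW\Big|^2 } + \sqrt{P_t(\|\nabla \phi\|^2)(\IC)} \sqrt{\E |\rho_t|^2},
\end{align*}
and \eqref{eq:main:grad:MSG:est} will follow provided (a) $\E |\int_0^t v \, dW|^2 \leq C \exp(\gamma |\IC|^2) |\xi|^2$ uniformly in $t$, and (b) $\E |\rho_t|^2 \leq C \exp(\gamma |\IC|^2) e^{-2\eta t} |\xi|^2$.

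The main obstacle is the almost-invertibility of $\MM_{0,t}$ on low modes, with quantitative control of the moments of its regularized inverse in $\IC$. This is precisely where the H\"ormander bracket condition verified in Section~\ref{sec:Hormander:Cond} enters: it ensures that iterated brackets of the drift with the single noise direction $e_0$ span $\RR^{N+1}$ at every phase-space point, so that one obtains a quantitative lower bound on $\langle \MM_{0,t} \zeta, \zeta\rangle$ of the form $\PP(\langle \MM_{0,t}\zeta, \zeta\rangle < \epsilon |\zeta|^2) \leq C \epsilon^p \exp(\gamma|\IC|^2)$ for some $p > 0$. Combined with polynomial moment bounds on $J_{0,t}$, $\AAA_{s,t}$, and the Malliavin derivative of the Jacobian (which are driven by the exponential moment \eqref{eq:exp:moment:decay:bnd}, the energy balance \eqref{eq:stationary:H1:balance}, and repeated $\epsilon$-Young applications on the linearization of $B$), this produces the required tail estimates on $v$ and $\rho_t$. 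The remaining steps---optimizing $\beta(t)$ and $N(t)$, closing the Cauchy--Schwarz estimate, and handling the passage from $C_b^1(H)$ down to bounded measurable $\phi$---follow along the lines of \cite{HairerMattingly06, HairerMattingly2011, FoldesGlattHoltzRichardsThomann2013}; the details are carried out in the appendix.
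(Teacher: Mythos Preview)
Your overall strategy---reduce to a control problem via Malliavin integration by parts, use the H\"ormander bracket computation of Section~\ref{sec:Hormander:Cond} to get a quantitative spectral lower bound on the Malliavin covariance, and invoke the Foias--Prodi contraction (available precisely because $c<2$) to handle the high modes---is exactly the paper's approach, and the reduction to the two bounds (a) and (b) you state is the correct one (this is the paper's Proposition~\ref{prop:cont:problem}).

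Where your proposal departs from the paper is in the specific form of the control. You write a \emph{single-shot} control $v_s = \AAA^*_{s,t}\big(P_N \MM_{0,t} P_N + \beta P_N\big)^{-1} P_N J_{0,t}\xi$ over the whole interval $[0,t]$, with $N=N(t)$ and $\beta=\beta(t)$ to be optimized. This does not deliver the exponential decay of the residual that (b) requires. The difficulty is twofold. First, the high-mode part $(I-P_N)J_{0,t}\xi$ does not decay in $t$ for fixed $N$: the Jacobian may grow, and the Foias--Prodi estimate (Proposition~\ref{prop:desired:bnds:1}(ii)) only gives smallness of $\|J_{0,1}Q_N\|$ in $N$ over a \emph{unit} time interval, not exponential-in-$t$ decay. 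Second, if you try to compensate by letting $N(t)\to\infty$, the constants in the Malliavin spectral bound \eqref{eq:mall:spectral:bnd} (and hence in the moment bound \eqref{eq:very:hard:Mal:bbd}) depend on $N$ with no uniform control, so the cost bound (a) blows up.

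The paper resolves this by a \emph{block-iterative} construction: on each interval $[2n,2n+1]$ one applies the regularized control $v=\AAA^*_{2n,2n+1}(\MM_{2n,2n+1}+\beta I)^{-1}\JJ_{2n,2n+1}\bar\rho(2n)$ with $N$ and $\beta$ \emph{fixed} (independent of $n$), and on $[2n+1,2n+2]$ one sets $v\equiv 0$ and lets the linearized dynamics run freely. The point is that after the control step the error is concentrated in high modes (by the Malliavin bound on $P_N$), and then the free step contracts it via Foias--Prodi; combining the two gives a one-step conditional contraction \eqref{eq:one:time:step}, which iterates to exponential decay \eqref{eq:2n:int:times:decay}. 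The block-adapted structure is also what makes the cost bound tractable via the generalized It\={o} isometry \eqref{eq:generalized:Ito:iso:block}. Once you replace your single-shot $v$ by this iterated construction, the rest of your outline goes through as written.
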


Proposition~\ref{thm:irr:simple} is an expression of the triviality of the long term dynamics of the unforced version of 
\eqref{eq:shell:model:0}--\eqref{eq:shell:model:j}.
This may be demonstrated precisely as in \cite{EMattingly2001,ConstantinGlattHoltzVicol2013}.
Thus, the main step to establish Theorem~\ref{thm:uniqueness:IM} is to prove the gradient 
estimate Proposition~\ref{thm:grad:est:MSG} on the Markovian semigroup $\{P_t\}_{t \geq 0}$ associated to \eqref{eq:shell:model:0}--\eqref{eq:shell:model:j}
via \eqref{eq:msg}.

The estimate \eqref{eq:main:grad:MSG:est} establishes a form of smoothing for $P_t$.  
Observe that $\psi(\IC) = P_t \phi(\IC)$ formally solves the Kolmogorov backward equation
\begin{align}
  \partial_{t} \psi(\IC,t) = \frac{\sigma^2}{2} \partial_{0}^2 \psi(\IC,t) - \sum_j \langle \nu A(\IC) + B(\IC), e_j \rangle \partial_j \psi(\IC, t);\quad  \psi(0,\IC) = \phi(\IC).
\end{align}
which is a degenerately parabolic system.  Following the analysis in \cite{HairerMattingly06, HairerMattingly2011} which generalizes
the classical hypo-elliptic theory \cite{Hormander1967} we will therefore need to establish a form of the H\"ormander bracket condition 
in order to expect the (asymptotic) smoothing required by \eqref{eq:main:grad:MSG:est}.  

In the next section we recall in our notations and framework the form of this condition introduced in \cite{HairerMattingly06, HairerMattingly2011}.   
The verification of this condition is the main mathematical novelty in the proof of Proposition~\ref{thm:grad:est:MSG}. 
Having established this condition the rest of the analysis leading to \eqref{eq:main:grad:MSG:est} and hence Theorem~\ref{thm:uniqueness:IM}
follows closely previous works \cite{HairerMattingly06, HairerMattingly2008, HairerMattingly2011,FoldesGlattHoltzRichardsThomann2013}.  
We therefore postpone the rest  of the proof of Theorem~\ref{thm:uniqueness:IM} for the Appendix~\ref{sec:uniqueness:attraction:proof}.

\subsection{The H\"ormander  Condition}
\label{sec:Hormander:Cond}

We introduce the infinite dimensional version of the H\"ormander bracket condition as follows.
If $G_1$ and $G_2$ are Frechet differentiable maps on $H$ we define the \emph{Lie bracket} of $G_1$ and $G_2$ according to
\begin{align}
   [G_1, G_2](u) = \nabla G_2(u) G_1(u) - \nabla G_1(u)  G_2(u).
   \label{eq:lie:brak:def}
\end{align}
Take $e_j = (\delta_{i - j})_{i \geq 0}$  and let
\begin{align*}
   F(u) =  \nu A u + B(u,u) 
\end{align*}
where we have symmetrized the bilinear form $B$ so that 
\begin{align}
  B(u,v)_j = 2^{cj-1} u_{j+1} v_{j} + 2^{cj-1}v_{j+1} u_{j}  -2^{c(j-1)} v_{j-1} u_{j-1}.
  \label{eq:sim:B:NLT}
\end{align}
In our context the H\"ormander condition states that we can approximate the phase space $H$ with a sequence of 
allowable Lie brackets staring from $e_0$.  We may then proceed to fill $H$ by then taking successive brackets involving either
$F$ or $e_0$ with previously obtained vector fields.   More precisely we make the following definitions
\begin{definition}[H\"ormander's condition]\label{def:Hormander:bracket}
Let $\AddSet_0 := \mbox{span} \left\{ e_0 \right\}$ and iteratively define
\begin{align}
\AddSet_m :=  \mbox{span} \left\{ [G(u), e_0], [G(u), F(u)], G(u) : G \in \AddSet_{m-1}\right\}.
 \label{eq:add:set:def}
\end{align}
We say elements $E \in  \cup_m \AddSet_m$ are \emph{admissible vector fields} which have been produced by an \emph{admissible sequence of Lie
Brackets}.  The system \eqref{eq:shell:model:0}--\eqref{eq:shell:model:j} is said to satisfy the \emph{H\"ormander bracket condition}
if
\begin{align}
  \textrm{for every N, there exists $m = m(N)$ such that }   \AddSet_m \supset H_N.
  \label{eq:easiest:inf:dim:hormander}
\end{align}
\end{definition}

Compared to previous analogous results which have been obtained 
for the Navier-Stokes nonlinearity in \cite{EMattingly2001, Romito2004, HairerMattingly06, HairerMattingly2011}
it would seem at first glance that the analysis of nonlinear structure in $B$, cf. \eqref{prop:hor:cond}, leading to \eqref{eq:easiest:inf:dim:hormander}
would be easier to address.  Indeed, observe that
\begin{align}\label{eq:b:struct}
   B(e_j,e_k) = 
   \begin{cases}
   - 2^{cj}  e_{j+1},& \textrm{ when } k = j\\
      2^{cj-1}  e_{j},& \textrm{ when } k= j+1\\
     0,& \textrm{ when } |k- j| \geq 2.\\
   \end{cases}
\end{align}
Actually, it is this nearest neighbor only interaction that 
leads to new difficulties in comparison to these previous works.
Naively we may fill the phase space by iteratively taking Lie brackets of the form
\begin{align*}
  [[F(u), e_k],e_k] = 2 B(e_k, e_k) = -2^{ck + 1} e_{k+1}
\end{align*}
Unfortunately, it is not clear that such brackets are admissible 
in the sense of Definition~\ref{def:Hormander:bracket} and a more
careful analysis of the interaction between $F$ and $e_0$ is needed
to ensure that \eqref{eq:easiest:inf:dim:hormander} is satisfied.\footnote{For comparison
in \cite{EMattingly2001} brackets of the form $[[B(u),e_j], e_0]$ are used to generate the
phase space.  As such this work actually makes use significant use of the 
\emph{long range interactions} (in wave space) present in the nonlinear terms.} 

To overcome this complication, we consider the polynomials of the form
\begin{align}
  S_0(v_1, v_2) &= B(v_1, v_2) \notag\\ 
  S_1(v_1, v_2, v_3, v_4) &= S_0(B(v_1, v_2), B(v_3, v_4)) \notag\\ 
  S_2(v_1, \ldots , v_{8})& = S_1(B(v_1, v_2), B(v_3,v_4), B(v_5,v_6), B(v_7, v_8)) \notag\\ 
  &\, \, \,  \vdots \notag\\
  S_m(v_1, \ldots, v_{2^{m+1}}) &= S_{m-1}(B(v_1, v_2), \ldots, B(v_{2^{m+1} -1}, v_{2^{m+1}})).
  \label{eq:the:last:bracket}
\end{align}
By bracketing $[F, e_{m+1}]$ repeatedly against $F$, $2^{m}$ times we will show that
the resulting admissible vector fields have the form
\begin{align}
   \mathfrak{S}_{m+1}(u) := [\ldots [[F, e_{m+1}], F], \ldots, F ](u) = C_m B(e_{m+1}, S_{m}(u, \ldots, u)) + \mathcal{E}_m(u),
  \label{eq:iteration:atmpt}
\end{align}
where $C_m \not = 0$ and $\mathcal{E}_m$ has an involved structure.  Bracketing $\mathfrak{S}_{m+1}(u)$ repeatedly against $e_0$
yields further admissible vector fields and as we will see, $S_m(e_0, \ldots, e_0)  \sim e_{m+1}$.  On the other 
hand one we will show that $\mathcal{E}_m(e_0, \dots, e_0) \in \mbox{span}\{e_0, \ldots, e_{m+1}\}$ in order to avoid 
possible cancelations with $C_m B(e_{m+1}, S_{m}(e_0, \ldots, e_0))$ preventing the generation of new directions in $H$
with this strategy.   

With these motivating discussions in mind the rest of the section is devoted to proving:
\begin{theorem}\label{prop:hor:cond}
  The dyadic model \eqref{eq:shell:model:0}--\eqref{eq:shell:model:j} satisfies the \emph{H\"ormander bracket condition}~\eqref{eq:easiest:inf:dim:hormander}.
\end{theorem}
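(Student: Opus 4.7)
The plan is a strong induction on $N$: show that for every $N \geq 0$ there exists $m(N)$ with $e_N \in \AddSet_{m(N)}$. The base case $e_0 \in \AddSet_0$ is by definition. For $e_1$, I would compute directly from \eqref{eq:sim:B:NLT} that
\[
[e_0, F](u) = \nabla F(u) e_0 = \nu e_0 + 2 B(e_0, u) = (\nu + u_1) e_0 - 2 u_0 e_1,
\]
and then $[[e_0, F], e_0](u) = -\partial_{u_0}\bigl[(\nu + u_1) e_0 - 2 u_0 e_1\bigr] = 2 e_1$, placing $e_1 \in \AddSet_2$. The same philosophy ``build the right polynomial vector field, then differentiate out the $u_0$-dependence'' will drive the inductive step.

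For the inductive step, assume $e_0, \ldots, e_{m+1} \in \AddSet_k$ for some $k$ and aim to produce $e_{m+2}$. Form the admissible vector field $\mathfrak{S}_{m+1}(u)$ by taking $[F, e_{m+1}]$ and bracketing it repeatedly against $F$ as in \eqref{eq:iteration:atmpt}. The first task is to prove the structural identity
\[
\mathfrak{S}_{m+1}(u) = C_m\, B\bigl(e_{m+1},\, S_m(u, \ldots, u)\bigr) + \mathcal{E}_m(u)
\]
with $C_m \neq 0$, where $S_m$ is the polynomial of \eqref{eq:the:last:bracket}. I would prove this by an auxiliary induction on the number of $F$-brackets, tracking how $[\,\cdot\,, F]$ acts on a polynomial vector field via the rule $[G, F] = \nu A G + 2 B(u, G) - \nu (\nabla G)\, Au - (\nabla G)\, B(u, u)$. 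The ``principal branch'' at every step selects the $2 B(u, G)$ contribution, which wraps another $B$ around the running $e_{m+1}$-tail and, when iterated, reconstructs exactly the binary tree $S_m$; every other branch either inserts a factor of $\nu A$ or differentiates one of the running $u$-slots and re-injects $B(u,u)$ or $Au$, so the leading $e_{m+1}$-carrying structure is partially absorbed. All such subleading contributions are gathered into $\mathcal{E}_m$.

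Next, a separate induction using \eqref{eq:b:struct} shows $S_m(e_0, \ldots, e_0) = c_m e_{m+1}$ with $c_m \neq 0$. Indeed, $B(e_j, e_j) = -2^{cj} e_{j+1}$, so $S_0(e_0, e_0) = -e_1$; assuming the claim at level $m-1$, the recursion $S_m(e_0, \ldots, e_0) = S_{m-1}(B(e_0, e_0), \ldots, B(e_0, e_0)) = (-1)^{2^m} S_{m-1}(e_1, \ldots, e_1)$ (the sign squares out to $+1$ for $m \geq 1$) propagates up the tree to yield a nonzero scalar multiple of $e_{m+1}$. Combined with $B(e_{m+1}, e_{m+1}) = -2^{c(m+1)} e_{m+2}$, this implies that the $u_0$-only monomial of $B(e_{m+1}, S_m(u, \ldots, u))$ is a nonzero multiple of $e_{m+2}$.

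Finally, to extract $e_{m+2}$, bracket $\mathfrak{S}_{m+1}(u)$ with $e_0$ repeatedly; each bracket differentiates in $u_0$, and after enough applications the result is a constant vector field. The $u_0$-derivatives of the principal term deliver a nonzero multiple of $B(e_{m+1}, c_m e_{m+1}) = -c_m 2^{c(m+1)} e_{m+2}$. The key obstacle, and the truly delicate part of the argument, is the ``no-cancellation'' claim the authors highlight: one must verify that the corresponding constant produced from $\mathcal{E}_m$ lies in $\mbox{span}\{e_0, \ldots, e_{m+1}\}$. Unlike the Navier--Stokes settings of \cite{EMattingly2001, Romito2004} where long-range mode couplings provide slack, the strict nearest-neighbor locality in \eqref{eq:b:struct} forces $e_{m+2}$ to emerge only through the channel $B(e_{m+1}, e_{m+1})$; the combinatorial core of the proof is thus to show, by induction on the bracket tree building $\mathfrak{S}_{m+1}$, that no sub-branch other than the principal one can ever reach mode $m+2$. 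Once this is established, the $\mathcal{E}_m$-contribution is cancelled using the previously admissible $e_0, \ldots, e_{m+1}$, leaving a nonzero multiple of $e_{m+2}$ in the admissible set and closing the induction.
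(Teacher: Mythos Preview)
Your outline matches the paper's strategy: the same iterated-bracket construction $\mathfrak{S}_{m+1}$, the identification of a principal term $B(e_{m+1},S_m(u,\ldots,u))$, extraction by repeated $e_0$-brackets, and the crucial no-cancellation claim $\mathcal{E}_m(e_0)\in\mbox{span}\{e_0,\ldots,e_{m+1}\}$. Two points need correction. First, your heuristic for the principal branch is backwards: iterating the $2B(u,G)$ contribution buries $e_{m+1}$ inside a linear chain $B(u,B(u,\ldots,B(e_{m+1},u)\ldots))$, which is \emph{not} the balanced binary tree $S_m$; the tree is built instead by the $-(\nabla G)\,B(u,u)$ branch, which at each step replaces a $u$-slot of the current $B(e_{m+1},\cdot)$-expression by $B(u,u)$. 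Second, your computation of $S_m(e_0,\ldots,e_0)$ reduces to $S_{m-1}(e_1,\ldots,e_1)$, not $S_{m-1}(e_0,\ldots,e_0)$, so the induction hypothesis must be stated for general $e_j$; the paper sidesteps this by using the alternative recursion \eqref{eq:S:alt:build}.

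For the no-cancellation step, the paper's precise device is Lemma~\ref{lem:pain:over:graphs}: it organizes all iterated polynomials into the classes $\mathcal{M}_k$ of \eqref{eq:kth:order:iterated:poly} and the balanced-tree subclass $\mathcal{S}_m\subset\mathcal{M}_{2^{m+1}}$ of \eqref{eq:sim:sets:sm}, and proves by induction on $m$ (via the splitting $E=\tilde B(E_1,E_2)$ of \eqref{eq:E:division}) that any $E\in\mathcal{M}_k$ not lying in some $\mathcal{S}_m$ satisfies $E(e_0)=C_E\,e_l$ with $l\le\lceil\log_2 k\rceil$. This is exactly the ``induction on the bracket tree'' you propose; combined with the explicit decomposition \eqref{eq:nasty:poly} of $\mathcal{E}_m$ it immediately gives \eqref{eq:good:span:junk}. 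So your plan is correct in spirit, but you should expect the actual work to live in this combinatorial lemma rather than in a single distinguished branch.
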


We begin by introducing some further notations.  Let $\mathcal{M}_1 = \{ A^k u: k \geq 0\}$, and take
\begin{align*}
   \mathcal{M}_2 = \{ A^j B(A^l u, A^m u) : j, l, m \geq 0, l \geq m\},
\end{align*}
and for $k \geq 2$ define iteratively:
\begin{align}
  \mathcal{M}_k = \{ \tilde{B}(E(u, \ldots, u), u), \tilde{B}( u,E(u, \ldots, u)), E(&\tilde{B}(u),u, \ldots, u), 
  	                        \ldots, E(u, \ldots, u ,\tilde{B}(u)): \notag\\
	                        \tilde{B} \in \mathcal{M}_2, E \in \mathcal{M}_{k-1}
	                        \}  
	                        \label{eq:kth:order:iterated:poly}                      
\end{align}
Note carefully that $\mathcal{M}_k$ consists of $k$ linear forms.  Moreover,
for any $E \in \mathcal{M}_k$, a simple induction shows that $E$ has the form
 \begin{align}
   E(u) = \tilde{B}(E_1(u), E_2(u)) \quad \textrm{ where } E_1 \in \mathcal{M}_{l_1}, E_2 \in \mathcal{M}_{l_2}, \tilde{B} \in \mathcal{M}_2
   \textrm{ and } l_1 + l_2 = k.
   \label{eq:E:division}
 \end{align}
We also take
$\mathcal{S}_0 = \mathcal{M}_2$
and for $m \geq 1$ define
\begin{align}
 \mathcal{S}_m = \{ \tilde{S}_{m-1}(\tilde{B}^1(u), \ldots , \tilde{B}^{2^m}(u)): \tilde{S}_{m-1} \in \mathcal{S}_{m-1}, \tilde{B}^i \in \mathcal{S}_0 \}
 \label{eq:sim:sets:sm}
\end{align}
Observe that $\mathcal{S}_m \subset \mathcal{M}_{2^m}$ and that $\tilde{S}_m \in \mathcal{S}_m$.
Also note that we can equivalently build
\begin{align}
  \mathcal{S}_m = \{ \tilde{B}(\tilde{S}_{m-1}^1, \tilde{S}_{m-1}^2): \tilde{B} \in \mathcal{S}_0, \tilde{S}^i_{m-1} \in \mathcal{S}_{m-1} \}.
  \label{eq:S:alt:build}
\end{align}

We have the following lemma
\begin{lemma}\label{lem:pain:over:graphs}
  For every $m \geq 0$ and each $\tilde{S}_m \in \mathcal{S}_m$
  \begin{align}
     \tilde{S}_m(e_0) = C_{\tilde{S}_m} e_{m+1}
     \label{eq:E:carefully:built:is:good}
  \end{align}
  where $C_{\tilde{S}_m}$ is a suitable non-zero constant.
  Moreover, for every $k \geq 2$ and every $E \in \mathcal{M}_k$
  such that $E_k \not\in  \mathcal{S}_m$ for some $m$
  \begin{align}
  	E(e_0) =
	 C_{E} e_l \quad \textrm{ for some } l \leq \lceil \log_2( k) \rceil, 
	 \label{eq:E:cannot:be:bad:cone}
  \end{align}
  for a constant $C_E$ depending on $E$ which may be zero.
\end{lemma}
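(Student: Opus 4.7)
The plan is to establish both claims by induction, exploiting the very rigid behavior of $B$ on basis vectors. Because $A e_0 = e_0$, each element of $\mathcal{M}_1$ acts as the identity on $e_0$; and because of \eqref{eq:b:struct}, every $\tilde{B} = A^{j} B(A^{l}\cdot , A^{m}\cdot) \in \mathcal{S}_0$ satisfies
\[
\tilde{B}(e_p, e_q) = 0 \quad \text{if} \quad |p - q| \geq 2,
\]
whereas for $|p - q| \leq 1$ the output is a non-zero explicit multiple of a single basis vector, namely $e_{p+1}$ when $p = q$ and $e_{\min(p,q)}$ when $|p-q| = 1$. In particular, on basis inputs $\tilde{B}$ fires only when its arguments have equal or adjacent indices, and when it fires the output is again a basis vector. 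This single observation drives the whole argument.

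For \eqref{eq:E:carefully:built:is:good} I would induct on $m$ using the alternative description \eqref{eq:S:alt:build}, namely $\tilde{S}_{m+1} = \tilde{B}(\tilde{S}_{m}^1, \tilde{S}_m^2)$ with $\tilde{B} \in \mathcal{S}_0$ and $\tilde{S}_m^i \in \mathcal{S}_m$. The base $m = 0$ is immediate: $\tilde{S}_0(e_0) = A^j B(e_0, e_0) = -2^{2j} e_1$. For the inductive step, the hypothesis gives $\tilde{S}_m^i(e_0) = c_i\, e_{m+1}$ with $c_i \neq 0$, so the two arguments of $\tilde{B}$ hit \emph{identical} indices; this triggers the $p = q$ case above and yields $\tilde{S}_{m+1}(e_0) = C\, e_{m+2}$ with $C \neq 0$, as required.

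For \eqref{eq:E:cannot:be:bad:cone} I would induct on $k \geq 2$. Using \eqref{eq:E:division}, write $E = \tilde{B}(E_1, E_2)$ with $E_i \in \mathcal{M}_{l_i}$ and $l_1 + l_2 = k$. Classify each $E_i$: either $l_i = 1$ (so $E_i(e_0) = e_0$), or $E_i \in \mathcal{S}_{m_i}$ for some $m_i$ (so by the first part $E_i(e_0) = c_i e_{m_i+1}$ and $l_i = 2^{m_i}$), or neither (so by the inductive hypothesis $E_i(e_0) = C_i\, e_{l_i^*}$ with $l_i^* \leq \lceil \log_2 l_i \rceil$, possibly $C_i = 0$). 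Applying the basis rule for $\tilde{B}$ to these basis-vector arguments, $E(e_0)$ is either zero or a scalar multiple of a single $e_l$ with $l$ equal to one of the two argument indices (or to the smaller one plus $1$). The sole subcase that could violate the bound $l \leq \lceil \log_2 k \rceil$ is that both $E_i \in \mathcal{S}_m$ for a \emph{common} $m$, producing $l = m+2$ and $k = 2^{m+1}$; but that subcase is exactly $E \in \mathcal{S}_{m+1}$ via \eqref{eq:S:alt:build}, and is excluded by hypothesis. In every remaining configuration, a short arithmetic check on the three possibilities for $l_i^*$ confirms $l \leq \lceil \log_2 k \rceil$.

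The main obstacle I expect is not either induction individually but the bookkeeping in the second part, since $\mathcal{M}_k$ admits many different decompositions $\tilde{B}(E_1, E_2)$. The structural saving is that the offending ``doubling'' subcase $m_1 = m_2$ is exactly what constructs $\mathcal{S}_{m+1}$: the hierarchy $\{\mathcal{S}_m\}$ is designed precisely to carve out those directions along which the bracket recursion hits the faster $\log_2(k)+1$ growth, leaving every other admissible form to grow only as $\lceil \log_2 k \rceil$, which is what makes this lemma the right input for generating all of $H$ via admissible brackets.
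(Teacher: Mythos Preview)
Your approach is essentially the same as the paper's: both parts are proven by induction, the first via the alternative build \eqref{eq:S:alt:build} and the second by decomposing $E=\tilde B(E_1,E_2)$ per \eqref{eq:E:division} and tracking the index produced by the basis rule for $\tilde B$. The only cosmetic difference is that you induct directly on $k$ while the paper inducts on $m$ with the hypothesis ``\eqref{eq:E:cannot:be:bad:cone} holds for all $k\le 2^m$''; the two are equivalent.

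One small indexing slip (shared with a typo in the paper): if $E_i\in\mathcal S_{m_i}$ then $l_i=2^{m_i+1}$, not $2^{m_i}$, since $\mathcal S_0=\mathcal M_2$ and each step of \eqref{eq:sim:sets:sm} doubles the multilinearity. With the correct value, your ``offending'' subcase $m_1=m_2=m$ gives $k=2^{m+2}$ and $l=m+2=\lceil\log_2 k\rceil$, so in fact the bound is met with equality there; the exclusion $E\notin\mathcal S_{m+1}$ is not actually needed to close the inequality, and the universal bound $p_i\le\lceil\log_2 l_i\rceil$ (valid in all three of your cases A, B, C once the indexing is fixed) together with $k\ge 2\min(l_1,l_2)$ handles the diagonal case directly. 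This does not affect the validity of your argument.
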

\begin{proof}
The first identity \eqref{eq:E:carefully:built:is:good} follows from \eqref{eq:b:struct}
and \eqref{eq:S:alt:build} with an induction argument on $m$.

The proof of \eqref{eq:E:cannot:be:bad:cone} 
 is an induction on $m \geq 1$ making use of \eqref{eq:b:struct}, \eqref{eq:E:division}, \eqref{eq:E:carefully:built:is:good}.  The inductive hypothesis is that the condition \eqref{eq:E:cannot:be:bad:cone} holds for each $k \leq 2^m$.
The base case follows from \eqref{eq:b:struct} by inspection.  Suppose then
that  \eqref{eq:E:cannot:be:bad:cone} holds for all $k \leq 2^m$ and consider any
$2^m< k \leq 2^{m+1}$ and any $E \in \mathcal{M}_k$ with $E \not \in \mathcal{S}_{m+1}$.  By  
\eqref{eq:E:division}, $E(u) = \tilde{B}(E_1(u), E_2(u))$, $\tilde{B} \in \mathcal{M}_2$ where, without loss of 
generality $E_1 \in \mathcal{M}_{\tilde{k}}$ with $\tilde{k} \leq 2^m$.   Two situations
may arise.  Firstly we may have that $\tilde{k} = 2^m$ and $E_1 \in \mathcal{S}_m$.  In this case
$E_2 \in \mathcal{M}_{k - 2^m}$ and moreover it cannot lie in $\mathcal{S}_m$ (or else we would 
contradict that $E \not \in S_{m+1}$).  We infer, by the inductive
hypothesis, that $E_2(e_0) = c e_j$
for some $j \leq m$ and hence with \eqref{eq:b:struct} conclude $E(e_0) = C' B(e_{m+1}, e_j) = C e_j$ 
(where $C'$,$C$ may be zero).  The second possibility is that $E_1 \not \in \mathcal{S}_m$
in which case, again with the inductive hypothesis $E_1(e_0) = Ce_j$ where $j \leq m$ and 
$E_2(e_0) = Ce_l$ (where $l$ may indeed by greater than $m$).  Combining these two observations
we finally infer $E(e_0) = C'E(e_j, e_l) = Ce_{\tilde{j}}$ where $\tilde{j} \leq m+1$.  This completes the induction 
and hence the proof of Lemma~\ref{lem:pain:over:graphs}.
\end{proof}

With these preliminaries in hand we now show that \eqref{eq:easiest:inf:dim:hormander}
is satisfied as follows. 

\begin{proof}[Proof of Theorem~\ref{prop:hor:cond}]
 Observe that, for any $m \geq 0$,
\begin{align*}
  [F, e_{m+1}] = \nu A e_{m+1} + 2 B(e_{m+1}, u)
\end{align*}
So that bracketing by $[F, e_{m+1}]$ repeatedly against $F$, $2^{m}$ times we obtain
a vector field
$\mathfrak{S}_{m+1}(u)$ of the form \eqref{eq:iteration:atmpt}
where the constant $C_m$ is non-zero, and $\mathcal{E}_m$ is a polynomial which has the form
\begin{align}
  \mathcal{E}_m(u) =& \sum_{k =1}^{2^{m}-1} \sum_{E \in \mathcal{M}_k} C_E B(e_{m+1}, E(u))
      + \sum_{E \in \mathcal{M}_{2^{m}} \setminus \mathcal{S}_m} C_E B(e_{m+1},E(u)) \notag\\
      &+ \sum_{\substack{k_1 + k_2 = 2^{m}+1\\k_1 \geq 2}} \sum_{E_1 \in \mathcal{M}^I_{k_1}, E_2 \in \mathcal{M}_{k_2}} C_{E_1,E_2} B(E_1(e_{m+1},u), E_2(u))
      \notag\\
         &+ \sum_{\substack{k_1 + k_2 \leq 2^{m}\\k_1 \geq 1}} \sum_{\substack{E_1 \in \mathcal{M}^I_{k_1}, E_2 \in \mathcal{M}_{k_2}\\ \tilde{B} \in \mathcal{M}_2}} C_{E_1,E_2, \tilde{B}}\tilde{B}(E_1(e_{m+1},u), E_2(u))
      \label{eq:nasty:poly}
\end{align}
where
\begin{align*}
  \mathcal{M}_k^I := \{ E(v,u): H \times H& \to H: \notag\\
  	                                &E \in \mathcal{M}_k, E(v,u) = E(v, u, \ldots, u), E(u, v, u, \ldots, u), \ldots, E(u, \ldots, u, v)\}.
\end{align*}
With \eqref{eq:E:cannot:be:bad:cone}, \eqref{eq:b:struct} and a careful inspection of \eqref{eq:nasty:poly} we find that
\begin{align}
   \mathcal{E}_m(e_0)  \in \mbox{span}\{ e_0, \ldots, e_{m+1}\}.
   \label{eq:good:span:junk}
\end{align}
Observing that taking Lie brackets of $\mathfrak{S}_{m+1}$ with $e_0$,
$2^m$ times we obtain
\begin{align}
  [\ldots [ \mathfrak{S}_m(u), e_0], \ldots, e_0] = \mathfrak{S}_m(e_0) 
  = \tilde{C}_{m} e_{m+2} + \mathcal{E}_m(e_0),
  \label{eq:spanning:now:everything}
\end{align}
where $\tilde{C}_m$ is a non-zero constant.
Arguing inductively we see that $\tilde{C}_{m} e_{m+2} + \mathcal{E}_m(e_0)$ is produced by an admissible sequence of
Lie brackets. Thus with \eqref{eq:good:span:junk} and \eqref{eq:spanning:now:everything} we see that the H\"ormander
bracket condition of the form given in \eqref{def:Hormander:bracket} is satisfied, completing the proof of Theorem~\ref{prop:hor:cond}. 
\end{proof}

\section{Dissipation Anomaly in the Inviscid Limit}
\label{sec:anomal:dissipation}

In this final section we establish the dissipation anomaly in the inviscid limit. We  prove the following:
\begin{theorem}\label{thm:dissipation:anomaly}
Fix any $c \in [1,2)$ and let $u^\nu(\cdot, \IC)$ be the unique solution of \eqref{eq:shell:model:0}--\eqref{eq:shell:model:j}
for any $\IC \in H$.  Then
\begin{align}
   \lim_{\nu \to 0} \lim_{T \to \infty}  \nu \E |u^\nu(T, \IC)|^2_{H^1} = \frac{\sigma^2}{2}.
   \label{eq:dissipation:anomaly:mixing:ver}
\end{align}
Moreover, for any such $\IC \in H$
\begin{align}
   \lim_{\nu \to 0} \lim_{T\to \infty}  \frac{\nu}{T} \int_0^T |u^\nu(t, \IC)|^2_{H^1} dt = \frac{\sigma^2}{2}.
     \label{eq:dissipation:anomaly:SLLN:ver}
\end{align}
\end{theorem}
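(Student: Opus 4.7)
My strategy is to reduce both statements to a single pathwise energy identity. Applying It\^o's formula to $|u^\nu|^2$ along \eqref{eq:shell:model:0}--\eqref{eq:shell:model:j} and invoking the cancellation $\langle B(u,u), u\rangle = 0$ from \eqref{eq:B:cancelation} (which holds since $u^\nu \in L^2_{\mathrm{loc}}([0,\infty); H^1) \subset L^2_{\mathrm{loc}}([0,\infty); H^{c-1})$), yields
\begin{align*}
|u^\nu(T)|^2 = |\IC|^2 - 2\nu \int_0^T |u^\nu(t)|_{H^1}^2 \, dt + \sigma^2 T + 2\sigma \int_0^T u_0^\nu(t)\, dW(t).
\end{align*}
Dividing by $T$ and rearranging,
\begin{align*}
\frac{\nu}{T} \int_0^T |u^\nu(t)|_{H^1}^2 \, dt = \frac{\sigma^2}{2} + \frac{|\IC|^2 - |u^\nu(T)|^2}{2T} + \frac{\sigma}{T} \int_0^T u_0^\nu(t)\, dW(t).
\end{align*}
Both conclusions will be extracted by analyzing the last two terms on the right-hand side.

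For the pathwise statement \eqref{eq:dissipation:anomaly:SLLN:ver}, I would fix $\nu > 0$ and argue that both terms vanish almost surely as $T \to \infty$. The exponential moment bound \eqref{eq:exp:moment:decay:bnd} gives $\sup_{t \geq 0} \E \exp(\kappa |u^\nu(t)|^2) < \infty$ for some $\kappa > 0$; by Chebyshev and Borel--Cantelli this yields $|u^\nu(n)|^2 = o(n)$ a.s.\ along integers, upgraded to continuous $T$ by bounding $\sup_{T \in [n, n+1]} |u^\nu(T)|^2$ through the energy identity restricted to $[n,T]$ and a Burkholder--Davis--Gundy control on the martingale increment. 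For $M_T := \int_0^T u_0^\nu\, dW$, whose bracket is $\int_0^T (u_0^\nu)^2 dt$, the strong law in Theorem~\ref{thm:uniqueness:IM}(ii) applied to $\phi(u) = u_0^2 \in \mathcal{G}$ shows that $\langle M\rangle_T / T$ converges a.s.\ to a finite constant; Dambis--Dubins--Schwarz together with the law of the iterated logarithm (or the classical SLLN for continuous martingales) then forces $M_T/T \to 0$ a.s. The inner limit is therefore $\sigma^2/2$ for each fixed $\nu > 0$, and the outer $\nu \to 0$ limit is trivial.

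For the mean version \eqref{eq:dissipation:anomaly:mixing:ver}, I would first integrate the energy identity against the unique invariant measure $\mu_\nu$ provided by Theorem~\ref{thm:uniqueness:IM}: stationarity makes $\E_{\mu_\nu}|u^\nu(T)|^2 = \E_{\mu_\nu}|\IC|^2$ while the It\^o integral has vanishing mean, producing the time-independent identity
\begin{align*}
\nu \int_H |u|_{H^1}^2 \, d\mu_\nu(u) = \frac{\sigma^2}{2}.
\end{align*}
To finish it suffices to show $\lim_{T\to\infty} \E |u^\nu(T, \IC)|_{H^1}^2 = \int |u|_{H^1}^2 d\mu_\nu$, which is exactly the polynomial mixing statement \eqref{eq:more:mixing} applied to $\phi(u) = |u|_{H^1}^2$. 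This $\phi$ is smooth on $H^1$ with $|\phi(u)| + |\nabla \phi(u)| \lesssim 1 + |u|_{H^1}^2$, so $\phi \in \mathcal{P}_{1,2}$. The $\nu \to 0$ limit again preserves $\sigma^2/2$.

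The main technical obstacle is the a.s.\ bound $\sup_{T \in [n, n+1]} |u^\nu(T)|^2 = o(n)$: pointwise-in-$t$ exponential moments must be combined with a BDG control on short-time fluctuations of the energy via the very identity derived above. A secondary point to verify is that \eqref{eq:more:mixing} genuinely applies when $\IC$ lies only in $H$; here one uses the instantaneous $H^1$-smoothing inherent to the parabolic dynamics (any positive time places the trajectory in $H^1$ almost surely and with good moments), after which the convergence towards $\mu_\nu$ takes over.
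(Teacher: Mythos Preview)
Your proof of \eqref{eq:dissipation:anomaly:mixing:ver} is essentially identical to the paper's: both compute the stationary energy balance $\nu\int |u|_{H^1}^2\,d\mu_\nu=\sigma^2/2$ and then invoke \eqref{eq:more:mixing} with $\phi(u)=|u|_{H^1}^2\in\mathcal{P}_{1,2}$.

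For \eqref{eq:dissipation:anomaly:SLLN:ver} your argument is correct but organized differently from the paper's. You treat the It\^o energy identity as an \emph{equality} and therefore need two almost-sure limits: $|u^\nu(T)|^2/T\to 0$ (via exponential moments, Borel--Cantelli along integers, and a BDG patch in between) and $M_T/T\to 0$ (via the SLLN of Theorem~\ref{thm:uniqueness:IM}(ii) on $\phi(u)=u_0^2$ to control $\langle M\rangle_T/T$, then DDS/LIL). The paper instead separates lower and upper bounds. The lower bound comes from applying \eqref{eq:SLLN} to the truncated observables $\psi_N(u)=\nu|P_N u|_{H^1}^2\in\mathcal{G}$ and passing $N\to\infty$ by monotone convergence. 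The upper bound uses the It\^o identity only as an \emph{inequality}, dropping $|u(T)|^2\geq 0$, so that the sole task is $M_T/T\to 0$; this is handled by the elementary martingale SLLN, checking $\sum_k \E X_k^2/k^2<\infty$ directly from the uniform bound $\E|u(t)|^2\leq e^{-2\nu t}|\IC|^2+\sigma^2/(2\nu)$. The paper's route thus avoids the a.s.\ control of $|u(T)|^2/T$ and treats the martingale term with less machinery, at the cost of the truncation-and-monotone-convergence step for the lower bound; your route is more symmetric but front-loads the work into the two a.s.\ limits. Both are valid. Your closing worry about $\IC\in H$ for \eqref{eq:more:mixing} is unnecessary: Theorem~\ref{thm:uniqueness:IM}(i) already asserts \eqref{eq:more:mixing} for all $\IC\in H$.
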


\begin{proof}
We immediately infer 
\eqref{eq:dissipation:anomaly:mixing:ver} from \eqref{eq:more:mixing} and energy balance
in \eqref{eq:shell:model:0}--\eqref{eq:shell:model:j}.
Indeed let $\bar{u}^\nu$ be the stationary solution corresponding to $\mu_\nu$.  Then 
$\nu \E |\bar{u}^\nu |^2_{H^1} = \sigma^2/2$ so, making use of \eqref{eq:more:mixing}
we conclude that
\begin{align*}
  \nu \E |u^\nu(T, \IC) |^2_{H^1} \to \nu \int | u |_{H^1}^2 d \mu(u) =  \frac{\sigma^2}{2}
\end{align*}
for any $\IC \in H$.

For the second item, \eqref{eq:dissipation:anomaly:SLLN:ver} take 
\begin{align*}
\psi_N(u) = \nu \sum_{j =0}^{N} 2^{2j} u_{j}^{2} = \nu | P_N u|_{H^1}^2
\end{align*}
and notice that $\psi_N$ is in the set $\mathcal{G}$ is defined in \eqref{eq:exp:grow:space}.  We infer from \eqref{eq:SLLN} that for 
any $\IC \in H$
\begin{align*}
  \liminf_{T \to \infty} \frac{\nu}{T} \int_0^T |u^\nu(t, \IC) |^2_{H^1} dt \geq   \liminf_{T \to \infty} \frac{1}{T} \int_0^T \psi_N(u^\nu(t, \IC)) dt
    = \int \psi_N(u) d\mu(u).
\end{align*}
Now, by the monotone convergence theorem
\begin{align*}
   \lim_{N \to \infty} \int \psi_N(u) d\mu(u) = \lim_{N \to \infty} \nu \E | P_N \bar{u}^\nu |^2_{H^1} = \nu \E | \bar{u}^\nu |^2_{H^1} = \frac{\sigma^2}{2},
\end{align*}
so that 
\begin{align*}
 \liminf_{T \to \infty} \frac{\nu}{T} \int_0^T |u^\nu(t, \IC)|^2_{H^1} \geq  \frac{\sigma^2}{2}.
\end{align*}
For a suitable upper bound observe that, due to the It\={o} Lemma,
\begin{align*}
  \frac{1}{T} \left( |u(T, \IC) |^2  + 2\nu \int_0^T |u(t, \IC)|^2_{H^1} dt \right) = \frac{1}{T} \left( |\IC |^2  + \frac{\sigma^2 T}{2} + 2\sigma \int_0^T u_0 dW \right)
\end{align*}
Thus, the second item \eqref{eq:dissipation:anomaly:SLLN:ver} is proven once we establish that
\begin{align}
 \frac{1}{T}\int_0^T u_0 dW \to 0, a.s.
 \label{eq:stochastic:int:to:zero}
\end{align}
For $\delta \in (0,1)$ and $n$ define
\begin{align*}
   M_n := \int_0^{n \delta}  u_0 dW, \quad  X_k = \int_{\delta(k-1)}^{\delta k} u_0 dW.
\end{align*}
With the It\={o} isometry we have
\begin{align*}
   \E X_k^2 = \E \int_{\delta(k-1)}^{\delta k} u_0^2 ds \leq   \int_{\delta(k-1)}^{\delta k} \E | u(s,\IC) |^2 ds.
\end{align*}
Now, since $\frac{d}{dt} \E |u|^2 + 2\nu \E |u|^2 \leq \sigma^2$, we have that
\begin{align*}
  \E |u(t,\IC)|^2 \leq \exp(-2\nu t) | \IC |^2 + \frac{\sigma^2}{2\nu}.
\end{align*}
With these observations we infer that
\begin{align*}
  \sum_{k =1}^\infty \frac{\E X_k^2}{(\delta k)^2} \leq \frac{| \IC |^2 + \frac{\sigma^2}{2\nu}}{\delta} \sum_{k} k^{-2} < \infty
\end{align*}
By the Martingale SLLN (see for example \cite[Theorem 7.21.1]{KuksinShirikian12}) we infer thus \eqref{eq:stochastic:int:to:zero}, completing the proof of the theorem.\footnote{Actually
this implies the $F(T) = \frac{1}{T}\int_0^T u_0 dW$ goes to zero along any sequence on a dense subset of $[1, \infty)$.  Since 
$F(T)$ is almost surely continuous this implies that this convergence occurs along \emph{any} sequence.}
\end{proof}

\appendix

\section{Gradient Estimates for the Markov Semigroup}
\label{sec:uniqueness:attraction:proof}

In this section we sketch some further details of the proof of Theorem~\ref{thm:uniqueness:IM}.    
The approach closely follows  
the recent works \cite{HairerMattingly06, HairerMattingly2008, HairerMattingly2011, KuksinShirikian12, FoldesGlattHoltzRichardsThomann2013}
modulo the analysis establishing the H\"ormander bracket condition which is carried out in Section~\ref{sec:Hormander:Cond}.   
In sections~\ref{sec:control:prob}--\ref{sec:cost:control} we describe and solve a control problem
which implies Proposition~\ref{thm:grad:est:MSG}.  The solution of this
problem requires a Foias-Prodi type bound for  a linearization of \eqref{eq:shell:model:0}--\eqref{eq:shell:model:j} 
as well as an estimate on the spectrum of
an operator (the Malliavin covariance matrix) associated to this linearization.  We describe how
these bounds are achieved in section~\ref{sec:FoiasProdi:bnds} and \ref{sec:mal:mat:analysis}.
The final section explains how one derives 
Theorem~\ref{thm:uniqueness:IM} from Proposition~\ref{thm:grad:est:MSG}.

\subsection{Smoothing as a control problem}
\label{sec:control:prob}

The first step in the proof of \eqref{eq:main:grad:MSG:est} is to translate this bound into a control problem. 
For this purpose we introduce some linearization operators around \eqref{eq:abstract:shell}.
Fix any $\xi,\IC \in H$, and any $0 \leq s \leq t \leq T$ take $\rho = \JJ_{s,t} \xi$
to be the solution of
\begin{align} 
& \frac{d}{dt} \rho + \nu A \rho + B(u,\rho) + B(\rho,u) = 0, \quad  \rho(s) = \xi, \label{eq:rho:def:1}
\end{align}
where $u = u(t,\IC)\in C(0,T;H) \cap L^2(0,T;H^1) $ obeys \eqref{eq:abstract:shell}.
For $s < t$ and $v \in L^2([s,t])$ we let
\begin{align}
 \AAA_{s,t} v := \sigma \int_s^t \JJ_{r,t} e_0 v(r) dr,
 \label{eq:A:op}
\end{align}
where $e_0 = (1, 0, 0, \ldots ) \in H$.  The processes $\JJ_{0,t} \xi$ and $\AAA_{0,t}v$ represent infinitesimal 
perturbations of $u$ in its initial conditions and driving noise in the directions $\xi$ and $v$ respectively.
Using the Malliavin chain rule and integration by parts formulas (see \cite{Nualart2006}) one obtains that, for any $\xi \in H$ 
and any suitable $v \in L^2(0,t)$\footnote{Here we do not require that $v$ is adapted so that $\int_0^t v dW$ is in general only a Skorokhod integral.  See \cite{Nualart2006}.}
\begin{align*}
   \nabla P_t \phi(\IC) \xi = \E \left(  \phi(u(t,\IC)) \int_0^t v dW \right) + \E \left( \nabla \phi(u(t, \IC)) (\JJ_{0,t}\xi - \AAA_{0,t} v)\right), \quad t \geq 0.
\end{align*}
Notice that $\bar{\rho}(t) = \JJ_{0,t}\xi - \AAA_{0,t} v$ solves $\frac{d}{dt} \bar{\rho} + \nu A \bar{\rho} + B(u,\bar{\rho}) + B(\bar{\rho},u) =-\sigma e_0 v$, where
$\bar{\rho}(0) = \xi$.
With the H\"older inequality we now see that the proof of \eqref{eq:main:grad:MSG:est} reduces to proving:
\begin{proposition}
\label{prop:cont:problem}
For every $\xi \in H$ there exists a corresponding $v = v(\xi) \in L^2([0,\infty))$
such that 
\begin{align}
  \sup_{ \xi \in H, \|\xi\| = 1}  \E ( | \JJ_{0,t} \xi - \AAA_{0,t} v(\xi)|^2) \to 0  \quad \textrm{ as } t \to \infty,
  \label{eq:decay:cnt}
\end{align}
and such that
\begin{align}
 \sup_{t \geq 0}\sup_{ \xi \in H, |\xi | = 1}\E \left(\int_0^t v(\xi) dW\right)^2 < \infty.
 \label{eq:cost:cnt}
\end{align}
\end{proposition}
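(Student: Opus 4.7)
The plan is to construct the control $v(\xi)$ on a unit time grid $\{n\}_{n\geq 0}$, alternating between an \emph{active} step where $v$ is chosen to drive the low-frequency residual to zero and a \emph{passive} step where $v\equiv 0$ and dissipation handles the remaining high-frequency energy. Writing $\bar\rho = \JJ_{0,\cdot}\xi - \AAA_{0,\cdot}v$ so that
\begin{equation*}
\frac{d\bar\rho}{dt} + \nu A\bar\rho + B(u,\bar\rho) + B(\bar\rho,u) = -\sigma e_0 v,\quad \bar\rho(0)=\xi,
\end{equation*}
the target is a geometric decay of $\E|\bar\rho(n)|^2$ together with a summable $L^2$-norm of $v$.

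On a passive interval $[2n{+}1,2n{+}2]$, with $v\equiv 0$, I would establish a Foias-Prodi bound: exploiting $c<2$, choose a random cutoff $N = N(|u|_{H^1})$ so that $\nu 2^{2N}$ dominates $2^{cN}|u|_{H^1}$; then the cancellation $\langle B(\bar\rho,u),\bar\rho\rangle=0$ combined with the trilinear bound \eqref{eq:B:bnds} gives
\begin{equation*}
|\bar\rho(2n{+}2)|^2 \leq e^{-\gamma}|\bar\rho(2n{+}1)|^2 + C \int_{2n+1}^{2n+2} |P_N\bar\rho|^2(1+|u|_{H^1}^2)\,ds,
\end{equation*}
for a deterministic $\gamma>0$. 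The exponential moment bound \eqref{eq:exp:moment:decay:bnd} lets this be closed in expectation.

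On an active interval $[2n,2n{+}1]$, I want $v$ to force $P_N\bar\rho(2n{+}1)=0$. Duhamel's formula yields
\begin{equation*}
P_N\bar\rho(2n{+}1) = P_N \JJ_{2n,2n+1}\bar\rho(2n) - \sigma \int_{2n}^{2n+1} P_N \JJ_{r,2n+1}e_0 \, v(r)\,dr,
\end{equation*}
so I set $v(r) := \sigma\langle P_N\JJ_{r,2n+1}e_0,\ M_N^{-1} P_N\JJ_{2n,2n+1}\bar\rho(2n)\rangle$ on this interval, where
\begin{equation*}
M_N := \sigma^2\int_{2n}^{2n+1} (P_N \JJ_{r,2n+1}e_0)\otimes(P_N \JJ_{r,2n+1}e_0) \,dr
\end{equation*}
is the truncated Malliavin covariance. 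The crucial input is a polynomial moment bound $\E\|M_N^{-1}\|^p \leq C_p(N,u(2n))$, which in turn rests on a Norris-type small-ball inequality for the quadratic form $\langle M_N\eta,\eta\rangle$; this is propagated from the only directly-forced direction $e_0$ to all of $H_N$ by the admissible Lie brackets constructed in Theorem~\ref{prop:hor:cond}.

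The main obstacle is precisely this quantitative Malliavin estimate. The nearest-neighbor structure \eqref{eq:b:struct} forces us to reach $H_N$ through long chains of brackets, so Norris' lemma has to be applied iteratively with each step costing a polynomial factor in $|u|_{H^1}$-type quantities; the combinatorial explosion of the remainder $\mathcal{E}_m$ in \eqref{eq:nasty:poly} must be dominated by the principal term, using the spanning identity $S_m(e_0,\ldots,e_0) \sim e_{m+1}$ from Lemma~\ref{lem:pain:over:graphs}, so that at each iteration a new direction is recovered with controlled loss. Once this is in hand, assembling the two steps yields $\E|\bar\rho(n)|^2 \leq \lambda^n|\xi|^2$ for some $\lambda\in(0,1)$, while the cost $\E\int_n^{n+1} v^2\,dr \leq C\,\E(\|M_N^{-1}\|\,|P_N\bar\rho(n)|^2)$ is summable by Cauchy--Schwarz and the moment bound on $\|M_N^{-1}\|$. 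This delivers \eqref{eq:decay:cnt} and \eqref{eq:cost:cnt}, and the passage from Proposition~\ref{prop:cont:problem} to Theorem~\ref{thm:uniqueness:IM} via Proposition~\ref{thm:grad:est:MSG} then follows the template of \cite{HairerMattingly06,HairerMattingly2011}.
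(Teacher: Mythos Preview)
Your overall architecture---alternate an ``active'' interval on which a Malliavin-based control kills the low modes with a ``passive'' interval on which Foias--Prodi dissipation handles the rest, then iterate---is exactly the paper's strategy (Sections~\ref{sec:control:prob}--\ref{sec:cost:control}). Two technical choices in your write-up, however, create genuine gaps relative to what the paper actually proves.

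\textbf{(1) Exact inversion of $M_N$ versus Tikhonov regularization.} You take $v$ on the active block proportional to $M_N^{-1}P_N\JJ\bar\rho$, and for the cost you invoke $\E\|M_N^{-1}\|^p<\infty$. But Proposition~\ref{prop:mal:spec:bnd} only furnishes $\Prb(\inf_{\xi\in\TaN}\langle\MM\xi,\xi\rangle/|\xi|^2<\epsilon)\le C\epsilon^{\delta}$ for \emph{some} $\delta=\delta(N)>0$, with no claim that $\delta$ can be made large. This yields $\E\|M_N^{-1}\|^p<\infty$ only for $p<\delta$, which may well be below the exponents you need after Cauchy--Schwarz. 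The paper avoids this entirely by using the regularized inverse $(\MM_{2n,2n+1}+\beta I)^{-1}$ as in \eqref{eq:control:def}: the deterministic bound $\|\beta(\MM+\beta I)^{-1}\|\le 1$ controls the bad event uniformly, while on the good event $\|P_N\beta(\MM+\beta I)^{-1}\|$ is made small by taking $\beta$ small. This is why \eqref{eq:very:hard:Mal:bbd} holds for \emph{every} $q$ without any lower bound on $\delta$.

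\textbf{(2) The Skorokhod correction in the cost bound.} Your control on $[2n,2n+1]$ depends on $\JJ_{r,2n+1}$ and is therefore only $\mathcal{F}_{2n+1}$-measurable, not adapted; $\int_0^t v\,dW$ is a Skorokhod integral. The identity $\E\bigl(\int v\,dW\bigr)^2=\E\int v^2$ is false here: the generalized It\^o isometry \eqref{eq:generalized:Ito:iso:block} has an additional term $\sum_k\E\int\!\!\int \MD_s v(r)\,\MD_r v(s)\,dr\,ds$ that must be bounded separately (this is the content of Section~\ref{sec:cost:control} and Lemma~\ref{lem:many:more:merry:moment:bnds}). With your choice $v\sim M_N^{-1}(\cdots)$ this correction involves $\MD M_N^{-1}=-M_N^{-1}(\MD M_N)M_N^{-1}$ and hence even higher negative moments of $M_N$, compounding the first issue. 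The regularized choice keeps all operator norms bounded by $\beta^{-1}$ deterministically, so the Malliavin-derivative term is handled by the moment bounds on $\MD\JJ$, $\MD\AAA$, $\MD\AAA^*$ alone.

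A smaller point: your Foias--Prodi inequality with the running integral $\int|P_N\bar\rho|^2(1+|u|_{H^1}^2)\,ds$ on the passive block is awkward because $P_N\bar\rho$ does not stay zero. What you actually need (and what the paper uses, \eqref{eq:easer:FP:bnd}) is simply $\E\|\JJ_{2n+1,2n+2}Q_N\|^q\le\delta\exp(\eta|u(2n{+}1)|^2)$ for $N$ large, applied to $\bar\rho(2n{+}1)\in\operatorname{ran}Q_N$; a random cutoff $N$ is unnecessary.
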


\subsection{Defining the control}

A suitable choice for the control $v$ can be obtained in terms of the \emph{Malliavin covariance matrix} or \emph{control Grammian}
$\MM_{s,t} = \AAA_{s,t} \AAA^*_{s,t}: H \to H$. Here $\AAA_{s,t}^*: H \to L^2([s,t])$ is the adjoint of $\AAA_{s,t}$ and satisfies
\begin{align}
    (\AAA_{s,t}^* \xi )(r) = \sigma \langle e_0, \JJ^*_{r,t} \xi  \rangle, \quad \textrm{ for } r \in [s,t],
    \label{eq:AAA:stared}
\end{align}
where $\JJ^*_{s,t}$ is the adjoint of $\JJ_{s,t}$ defined via \eqref{eq:rho:def:1}. $\JJ^*_{s,t}\xi$ solves the final value problem
\begin{align}
  -\frac{d}{dt} \rho^* + A\rho^* + (\nabla B(u))^* \rho^* = 0, \quad \rho^*(t) = \xi.
  \label{eq:J:star:back:eqn}
\end{align}
with the notation
\[
\nabla B(u) \rho = B(u,\rho) + B(\rho,u).
\] 

A formal solution of \eqref{eq:decay:cnt} is obtained by taking
$v = \AAA_{0,t}^*\MM_{0,t}^{-1}\JJ_{0,t} \xi$,
for some $t > 0$.  It is not expected however that $\MM_{0,t}$ is invertible for many infinite-dimensional problems.  This difficulty 
is circumvented by considering a regularization $\tilde{\MM}_{0,t}$ in place of $\MM_{0,t}$ so that the resulting control 
pushes $\rho$ into small scales (high wavenumbers). We then make use of the dissipative structure in 
\eqref{eq:shell:model:0}--\eqref{eq:shell:model:j} to induce a decay in $\rho$.
Specifically, we determine $v$ and the resulting controlled quantity $\rho$ according to the following 
iterative construction.  We start from $\rho(0) = \xi$ and, having determined $\rho$ and $v$ on an
interval $[0,2n]$ for some integer $n$, we define
\begin{align}
  v_{[2n, 2n+1]} = \AAA^*_{2n,2n+1}( \MM_{2n,2n+1} + \beta I)^{-1} \JJ_{2n,2n+1} \rho(2n),
  \quad \textrm{ and } \quad v_{[2n+1, 2n+2]} = 0.
  \label{eq:control:def}
\end{align}
Here $\beta$ is a fixed positive parameter that will be specified below according to \eqref{eq:very:hard:Mal:bbd}, \eqref{eq:2n:int:times:decay} and 
we have adopted the notation $v_{[s,t]}$ as the restriction of $v$ to the interval $[s,t]$.  With
$v$ now defined up to the time $2n+2$ we can then determine $\bar{\rho}$ on this interval via 
\begin{align}
  \bar{\rho}(t) =  
  \begin{cases}
  \JJ_{2n,t} \bar{\rho}(2n) - \AAA_{2n,t} v &  \textrm{ for } t \in [2n, 2n+1]\\
  \JJ_{2n+1, t} \bar{\rho}(2n+1) & \textrm{ for } t \in [2n+1, 2n+2).
  \end{cases}
    \label{eq:controled:quant:def}
\end{align}
Observe in particular that 
\begin{align}
  \bar{\rho}(2n+ 2) &= \JJ_{2n+1, 2n+2} \beta ( \MM_{2n,2n+1} + \beta I)^{-1} \JJ_{2n,2n+1}\bar{\rho}(2n).
  \label{eq:cont:err:two:step}
\end{align}
Note that $v$ and $\rho$ have a `block adapted' structure, that is, for each $t \geq 0$
\begin{align}
  \bar{\rho}(t), v(t) \textrm{ are } \mathcal{F}_{\varrho(t)} \textrm{- measurable}
  \label{eq:black:adapted:structure}
\end{align}
where, recalling the notation $\lceil t \rceil$ for the smallest integer greater
than or equal to $t$,
\begin{align*}
  \varrho(t) :=
  \begin{cases}
   \lceil t \rceil &\textrm{ when }  \lceil t \rceil \textrm{ is odd, }\\
   t &\textrm{ when }  \lceil t \rceil \textrm{ is even. }
  \end{cases}  
\end{align*}

\subsection{Decay estimates for $\bar{\rho}$}
\label{sec:decay:est:rho}
We next show how $v$ defined by \eqref{eq:control:def}, \eqref{eq:controled:quant:def} induces the desired decay \eqref{eq:decay:cnt}.
We start by demonstrating that for every $p > 1$, $n \geq 0$ and $\delta, \eta > 0$,
\begin{align}
   \E( |\bar{\rho}(2n +2)|^p | \mathcal{F}_{2n}) \leq \delta \exp(\eta |u(2n)|^2) |\bar{\rho}(2n)|^p
   \label{eq:one:time:step}
\end{align} 
holds for a suitably small choice of $0 < \beta = \beta(\delta, \eta, p)$, independent of $n$.
Splitting $\rho$ into low and high modes and using that $\|\beta (\MM_{2n, 2n+1} + \beta I)^{-1}\| \leq 1$ 
for any $\beta > 0$ we have\footnote{We use the notation $\|\cdot\|$ for the operator 
norm of bounded linear maps between the appropriate spaces
($H, L^2(s,t)$, etc.)}
\begin{align*}
 |\bar{\rho}(2n +2)|^p  
	&\leq C ( \|\JJ_{2n+1, 2n+2} Q_N\|^p + \| \JJ_{2n+1, 2n+2}\|^p \| P_N \beta (\MM_{2n,2n+1} +\beta I)^{-1}\|^p) \| \JJ_{2n,2n+1}\|^p |\bar{\rho}(2n)|^p\\
	&= (T_1 + T_2) |\bar{\rho}(2n)|^p
\end{align*}
which holds for any $n$ and every $\beta > 0$.  Since
\begin{align*}
  \E (T_1| \mathcal{F}_{2n}) \leq C\E (\E(\|\JJ_{2n+1, 2n+2} Q_N\|^p | \mathcal{F}_{2n+1})  \|  \JJ_{2n,2n+1}\|^p| \mathcal{F}_{2n})
\end{align*}
and
\begin{align*}
  \E (T_2| \mathcal{F}_{2n}) &\leq C\E (\E (\| \JJ_{2n+1, 2n+2}\|^p| \mathcal{F}_{2n+1}) \| P_N \beta (\MM_{2n,2n+1} +\beta I)^{-1}\|^p \| \JJ_{2n,2n+1}\|^p | \mathcal{F}_{2n}) ,
\end{align*}
the one step decay \eqref{eq:one:time:step} reduces to establishing that:
\begin{proposition}\label{prop:desired:bnds:1}
The following bounds hold:
\begin{itemize}
\item[(i)] For each $p > 1$ and each $\eta > 0$ we have
\begin{align}
  \E \|\JJ_{0,1}\|^p \leq C \exp( \eta |\IC|^2),
  \label{eq:J:one:time:step:est}
\end{align}
where the constant $C = C(\eta, p, \nu)$.
\item[(ii)] For all $q \geq 1$ and $\delta, \eta > 0$  there exists an $N$ such that
\begin{align}
	\E \|\JJ_{0,1} Q_N\|^q  \leq \delta \exp(\eta |\IC|^2)
  \label{eq:easer:FP:bnd}
\end{align}
where $Q_N$ is the projection onto $\mbox{span} \{e_0, \ldots, e_N\}^\perp$.
\item[(iii)] Finally, for every $q > 1$, $N> 0$ and $\eta, \delta > 0$ there exists $\beta > 0$ 
such that
\begin{align}
\E (\| P_N \beta (\MM_{0,1} +\beta I)^{-1}\|^q) \leq  \delta \exp(\eta |\IC |^2).
\label{eq:very:hard:Mal:bbd}
\end{align}
\end{itemize}
\end{proposition}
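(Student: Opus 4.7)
My approach is a direct energy estimate on the linearization \eqref{eq:rho:def:1}. Pairing with $\rho$ and using the cancellation \eqref{eq:B:cancelation} to annihilate $\langle B(u,\rho),\rho\rangle$ leaves
\[
\tfrac{1}{2}\tfrac{d}{dt}|\rho|^2 + \nu|\rho|_{H^1}^2 = -\langle B(\rho,u), \rho\rangle.
\]
The estimate \eqref{eq:B:bnds} and the interpolation $|\rho|_{H^{c-1}} \leq |\rho|^{2-c}|\rho|_{H^1}^{c-1}$ (valid since $c \leq 2$) give $|\langle B(\rho,u),\rho\rangle| \leq C |u|_{H^1} |\rho|^{3-c}|\rho|_{H^1}^{c-1}$, and Young with conjugate exponents $2/(c-1)$ and $2/(3-c)$ absorbs the $H^1$ factor to leave $(d/dt)|\rho|^2 \leq C_\nu |u|_{H^1}^{2/(3-c)}|\rho|^2$. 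Gr\"onwall produces
\[
\|\JJ_{0,1}\|^p \leq \exp\left(C_{p,\nu}\int_0^1 |u|_{H^1}^{2/(3-c)} ds\right).
\]
Since $c < 2$, the exponent $2/(3-c)$ is strictly less than $2$, so $|u|_{H^1}^{2/(3-c)} \leq \delta|u|_{H^1}^2 + C_\delta$ holds for arbitrarily small $\delta > 0$; combining with the exponential moment bound \eqref{eq:exp:moment:decay:bnd} applied at suitably small $\kappa$ delivers \eqref{eq:J:one:time:step:est} for any $\eta > 0$.

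\textbf{Part (ii).} I would obtain \eqref{eq:easer:FP:bnd} via a Foias--Prodi decomposition $\rho = p + q$ with $p := P_N\rho$ and $q := Q_N\rho$, so that the restriction $\xi = Q_N\xi$ gives $p(0) = 0$ and places all initial energy above mode $N$. The $|q|^2$ estimate takes advantage of the strong dissipation $\nu|q|_{H^1}^2 \geq \nu 2^{2(N+1)}|q|^2$, whereas $|p|^2$ is fed by the nonlinearity only through the boundary exchange at levels $N$ and $N+1$ by the nearest-neighbour structure recalled in \eqref{eq:b:struct}. Running the same interpolation and Young step as in (i), which is licit because $c < 2$ keeps the dyadic system semilinear, one obtains a pathwise estimate of the shape
\[
|\JJ_{0,1}Q_N \xi|^2 \leq \exp\bigl(-c_1\nu 2^{2N}\bigr) \exp\left(C_\nu \int_0^1 |u|_{H^1}^{2/(3-c)}\,ds\right) |\xi|^2,
\]
once $N$ is large enough that the gap from the dissipation dominates the nonlinear exchange. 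Taking expectations and invoking \eqref{eq:exp:moment:decay:bnd} exactly as in (i), then fixing $N$ so that the deterministic factor $\exp(-c_1\nu 2^{2N})$ beats both the remaining constants and the target $\delta$, yields the claim.

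\textbf{Part (iii).} This is the main obstacle. The plan is to show that the quadratic form
\[
\langle \MM_{0,1}\phi,\phi\rangle = \sigma^2 \int_0^1 |\langle e_0, \JJ_{r,1}^*\phi\rangle|^2\,dr
\]
admits a quantitative lower bound uniform on the unit sphere of $P_N H$, with high probability; specifically
\[
\Prb\left(\inf_{|\phi|=1,\,\phi = P_N\phi} \langle \MM_{0,1}\phi,\phi\rangle < \epsilon\right) \leq C(N)\epsilon^{q_0} \exp(\eta|\IC|^2)
\]
for some $q_0 > 0$. Given such a bound, a standard spectral computation estimates $\|P_N\beta(\MM_{0,1}+\beta I)^{-1}\|$ by $\beta/\epsilon$ on the good event and by $1$ on its complement, so optimizing via $\epsilon = \beta^{1/2}$ yields \eqref{eq:very:hard:Mal:bbd} for $\beta$ sufficiently small. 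The small-ball probability itself is produced by a Hairer--Mattingly dichotomy applied to $g(s) := \langle e_0,\JJ_{s,1}^*\phi\rangle$: either $\|g\|_{L^2([0,1])} > \epsilon^{1/2}$, and one is done; or $g$ is uniformly tiny on a sub-interval, in which case successive $s$-derivatives computed from \eqref{eq:J:star:back:eqn} produce, modulo lower-order remainders, the quantities $\langle [F,e_0](u),\JJ^*\phi\rangle$, $\langle [[F,e_0],F](u),\JJ^*\phi\rangle$, and so on. These must all be correspondingly small, and by the H\"ormander condition proved in Theorem~\ref{prop:hor:cond} enough admissible brackets span $P_N H$ to force $|\phi| \ll 1$, contradicting $|\phi|=1$.

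The hardest step is turning this dichotomy into a quantitative exponent $q_0$ and constant $C(N)$. In contrast with the Navier--Stokes situation of \cite{HairerMattingly06, HairerMattingly2011}, the admissible brackets furnished by Theorem~\ref{prop:hor:cond} only reach $H_N$ after roughly $2^N$ iterations due to the locality of $B$, so both $C(N)$ and $1/q_0$ degrade rapidly with $N$. Ensuring that the final $\beta = \beta(\delta,\eta,N,q)$ in \eqref{eq:very:hard:Mal:bbd} remains strictly positive forces one to carefully track the volume of the parallelepiped spanned by the relevant brackets at a typical trajectory, using polynomial bounds on $|u|$ and $|u|_{H^1}$ inherited from \eqref{eq:exp:moment:decay:bnd}, together with Kolmogorov-type H\"older regularity estimates for $\JJ^*$ underlying the ``small-on-sub-interval'' reduction.
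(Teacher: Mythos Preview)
Your arguments for (i) and (iii) match the paper's. Part (i) is the same $L^2$ energy estimate on \eqref{eq:rho:def:1} with the interpolation $|\rho|_{H^{c-1}}\le|\rho|^{2-c}|\rho|_{H^1}^{c-1}$, Young's inequality to trade $|u|_{H^1}^{2/(3-c)}$ for $\kappa|u|_{H^1}^2+C_\kappa$ (this is where $c<2$ enters), Gr\"onwall, and \eqref{eq:exp:moment:decay:bnd}. Part (iii) is the paper's route as well: the spectral small-ball estimate on $\MM_{0,1}$ (Proposition~\ref{prop:mal:spec:bnd}) is deduced from the H\"ormander condition of Theorem~\ref{prop:hor:cond} via the iterated implications of Proposition~\ref{prop:small:eigenval:imp:large:sets}, and then \eqref{eq:very:hard:Mal:bbd} follows. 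One technical correction: the lower bound on $\langle\MM_{0,1}\phi,\phi\rangle$ must hold on the cone $\TaN=\{\xi:|P_N\xi|\ge\alpha|\xi|\}$ rather than only on the subspace $P_NH$, since $(\MM_{0,1}+\beta I)^{-1}$ does not preserve $P_NH$; with that adjustment your reduction to \eqref{eq:very:hard:Mal:bbd} goes through.

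Part (ii) has the right strategy but the displayed pathwise bound is not correct. With $\rho(0)=Q_N\xi$ the high part $q=Q_N\rho$ does enjoy the strong damping $\nu 2^{2N}$, but the low part $p=P_N\rho$ starts at zero and is fed through the boundary by a forcing of size roughly $2^{(c-1)N}|u|_{H^1}|q|$; integrating this against the decay of $q$ yields a contribution of order $2^{2(c-2)N}$ (times a $u$-dependent factor), which for $c<2$ goes to zero in $N$ but far more slowly than $\exp(-c_1\nu 2^{2N})$. The paper's proof in Section~\ref{sec:FoiasProdi:bnds} makes this visible: Gr\"onwall on $|\rho_h|^2$ with integrating factor $\mu(t,s)=\exp(-\nu 2^{2N}(t-s)+\kappa\int_s^t|u|_{H^1}^2)$, then H\"older and \eqref{eq:exp:moment:decay:bnd}, produces a two-term estimate whose first term carries the factor $\exp(-\nu p\,2^{2N-1})$ you anticipate, but whose second term---coming from the forcing $2^{2(c-1)N}C|\rho|^2$ on the right of \eqref{eq:rho_h:1} and bounded via \eqref{eq:rho:moment:general:power}---contributes $C\,2^{2(c-2)N}\exp(\eta|\IC|^2)$. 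This still yields \eqref{eq:easer:FP:bnd} for $N$ large since $c<2$, but a clean single-factor pathwise inequality of the form you wrote is not available, and the argument does not close without tracking this second term.
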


The first bound follows directly from \eqref{eq:rho:def:1} and \eqref{eq:exp:moment:decay:bnd}.  
The Foias-Prodi estimate \eqref{eq:easer:FP:bnd} expresses the fact that if an initial condition is 
concentrated in sufficiently high
wavenumbers then the diffusive terms in \eqref{eq:rho:def:1} mostly dissipates the 
solution after one time step.   The final bound \eqref{eq:very:hard:Mal:bbd} shows that inverting $\MM_{0,1} + \beta I$ approximately gives the desired control 
on the low modes. This step in the analysis is delicate and would not be expected to be true in general.  It relies on the fact that the
H\"ormander bracket condition, Proposition~\ref{prop:hor:cond} is satisfied.  We postpone further details for Sections~\ref{sec:FoiasProdi:bnds},~\ref{sec:mal:mat:analysis} below.

With \eqref{eq:one:time:step} in hand we establish \eqref{eq:decay:cnt} as follows. 
For any $q > 1$ and $\eta > 0$ define
\begin{align*}
  \mathfrak{P}_n := \prod_{k =1}^n  \left( \frac{| \bar{\rho}(2n +2) |}{| \bar{\rho}(2n)|} \right)^{q} \exp(-\eta/2 \cdot |u(2n) |^2) \quad \mbox{and} \quad
  \mathfrak{R}_n := \prod_{k =1}^n \exp(\eta/2 \cdot |u(2n)|^2).
\end{align*}
Note that $|\rho(2n+2)|^q :=  \mathfrak{P}_n\mathfrak{R}_n$. By making repeated use
of \eqref{eq:one:time:step}, we have that
 $(\E(  \mathfrak{P}_n\mathfrak{R}_n) )^{1/2}
  =   \E( \E (\mathfrak{P}_n^2 | \mathcal{F}_{2n})) \E ( \mathfrak{R}_n)^2
  \leq \delta \E (\mathfrak{P}_{n-1}^2) \E ( \mathfrak{R}_n)^2 
  \leq \cdots
  \leq \delta^n \E ( \mathfrak{R}_n)^2$.
On the other hand, from \eqref{eq:exp:moment:decay:bnd} we infer that
$\E \mathfrak{R}_n \leq  \exp(\eta |\IC|^2 + C_0 n)$
which is valid for sufficiently small $\eta = \eta(\nu) > 0$ and a constant $C_0 = C_0(\nu) > 0$. By taking $\delta = \exp(-2\gamma - C_0)$ 
in \eqref{eq:one:time:step} and combining these two bounds we now conclude 
\begin{align}
 \E (|\rho(2n+2)|^q) \leq \exp(\eta |\IC |^2 - 2n \gamma).
 \label{eq:2n:int:times:decay}
\end{align}
and hence \eqref{eq:decay:cnt}.

\subsection{Bounding the cost of control}
\label{sec:cost:control}
To obtain the cost of control bounds \eqref{eq:cost:cnt} we observe that
by using the {\em block adapted} structure in \eqref{eq:black:adapted:structure} with the generalized It\={o} isometry (see  \cite{Nualart2006}) we infer
\begin{align}
  \E \left( \int_0^{2n} v dW\right)^2 = \E \int_0^{2n} |v|^2 dt + \sum_{k= 0}^n\E \int_{2k}^{2k+1}\int_{2k}^{2k+1} \MD_s v(r) \MD_r v(s) dr ds.
  \label{eq:generalized:Ito:iso:block}
\end{align}
Here $\MD: \Mspc^p(H) \subset L^p(\Omega, H) \to L^p(\Omega; L^2([0,T]) \otimes H)$ is the Malliavin derivative operator.
For the first term in \eqref{eq:generalized:Ito:iso:block} observe that 
\begin{align}
  \E \int_0^{2n} |v|^2 ds 
   =& \sum_{k=0}^{n-1} \E \| \AAA^*_{2k, 2k+1} ( \MM_{2k, 2k+1} + \beta I)^{-1} \JJ_{2k, 2k+1} \rho(2k)\|_{L^2([2k, 2k+1])}^2 \notag\\
   \leq& \frac{1}{\beta}\sum_{k=0}^{n-1} \E (\| \JJ_{2k, 2k+1}\|^4)^{1/2}  (\E (|\rho(2k)  |^4))^{1/2} 
   \leq \frac{C \exp(\eta|\IC|^2)}{\beta}\sum_{k=0}^\infty \exp(-2\gamma k).
   \label{eq:Ito:Iso:term:1}
\end{align}
Here we have used that $\| \AAA^*_{2k, 2k+1} ( \MM_{2k, 2k+1} + \beta I)^{-1/2}\|_{\mathcal{L}(H, L^2([2k, 2k+1]))} \leq 1$ and that
$\| \MM_{2k, 2k+1} + \beta I)^{-1/2}\|\leq \beta^{-1/2}$.

In order to address the second term in \eqref{eq:generalized:Ito:iso:block} we use the (Malliavin) chain rule and the fact that 
$\rho_{2n}$ is $\mathcal{F}_{2n}$ adapted to compute
\begin{align}
  \MD_t v_{[2n,2n+1]} =& \MD_t \AAA^*_{2n,2n+1}( \MM_{2n,2n+1} + \beta I)^{-1} \JJ_{2n,2n+1} \rho(2n) \notag\\
  	  &+ \AAA^*_{2n,2n+1}  \MD_t ( \MM_{2n,2n+1} + \beta I)^{-1} \JJ_{2n,2n+1} \rho(2n) \notag\\
	  &+ \AAA^*_{2n,2n+1}( \MM_{2n,2n+1} + \beta I)^{-1}  \MD_t \JJ_{2n,2n+1} \rho(2n), 
	  \label{eq:MD:cont:mess:1}
\end{align}
for any $t \geq 2n$.
On the other hand 
\begin{align}
\MD_t &( \MM_{2n,2n+1} + \beta I)^{-1}  \label{eq:MD:cont:mess:2}\\
   &=  -( \MM_{2n,2n+1} + \beta I)^{-1} 
   	( \MD_t \AAA_{2n,2n+1} \AAA_{2n,2n+1}^* + \AAA_{2n,2n+1}  \MD_t\AAA_{2n,2n+1}^*) 
	( \MM_{2n,2n+1} + \beta I)^{-1}. \notag
\end{align}
In view of  \eqref{eq:MD:cont:mess:1}, \eqref{eq:MD:cont:mess:2}, we need more explicit expressions for 
$\MD_t \JJ_{2n, 2n+1}$, $\MD_t \AAA_{2n, 2n+1}$, and $\MD_t \AAA^*_{2n,2n+1}$.
For any $\xi, \xi' \in H$ we take $\tilde{\rho} = \JJ_{s,t}^{(2)}(\xi, \xi')$  as the solution of
 $\frac{d}{dt} \tilde{\rho} + \nu A \tilde{\rho} + B(u,\tilde{\rho}) + B(\tilde{\rho},u)  + 
 B(\JJ_{s,t} \xi ,\JJ_{s,t} \xi') + B(\JJ_{s,t} \xi',\JJ_{s,t} \xi) = 0$, $\tilde{\rho}(s) = 0.$
Using the properties $\MD_t$ one may show that (see \cite{HairerMattingly2011})
\begin{align}
  \MD_\tau \JJ_{s,t}\xi =
  \begin{cases}
     \JJ^{(2)}_{\tau,t}( \sigma e_0, \JJ_{s,\tau} \xi) & \textrm{ when } s < \tau,\\
     \JJ^{(2)}_{s,t}(\JJ_{\tau, s} \sigma e_0, \xi) &  \textrm{ when } s \geq \tau.
  \end{cases}
  \label{eq:MD:JJ}
\end{align}
By making use of \eqref{eq:MD:JJ} one may verify the
following additional moment bounds from \eqref{eq:exp:moment:decay:bnd},
\eqref{eq:rho:def:1}, \eqref{eq:A:op}, \eqref{eq:MD:JJ} and routine
estimations (see \cite{HairerMattingly06}).
\begin{lemma}
\label{lem:many:more:merry:moment:bnds}
\mbox{}
\begin{itemize}
\item[(i)]  
For any $T > 0$, $p \geq 1$, $\eta >0$
\begin{align*} 
 \E \sup_{t \in [0,T]} \|\JJ_{t,T}\|^p \leq C \exp(\eta |\IC|^2), \quad \E \sup_{t \in [0,T]} \|\JJ_{t,T}^{(2)}\|^p \leq C \exp (\eta |\IC|^2) 
\end{align*}
for a constant $C = C(T,p, \nu, \eta)$.  Similarly for $r < t$ and $p \geq 1$
\begin{align*}
   \E \|\AAA_{r,t}\|^p \leq C \exp(\eta |\IC |^2), \quad \E \|\AAA_{r,t}^*\|^p \leq C \exp(\eta |\IC |^2).
\end{align*}
\item[(ii)] For $r \leq s \leq t$, $p \geq 1$ and $\eta > 0$ we have
\begin{align*}
  \E \|\MD_s \JJ_{r, t}\|^p      \leq C \exp(\eta |\IC |^2), \quad
  \E \| \MD_s \AAA_{r, t} \|^p \leq C \exp(\eta |\IC|^2), \quad
  \E \| \MD_s \AAA^*_{r,t}\|^p \leq C \exp(\eta |\IC|^2),
\end{align*}
for a constant $C = C(p, t - r, \nu, \eta)$.
\end{itemize}
\end{lemma}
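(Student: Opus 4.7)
\medskip

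The plan is to derive each bound via an energy estimate on the appropriate linearized equation, chain the bounds via Duhamel when necessary, and invoke the exponential moment \eqref{eq:exp:moment:decay:bnd} on $u$ to convert Gronwall-type integrals into the desired $\exp(\eta|\IC|^2)$ factor. Throughout, I use $c \in [1,2)$ so that $H^{c-1}$ interpolates strictly between $H$ and $H^1$.

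I would begin with $\JJ$. Test \eqref{eq:rho:def:1} against $\rho = \JJ_{s,t}\xi$, use the cancellation \eqref{eq:B:cancelation} to eliminate $B(u,\rho)$, and estimate the remaining term via \eqref{eq:B:bnds}, interpolation $|\rho|_{H^{c-1}} \leq |\rho|^{2-c}|\rho|_{H^1}^{c-1}$, and Young's inequality with conjugate exponents $\tfrac{2}{c-1}, \tfrac{2}{3-c}$ to absorb $|\rho|_{H^1}^2$ into the dissipation. This yields
\begin{equation*}
\frac{d}{dt}|\rho|^2 + \nu|\rho|_{H^1}^2 \leq C_\nu |u|_{H^1}^{2/(3-c)}|\rho|^2,
\end{equation*}
so Gronwall gives $\|\JJ_{s,t}\| \leq \exp\bigl(C_\nu \int_s^t |u(r)|_{H^1}^{2/(3-c)} dr\bigr)$. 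Since $2/(3-c) < 2$, Jensen's inequality controls the integrand by $(1+|u|_{H^1}^2)$, and \eqref{eq:exp:moment:decay:bnd}, applied on the bounded interval $[s,t] \subset [0,T]$ (so that the $\exp(-\nu t/2)$ damping factor is bounded below by a constant depending on $T$), supplies the exponential moment with the desired $\exp(\eta|\IC|^2)$ prefactor for any $\eta > 0$, upon raising to the $p$-th power. The same estimate applied to \eqref{eq:J:star:back:eqn}, read backwards in time, handles $\JJ^*$ identically.

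Next I would treat $\JJ^{(2)}$ via Duhamel: $\tilde\rho(t) = -\int_s^t \JJ_{r,t}\bigl[B(\JJ_{s,r}\xi, \JJ_{s,r}\xi') + B(\JJ_{s,r}\xi',\JJ_{s,r}\xi)\bigr]dr$. Estimating the bilinear forcing via \eqref{eq:B:bnds} and applying the $\JJ$ bound from the previous step (with H\"older in the time integral and the exponential moment again) yields the claimed bound on $\|\JJ^{(2)}_{s,t}\|$. The operator bounds on $\AAA$ and $\AAA^*$ then follow from the representations \eqref{eq:A:op} and \eqref{eq:AAA:stared} via Cauchy--Schwarz in the time integral, reducing to $\int_s^t \|\JJ_{r,t}\|^2 dr$ and $\int_s^t \|\JJ^*_{r,t}e_0\|^2 dr$ respectively, both handled by the $\JJ$ bound.

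For part (ii), the explicit formula \eqref{eq:MD:JJ} expresses $\MD_\tau \JJ_{r,t}$ as $\JJ^{(2)}$ evaluated at arguments $\sigma e_0$ and $\JJ \xi$ (or $\JJ(\sigma e_0)$ and $\xi$), so part (i) immediately yields the bound. The Malliavin derivatives of $\AAA_{r,t}$ and $\AAA^*_{r,t}$ are obtained by differentiating under the integrals defining them, using the product rule and \eqref{eq:MD:JJ}, and each resulting term is again estimated by the bounds on $\JJ$ and $\JJ^{(2)}$ from part (i). The main technical obstacle throughout is the interplay between the superlinear-in-$|u|_{H^1}$ dependence inside the Gronwall exponent and the fact that the exponential moment \eqref{eq:exp:moment:decay:bnd} carries the decay factor $\exp(-\nu t/2)$; keeping this under control uniformly in $\xi$ (and in $T$ for fixed bounded intervals) while producing an arbitrarily small $\eta > 0$ in the prefactor is the heart of the computation, and it is precisely here that the restriction $c < 2$, via $2/(3-c) < 2$, saves the estimate.
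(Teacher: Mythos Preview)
Your proposal is correct and matches the paper's approach. The paper itself does not give a detailed proof of this lemma: it simply states that the bounds follow ``from \eqref{eq:exp:moment:decay:bnd}, \eqref{eq:rho:def:1}, \eqref{eq:A:op}, \eqref{eq:MD:JJ} and routine estimations (see \cite{HairerMattingly06}),'' and the energy estimate you write for $\JJ$ is exactly the one the paper carries out in Section~\ref{sec:FoiasProdi:bnds} (cf.~\eqref{eq:rho:moment:general:power}), including the observation that $c<2$ is what allows the $|u|_{H^1}^{2/(3-c)}$ term to be absorbed into the exponential moment via Young's inequality.

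One small point worth tightening in your sketch: in the Duhamel representation for $\JJ^{(2)}$, estimating $|B(\JJ_{s,r}\xi,\JJ_{s,r}\xi')|_H$ via \eqref{eq:B:bnds} costs an $H^1$ norm on one of the arguments, so you need the pathwise $L^2_tH^1$ control on $\rho=\JJ_{s,\cdot}\xi$ coming from the dissipation term in the energy inequality (not just the $L^\infty_tH$ bound). This is available and your phrase ``H\"older in the time integral'' suggests you have it in mind, but it is the one place where the argument is not purely an $H\to H$ operator bound; alternatively, a direct energy estimate on $\tilde\rho$ avoids Duhamel and handles the forcing term $\langle B(\rho_1,\rho_2),\tilde\rho\rangle$ by distributing the $2^{cj}$ weight so that the $H^1$ falls on $\tilde\rho$ and is absorbed.
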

With these bounds in mind we now return to \eqref{eq:generalized:Ito:iso:block}. 
The second term in this expression is bounded by $\sum_{k=0}^{2n} \E \| \MD v \|_{L^2( [2k, 2k+1]^2)}^2$.
We handle each of the terms in this sum using the expression 
\eqref{eq:MD:cont:mess:1}, \eqref{eq:MD:cont:mess:2} as 
\begin{align}
   \| \MD v \|_{L^2( [2k, 2k+1]^2)}^2 \leq \frac{1}{\beta^2}
   \Big(&\|\MD_t \AAA^*_{2n,2n+1}\|^2 \|\JJ_{2n,2n+1}\|^2 +\|\MD_t \AAA_{2n,2n+1}\|^2 \|\JJ_{2n,2n+1}\|^2
   \notag\\
   &+ \| \AAA^*_{2n,2n+1}\|^2 \|\MD_t \JJ_{2n,2n+1}\|^2 \Big)
   |\rho(2n)|^2
   \label{eq:mall:der:term:ito}
\end{align}
where we have used that $\| \AAA^*_{2k, 2k+1} ( \MM_{2k, 2k+1} + \beta I)^{-1/2}\| \leq 1$,
$\|  ( \MM_{2k, 2k+1} + \beta I)^{-1/2} \AAA_{2k, 2k+1}\| \leq 1$,
and $\|  ( \MM_{2k, 2k+1} + \beta I)^{-1/2}\| \leq \beta^{-1/2}$.
Using \eqref{eq:2n:int:times:decay} and Lemma~\ref{lem:many:more:merry:moment:bnds}
with \eqref{eq:mall:der:term:ito} we conclude that
\begin{align}
    \sum_{k= 0}^n\E \int_{2k}^{2k+1}\int_{2k}^{2k+1} \MD_s v(r) \MD_r v(s) dr ds
    \leq \frac{\exp(\eta | \IC |^2)}{\beta^2} \sum_{k = 0}^n \exp( - \gamma k |\IC|^2).
\label{eq:Ito:Iso:term:2}
\end{align}
Combining \eqref{eq:Ito:Iso:term:1} and \eqref{eq:Ito:Iso:term:2} with \eqref{eq:generalized:Ito:iso:block} we conclude \eqref{eq:cost:cnt}. 

\subsection{Foias-Prodi-type bounds}
\label{sec:FoiasProdi:bnds}
We turn next to establishing \eqref{eq:easer:FP:bnd}, and prove \eqref{eq:J:one:time:step:est} along the way.
The importance of having a semi-linear system, ensured in our case by $1 \leq c < 2$, is directly apparent in the estimates of this section.
Recall the notation
$ 
\rho = \JJ_{s,t} \xi
$
for the linearized flow around the solution $u(t,\IC)\in C(0,T;H) \cap L^2(0,T;H^1) $ of \eqref{eq:abstract:shell};  that is, $\rho$ solves
\eqref{eq:rho:def:1}.

From the $L^2$ energy inequality and using \eqref{eq:B:cancelation}, \eqref{eq:B:bnds} we obtain
\begin{align*}
\frac{d}{dt} |\rho|^2 + 2 \nu |\rho|_{H^1}^2 
\leq 2 \left| \langle B(\rho,u),\rho \rangle \right| 
\leq 2 |\rho|_{H^{c-1}} |u|_{H^1} |\rho| 
\leq \nu |\rho|_{H^1}^2 + \nu^{- \frac{c-1}{3-c}} |\rho|^2 |u|_{H^1}^{\frac{2}{3-c}}
\end{align*}
for all $c \in [1,2]$. After absorbing the $\nu |\rho|_{H^1}^2$ term in the left hand side and multiplying the
resulting differential inequality by $|\rho|^{p-2}$ we infer
\begin{align*}
 \frac{d}{dt} |\rho|^{p} +  \frac{p \nu}{2}  |\rho|_{H^1}^2 | \rho |^{p-2} \leq \frac{p}{2}|\rho|^{p} \left( \nu^{-\frac{c-1}{3-c}} |u|_{H^1}^{\frac{2}{3-c}} \right) \leq  |\rho|^p \left( \kappa |u|_{H^1}^{2}  + C \right)
\end{align*}
for any $\kappa>0$ and $p \geq 2$ and a suitable constant $C = C(\nu,c,\kappa, p)$  that may be computed explicitly.  Note here that the final inequality requires that $1 \leq c < 2$. 
Letting $\kappa = \frac{\nu}{16 \sigma^2} \wedge \eta$, applying the Gr\"onwall inequality, taking expected values, and making use of \eqref{eq:exp:moment:decay:bnd} we arrive at
\begin{align}
\E |\rho(t)|^p + \frac{p \nu}{2} \int_0^t  \E \left( |\rho(s)|_{H^1}^2 |\rho(s)|^{p-2} \right)ds  \leq |\xi|^p  \exp\left( \eta |\IC|^{2}  + C t \right)
\label{eq:rho:moment:general:power}
\end{align}
for any $p\geq 2$, $t\geq 0$ where $C = C(\nu,\sigma,c,p)$.  The bound \eqref{eq:J:one:time:step:est} follows immediately.

Recall that $P_N$ is the projection onto the first $N$ coordinates of elements of $H$ 
and $Q_N = I - P_N$.  We denote by $\rho_l = P_N \rho$ and $\rho_h = Q_N \rho$ as
the {\em low} and the {\em high} components of $\rho$ solving \eqref{eq:rho:def:1}. Upon applying $Q_N$ to \eqref{eq:rho:def:1} we obtain
\begin{align*}
\partial_t \rho_h + A \rho_h + Q_N( B(u,\rho_l + \rho_h) + B(\rho_l + \rho_h, u) ) = 0.
\end{align*}
Multiplying with $\rho_h$, using that $2^{2N} | \rho_h|^2 \leq |\rho_h|^2_{H^1}$, the cancelation property \eqref{eq:B:cancelation}, 
and estimates similar to \eqref{eq:B:bnds} we obtain
\begin{align*}
\frac{d}{dt} |\rho_h|^2 + \nu 2^{2N} |\rho_h|^2 + \nu |\rho_h|_{H^1}^2 
&\leq 2 |\langle B(u,\rho_l), \rho_h \rangle|  + 2 |\langle B(\rho_l,u) \rho_h \rangle|  + 2 |\langle B(\rho_h,u), \rho_h \rangle|  \notag\\
&\leq 4 |u|_{H^1} |\rho_h| |\rho_l|^{2-c} |\rho_l|_{H^1}^{c-1} + 2 |u|_{H^1} |\rho_h|_{H^1}^{c-1} |\rho_h|^{3-c}.
\end{align*}
For $\kappa >0$ to be determined we infer that
\begin{align}
\frac{d}{dt} |\rho_h|^2 + \left( \nu 2^{2N} - \kappa |u|_{H^1}^2 \right) |\rho_h|^2 + \frac{\nu}{2} |\rho_h|_{H^1}^2 
	\leq C\left( |\rho_l |^{2(c-1)}_{H^1} |\rho|^{2(2-c)} + |\rho|^{2} \right)
	\leq 2^{2(c-1)N}C |\rho|^{2}.
\label{eq:rho_h:1}
\end{align}
where $C = C(\nu, \kappa, c)$ but is independent of $N$ and we have again used that $1 \leq c < 2$.
For any $p\geq 2$, upon multiplying \eqref{eq:rho_h:1} with $|\rho_h|^{p-2}$ and using the Gr\"onwall 
and H\"older inequalities we obtain
\begin{align} 
\E |\rho_h(t)|^p 
\leq&  |\xi|^p \E \left( \mu(t,0)^{\frac p2} \right) + 2^{2(c-1)N} C \int_0^t (\E \mu(t,s)^{p})^{1/2} (\E |\rho(s)|^{2p})^{1/2} ds
\label{eq:rho:high:gronwall}
\end{align}
where $C = C(\nu,\kappa,c,p)$, independent of $N$, and 
\begin{align*}
\mu(t,s) = \exp\left( -\nu 2^{2N} (t-s) + \kappa \int_s^t |u(\tau)|_{H^1}^2 d\tau \right).
\end{align*}
By letting $\kappa = p^{-1}(\frac{\nu^2}{16\sigma^2} \wedge \eta)$ and using \eqref{eq:exp:moment:decay:bnd} we have
\begin{align}
\E \mu(t,s)^p
&\leq 2 \exp\left( - \nu p 2^{2N}  (t-s) \right)  \exp\left(\eta |\IC|^2 \right),
\label{eq:E:mu:bound}
\end{align} 
for any $0 \leq s < t$.
Combining \eqref{eq:rho:moment:general:power}, \eqref{eq:rho:high:gronwall} with \eqref{eq:E:mu:bound} we obtain
\begin{align*} 
\E |\rho_h(t)|^p 
\leq&  \exp\left(\eta |\IC|^2 \right) \left(|\xi|^{p}     \exp\left( - \nu p 2^{2N-1} t \right) + 2^{2(c-1)N}   |\xi|^{2p} \frac{1}{  \nu p 2^{2N}}  \right)
\end{align*}
for a constant $C =C(\nu,\kappa,c,p,t)$ independent of $N$.  By now taking $t = 1$ and $N$ sufficiently large we now conclude \eqref{eq:easer:FP:bnd}.

\subsection{Analysis of the Malliavin covariance operator}
\label{sec:mal:mat:analysis}

The second crucial bound necessary to achieve Proposition~\ref{prop:cont:problem} is \eqref{eq:very:hard:Mal:bbd}.  
This inequality is immediately inferred from the following probabilistic spectral estimate on $\MM_{0,1}$ (see \cite{HairerMattingly2011}).
\begin{proposition}
\label{prop:mal:spec:bnd}
For every $\alpha, \gamma > 0$ and every integer $N$ there exists a $\delta > 0$ such 
that
\begin{align}
  \Prb \left(  \sup_{\xi \in \TaN} \frac{\langle \MM_{0,1}\xi, \xi \rangle}{|\xi|^2} < \epsilon \right)  \leq C\epsilon^\delta \exp(\gamma |\IC |^2)
  \label{eq:mall:spectral:bnd}
\end{align}
for every $\epsilon > 0$, where $\TaN :=  \{ \xi : |P_N \xi | \geq \alpha | \xi |\}$ and the constants $C= C(\alpha, \gamma, N)$ and $\delta= \delta(\alpha,\gamma, N) > 0$ 
are independent of $\epsilon$ and $\IC$.
\end{proposition}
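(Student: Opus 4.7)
My plan is to follow the Hairer--Mattingly paradigm for hypoelliptic smoothing estimates on the Malliavin matrix, relying on the H\"ormander bracket condition established in Theorem~\ref{prop:hor:cond}. The key preliminary observation is that, by~\eqref{eq:AAA:stared},
\begin{align*}
 \langle \MM_{0,1}\xi, \xi \rangle = \|\AAA_{0,1}^* \xi\|_{L^2([0,1])}^2 = \sigma^2 \int_0^1 \langle e_0, \JJ_{r,1}^*\xi\rangle^2\, dr,
\end{align*}
so the event $\{\langle \MM_{0,1}\xi,\xi\rangle<\epsilon|\xi|^2\}$ forces $g_{e_0}(t):=\langle e_0,\JJ_{t,1}^*\xi\rangle/|\xi|$ to be small in $L^2([0,1])$. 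The plan is to propagate this smallness through a Lie bracket iteration until every $g_{e_j}$ with $0\le j\le N$ is also small; since $\JJ_{1,1}^*=I$, this will force $|P_N\xi|/|\xi|<\alpha$ and contradict $\xi\in\TaN$.

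For each smooth vector field $\Phi:H\to H$ of appropriate polynomial growth, I would introduce the scalar semimartingale $g_\Phi(t):=\langle \Phi(u(t)),\JJ_{t,1}^*\xi\rangle/|\xi|$. Combining It\^o's formula for $\Phi(u(t))$ with the backward equation~\eqref{eq:J:star:back:eqn} and the definition~\eqref{eq:lie:brak:def} of the Lie bracket, together with the fact that $e_0$ is a constant vector field, one obtains the clean identity
\begin{align*}
 dg_\Phi(t) = g_{[F,\Phi]}(t)\, dt + \sigma\, g_{[e_0,\Phi]}(t)\, dW(t) + \tfrac{\sigma^2}{2}\, g_{\nabla^2\Phi(e_0,e_0)}(t)\, dt.
\end{align*}
Thus the martingale part of $g_\Phi$ records the $e_0$-bracket of $\Phi$ while its drift records the $F$-bracket, which is precisely the pair of brackets that generate the sets $\AddSet_m$ in Definition~\ref{def:Hormander:bracket}.

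The core step is then an infinite-dimensional Norris-type lemma in the spirit of~\cite{HairerMattingly2011}: if $g_\Phi$ is bounded by $\epsilon$ in $L^2([0,1])$ on an event of large probability, and if $\Phi$ along with its first two derivatives satisfies appropriate moment bounds along $u(t)$, then both its quadratic variation $\int_0^1 g_{[e_0,\Phi]}^2\, dt$ and its bounded-variation part are bounded by $\epsilon^{q}$ on a set of probability at least $1-C\epsilon^{q'}\exp(\gamma|\IC|^2)$, for suitable $q,q'>0$. One application forces $g_{[e_0,\Phi]}$ to be small; a second application, applied to the residual drift, forces $g_{[F,\Phi]}$ to be small, modulo the explicit It\^o correction $g_{\nabla^2\Phi(e_0,e_0)}$. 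Starting from $g_{e_0}$ (for which $\nabla^2\Phi=0$) and iterating, I would conclude that $g_E$ is small in $L^2([0,1])$ for every $E\in\bigcup_m\AddSet_m$, with the surviving exponent degrading by a fixed multiplicative factor at each bracket level. Theorem~\ref{prop:hor:cond} then guarantees that at some finite level $m(N)$ one has $\AddSet_{m(N)}\supset H_N$, so $g_{e_j}$ ends up small for every $0\le j\le N$ with a common exponent $\delta=\delta(\alpha,\gamma,N)>0$.

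The closing step is to promote this $L^2([0,1])$ smallness to pointwise control at $t=1$, using the H\"older regularity of $g_{e_j}$ yielded by the It\^o expansion together with the moment bounds of Lemma~\ref{lem:many:more:merry:moment:bnds}. Since $\JJ_{1,1}^*=I$ one then obtains $|\langle e_j,\xi\rangle|/|\xi|\ll\alpha$ for $j\le N$ off an event of probability at most $C\epsilon^\delta\exp(\gamma|\IC|^2)$, contradicting $\xi\in\TaN$. The exponential factor $\exp(\gamma|\IC|^2)$ is inherited through each invocation of Norris via~\eqref{eq:exp:moment:decay:bnd}. I expect the main obstacle to be the bookkeeping of the It\^o correction $g_{\nabla^2\Phi(e_0,e_0)}$ along the iteration: the admissible vector fields produced in the proof of Theorem~\ref{prop:hor:cond} are polynomials of increasing degree in $u$ (cf.~\eqref{eq:the:last:bracket}), and their second derivatives must be absorbed by moment bounds that remain uniform across bracket levels; it is precisely here that the semilinear range $c\in[1,2)$ enters to keep the relevant polynomial growth in $u$ controllable.
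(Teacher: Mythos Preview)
Your high-level strategy matches the paper's: translate smallness of $\langle\MM_{0,1}\xi,\xi\rangle$ into smallness of $\langle e_0,\JJ^*_{t,1}\xi\rangle$, iteratively propagate this through Lie brackets using Theorem~\ref{prop:hor:cond}, and conclude that $|P_N\xi|/|\xi|$ must be small, contradicting $\xi\in\TaN$. This is exactly the content of Proposition~\ref{prop:small:eigenval:imp:large:sets} and Corollary~\ref{cor:imp:plus:hor}.

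The mechanism you propose for the bracket step, however, has a gap. Your It\^o expansion of $g_\Phi(t)=\langle\Phi(u(t)),\JJ^*_{t,1}\xi\rangle$ produces a stochastic integral with integrand $g_{[e_0,\Phi]}(t)$, and you then appeal to a Norris-type lemma. But $\JJ^*_{t,1}\xi$ solves the backward equation \eqref{eq:J:star:back:eqn} and therefore depends on $\{u(s):t\le s\le 1\}$, i.e.\ on the \emph{future} of $W$. The integrand is anticipating, the $dW$-integral is a Skorokhod integral, and the standard Norris lemma (which is built on adapted quadratic variation) does not apply. The paper, following \cite{HairerMattingly2011}, avoids this obstacle by a different device: one writes $u=\bar u+\sigma We_0$ with $\bar u$ the shifted process of \eqref{eq:shifted:system:0}--\eqref{eq:shifted:system:j}, expands $E(u(t))$ as a polynomial in $W(t)$ whose coefficients are built from $\bar u(t)$ and $\JJ^*_{t,1}\xi$, and invokes the Wiener-polynomial bound of \cite{HairerMattingly2011}: if such a polynomial is uniformly small on $[1/2,1]$ and its coefficients are H\"older in $t$, then each coefficient is small. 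Since both $\bar u$ and $\JJ^*_{t,1}\xi$ solve noise-free differential equations, the required H\"older regularity holds and no adaptedness is needed. The $F$-bracket is then extracted by ordinary time-differentiation of $\langle\JJ^*_{t,1}\xi,E(\bar u)\rangle$ together with a H\"older interpolation bound, again without stochastic calculus. This is the substance of the hints following Proposition~\ref{prop:small:eigenval:imp:large:sets}; your It\^o-correction bookkeeping for $\nabla^2\Phi(e_0,e_0)$ is likewise absorbed automatically by the Wiener-polynomial expansion, which produces all iterated $e_0$-brackets at once.
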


The proof of the estimate \eqref{eq:mall:spectral:bnd} consists in translating 
each of the admissible brackets leading to the condition \eqref{eq:easiest:inf:dim:hormander} into 
quantitive bounds.   This leads to what amounts to an iterative proof by contradiction with high probability.  
One begins by showing that small eigenvalue, eigenvector pairs translate to a smallness
condition on linear forms related to successive Lie brackets as follows:

\begin{proposition} \label{prop:small:eigenval:imp:large:sets}
\mbox{}
\begin{itemize}
\item[(i)] There exists an $\epsilon_0 > 0$ and collection of measurable sets $\Omega_{\epsilon, 0}$ defined for each $\epsilon < \epsilon_0$ such
that $\Prb( \Omega_{\epsilon, 0}^C) \leq C\epsilon \exp(\eta |\IC|^2)$
and so that on $\Omega_{\epsilon, 0}$
\begin{align}
   \langle \MM_{0,1} \xi, \xi \rangle < \epsilon |\xi|^2 \quad \Rightarrow \quad \sup_{t \in [1/2, 1]} |\langle \JJ^*_{t, 1}\xi,  e_0 \rangle | < \epsilon^q  |\xi| ,
     \label{eq:initial:imp:good:set}
\end{align}
for every $\xi  \in H$.
\item[(ii)] Suppose that $E \in \mathcal{M}_k$, for some $k \geq 0$ where $\mathcal{M}_k$ is defined above in \eqref{eq:kth:order:iterated:poly}
and we take $\mathcal{M}_0 = \{ e_0 \}$.  
Then there exist $\epsilon_0 = \epsilon_0(E) >0$, $q = q(E)$ such that for every $\epsilon < \epsilon_0$
there is a set $\Omega_{\epsilon,E}$ so that $\Prb( \Omega_{\epsilon, E}^C) < C\epsilon \exp(\eta | \IC |^2)$
and so that on $\Omega_{\epsilon, E}$
\begin{align}
    &\sup_{t \in [1/2, 1]} |\langle \JJ^*_{t, 1} \xi, E(u) \rangle | < \epsilon |\xi |  \notag\\
     &\Rightarrow \; \left(\sup_{t \in [1/2, 1]} |\langle \JJ^*_{t, 1}\xi,  [E(u), F(u)] \rangle | +      \sup_{t \in [1/2, 1]} |\langle \JJ^*_{t, 1} \xi,  [E(u), e_0] \rangle |\right)  < \epsilon^q |\xi|,
     \label{eq:imp:iterative:bad:set}
\end{align}
for every $\xi \in H$.
\end{itemize}
\end{proposition}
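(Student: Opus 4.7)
The plan is to reduce both parts to a Norris-type quantitative smallness-propagation lemma for semimartingales, after writing down the semimartingale decomposition of a suitable scalar observable built out of $\JJ^*_{t,1}\xi$ and iterated vector fields applied to $u(t)$. Throughout, the exceptional set $\Omega_{\epsilon,\cdot}^c$ will be produced by a Chebyshev argument applied to certain moment bounds, which carry an exponential factor in $|\IC|^2$ via \eqref{eq:exp:moment:decay:bnd} and Lemma~\ref{lem:many:more:merry:moment:bnds}, yielding the stated form $C\epsilon \exp(\eta|\IC|^2)$.

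For part (i), start from the identity
\begin{align*}
\langle \MM_{0,1}\xi,\xi \rangle = \|\AAA^*_{0,1}\xi\|_{L^2([0,1])}^2 = \int_0^1 |\sigma \langle e_0, \JJ^*_{s,1}\xi\rangle|^2\, ds
\end{align*}
which is immediate from \eqref{eq:AAA:stared}. Writing $g(s) := \sigma\langle e_0, \JJ^*_{s,1}\xi\rangle$, the hypothesis says $\|g\|_{L^2([0,1])}^2 < \epsilon|\xi|^2$. The backward equation \eqref{eq:J:star:back:eqn} gives an explicit ODE for $\JJ^*_{s,1}\xi$ whose coefficients involve $u(s)$ in $H^1$; combined with Lemma~\ref{lem:many:more:merry:moment:bnds} and the exponential moments \eqref{eq:exp:moment:decay:bnd}, this produces a set $\Omega_{\epsilon,0}$ of probability $\geq 1 - C\epsilon\exp(\eta|\IC|^2)$ on which $\|g\|_{C^\alpha([0,1])} \leq K|\xi|$ for some fixed $\alpha \in (0,1)$ and deterministic $K$. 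The standard interpolation $\|g\|_{L^\infty} \leq C\|g\|_{L^2}^\theta \|g\|_{C^\alpha}^{1-\theta}$ then yields \eqref{eq:initial:imp:good:set} with $q = \theta/2$.

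For part (ii), introduce the real-valued process $\phi(t) := \langle \JJ^*_{t,1}\xi, E(u(t))\rangle$. A careful Itô calculation, either interpreting $\JJ^*_{0,1}\xi$ as a Malliavin-measurable random vector and working with the adapted forward process $\psi(t) = \JJ_{0,t}^{-1}E(u(t))$ so that $\phi(t) = \langle \JJ^*_{0,1}\xi, \psi(t)\rangle$, or using a time-reversal of the Wiener process as in \cite{HairerMattingly2011}, produces the semimartingale decomposition
\begin{align*}
\phi(t) = \phi(0) + \int_0^t \bigl\langle \JJ^*_{s,1}\xi, [E,F](u)\bigr\rangle ds + \frac{\sigma^2}{2}\int_0^t \bigl\langle \JJ^*_{s,1}\xi, [[E,e_0],e_0](u)\bigr\rangle ds - \sigma\int_0^t \bigl\langle \JJ^*_{s,1}\xi, [E,e_0](u)\bigr\rangle dW,
\end{align*}
where we have used $\nabla F(u)E - \nabla E(u)F = [E,F]$ and $\nabla E(u)e_0 = -[E,e_0]$ per the convention \eqref{eq:lie:brak:def}, together with $\partial_0^2 E = [[E,e_0],e_0]$. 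The hypothesis $\sup_{[1/2,1]}|\phi| < \epsilon|\xi|$ is then fed into a Norris-type lemma of \cite{HairerMattingly06,HairerMattingly2011}. Combined with Hölder-in-$s$ regularity for each coefficient process (which follows from the polynomial form of $E, [E,e_0], [E,F], [[E,e_0],e_0] \in \cup_{k'} \mathcal{M}_{k'}$ together with the moment bounds on $\JJ^*$ and on $u$ in $H^1$), this yields on a single good set of probability $\geq 1 - C\epsilon\exp(\eta|\IC|^2)$ the two bounds
\begin{align*}
\sup_{t \in [1/2,1]} \bigl|\langle \JJ^*_{t,1}\xi, [E,e_0](u)\rangle\bigr| \leq \epsilon^{q_1}|\xi|, \qquad \sup_{t \in [1/2,1]}\Bigl|\langle \JJ^*_{t,1}\xi, [E,F](u) + \tfrac{\sigma^2}{2}[[E,e_0],e_0](u)\rangle\Bigr| \leq \epsilon^{q_2}|\xi|.
\end{align*}
Invoking part (ii) recursively for the vector field $[E,e_0] \in \mathcal{M}_{k+1}$ yields in particular smallness of $\langle \JJ^*, [[E,e_0],e_0]\rangle$, which can then be subtracted from the second bound to isolate $\langle \JJ^*, [E,F]\rangle$ at the cost of a further reduction of $q$.

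The main obstacle is the quantitative Norris lemma at the level of Hölder moduli. The coefficients appearing in the semimartingale decomposition are polynomial in $u$ of degree growing with the bracket depth, and the various iterations of the lemma needed to reach the full $\AddSet_m$ (with $m = m(N)$ from Theorem~\ref{prop:hor:cond}) degrade the exponent $q$ and the implicit constant by a controlled amount each time. One must therefore verify that only finitely many iterations are needed for each fixed $N$ (which is ensured by Theorem~\ref{prop:hor:cond}), and that the required Hölder moduli can be bounded uniformly across this finite family of brackets by combining the moment bounds of Lemma~\ref{lem:many:more:merry:moment:bnds} with a Kolmogorov continuity argument applied to each polynomial composition. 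A secondary technical point is rigorously justifying the Itô calculation for the non-adapted factor $\JJ^*_{t,1}\xi$, which is handled as indicated above.
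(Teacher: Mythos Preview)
Your overall architecture is right, and part (i) is handled exactly as one would expect: the identity $\langle \MM_{0,1}\xi,\xi\rangle = \sigma^2\int_0^1 |\langle e_0,\JJ^*_{s,1}\xi\rangle|^2\,ds$ plus an $L^2$--$C^\alpha$ interpolation on a high-probability set is the standard move.

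For part (ii), your route and the paper's diverge in a way that matters. You write down the It\^o decomposition of $\phi(t)=\langle\JJ^*_{t,1}\xi,E(u(t))\rangle$ and appeal to a Norris-type lemma. The paper instead follows \cite{HairerMattingly2011} and first performs the shift $u=\bar u+\sigma W e_0$, so that $\bar u$ solves a \emph{random ODE} with no stochastic integral. Then $E(u)=E(\bar u+\sigma W e_0)$ is expanded as a polynomial in the scalar $W(t)$ whose coefficients are built from $\bar u$; the quantitative \emph{Wiener polynomial bound} of \cite{HairerMattingly2011} (if a polynomial in $W$ with H\"older coefficients is uniformly small, each coefficient is small) produces the $[E,e_0]$ implication directly. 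The $[E,F]$ implication is then obtained by time-differentiating $t\mapsto\langle\JJ^*_{t,1}\xi,E(\bar u(t))\rangle$, which is now a genuine $C^1$ function of $t$ pathwise, and interpolating between sup-norm smallness and a H\"older bound on the derivative.

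The practical difference is precisely the issue you label ``secondary'': $\JJ^*_{t,1}\xi$ is $\mathcal{F}_1$-measurable but not $\mathcal{F}_t$-adapted, so $\phi$ is \emph{not} an It\^o semimartingale and the classical Norris lemma does not apply as stated. Your first proposed fix, writing $\phi(t)=\langle\JJ^*_{0,1}\xi,\JJ_{0,t}^{-1}E(u(t))\rangle$, still pairs an $\mathcal{F}_1$-measurable vector against a stochastic integral, so the martingale term is a Skorokhod integral, and moreover requires invertibility and control of $\JJ_{0,t}^{-1}$ in $H$, which is not established here. Your second fix, ``time-reversal as in \cite{HairerMattingly2011}'', is in effect the paper's approach: once you reverse time (or equivalently shift to $\bar u$), the stochastic calculus disappears and what remains is exactly the Wiener-polynomial and H\"older-interpolation machinery. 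So the non-adaptedness is not a secondary technicality but the reason the paper (and \cite{HairerMattingly2011}) take the detour through $\bar u$; your Norris-lemma packaging obscures this. Your downward recursion on $[E,e_0]$ to strip the It\^o correction $[[E,e_0],e_0]$ from the drift is fine and terminates because $E$ is a fixed-degree polynomial, but in the paper's setup that correction never appears, since after the shift the analysis is purely pathwise.
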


The proof of Proposition~\ref{prop:small:eigenval:imp:large:sets} is lengthy and technical.  Here we merely hint at some details.   The complete 
proof follows exactly as in \cite{HairerMattingly2011} and see also \cite{FoldesGlattHoltzRichardsThomann2013}.
One obtains new brackets of the form $[E(u), e_0]$
by expanding $E(u) = E(\bar{u} + \sigma W)$ where $\bar{u} = u - \sigma W$ and then using a bound on Wiener polynomials from \cite{HairerMattingly2011} 
to show that each of the terms in the expansion is small if $E(u)$ is small.  Here may simplify the analysis by taking advantage of the smoothing estimate
\begin{align*}
   \E \sup_{t \in [t_0, t_1]} |u(t, \IC)|^p_{H^s}, \textrm{ for any } 0< t_0 < t_1 < \infty.
\end{align*}
Implications involving $[E(u), A(u) + B(u)]$ in \eqref{eq:imp:iterative:bad:set} are obtained by again
changing variables, differentiating in the expression $\langle \JJ^*_{t, 1} \xi, E(\bar{u}) \rangle$ and 
making use of interpolation bounds involving Holder regularity in time.

Iterating the chain of implications \eqref{eq:imp:iterative:bad:set} starting from \eqref{eq:initial:imp:good:set} we may
infer the smallness of any form associated with a sequence of admissible bracket operations; cf. Definition~\ref{def:Hormander:bracket}.
Thus Theorem~\ref{prop:hor:cond} and Proposition~\ref{prop:small:eigenval:imp:large:sets} imply
\begin{corollary}\label{cor:imp:plus:hor}
For every $N \geq 0$ there exists an $q = q(N) > 0$, $\epsilon_0 = \epsilon_0(N) > 0$ and 
sets $\Omega_\epsilon$ defined for $\epsilon \in [0, \epsilon_0]$
with
\begin{align}
   \Prb( \Omega_{\epsilon}^C) \leq \epsilon C \exp( \eta |\IC|^2) 
   \label{eq:final:good:sets}
\end{align}
and such that on $\Omega_{\epsilon}$ we have the implication
\begin{align}
    \langle \MM_{0,1} \xi, \xi \rangle < \epsilon |\xi|^2 
    \quad \Rightarrow \quad
    \sum_{k = 0}^N \langle \xi, e_k \rangle^2 \leq \epsilon^q |\xi|^2 
    \label{eq:imp:to:quad:form}
\end{align}
which holds for every $\xi \in H$.
\end{corollary}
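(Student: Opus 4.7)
The plan is to combine the H\"ormander bracket condition Theorem~\ref{prop:hor:cond} with the iterative smallness propagation Proposition~\ref{prop:small:eigenval:imp:large:sets} to pull the single hypothesis $\langle\MM_{0,1}\xi,\xi\rangle<\epsilon|\xi|^2$ through a finite tree of admissible Lie brackets until each coordinate direction $e_0,\dots,e_N$ is reached. Theorem~\ref{prop:hor:cond} supplies an integer $m=m(N)$ such that $\AAA_m\supset H_N$, so unwinding Definition~\ref{def:Hormander:bracket} each basis vector $e_k$, $0\le k\le N$, admits a representation
\begin{align*}
e_k=\sum_{j=1}^{L_k}c_{k,j}\,G_{k,j}(u),\qquad u\in H,
\end{align*}
with constants $c_{k,j}\in\RR$ and each $G_{k,j}$ a polynomial vector field obtained from $e_0$ by a finite sequence of Lie brackets against $F$ or $e_0$, of depth at most $m(N)$.

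First I would invoke part (i) of Proposition~\ref{prop:small:eigenval:imp:large:sets} to produce a set $\Omega_{\epsilon,0}$ on which the hypothesis forces $\sup_{t\in[1/2,1]}|\langle\JJ^*_{t,1}\xi,e_0\rangle|<\epsilon^{q_0}|\xi|$, and then apply part (ii) inductively along each of the bracket trees associated to the $G_{k,j}$. This produces a finite collection $\mathcal{T}$ of intermediate polynomial vector fields (containing every $G_{k,j}$) together with a set $\Omega_{\epsilon,E}$ and an exponent $q_E>0$ for each $E\in\mathcal{T}$, such that on $\Omega_{\epsilon,E}$
\begin{align*}
\sup_{t\in[1/2,1]}|\langle\JJ^*_{t,1}\xi,E(u)\rangle|<\epsilon^{q_E}|\xi|.
\end{align*}
Setting $\Omega_\epsilon:=\Omega_{\epsilon,0}\cap\bigcap_{E\in\mathcal{T}}\Omega_{\epsilon,E}$ and $\tilde q:=\min_{E\in\mathcal{T}}q_E>0$, a union bound over the finite set $\mathcal{T}$ yields \eqref{eq:final:good:sets}. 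Specializing to $t=1$, where $\JJ^*_{1,1}=I$, and expanding using the admissible representation of each $e_k$ gives
\begin{align*}
|\langle\xi,e_k\rangle|\le\sum_{j=1}^{L_k}|c_{k,j}|\,|\langle\xi,G_{k,j}(u(1))\rangle|\le C_N\,\epsilon^{\tilde q}|\xi|,\qquad 0\le k\le N.
\end{align*}
Squaring and summing over $k$ yields $\sum_{k=0}^N\langle\xi,e_k\rangle^2\le C_N'\,\epsilon^{2\tilde q}|\xi|^2$, and choosing any $q\in(0,2\tilde q)$ together with $\epsilon_0(N)$ small enough to absorb $C_N'$ produces \eqref{eq:imp:to:quad:form}.

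The main subtle point is the compatibility between the algebraic construction of $\AAA_m$ in Theorem~\ref{prop:hor:cond}, which uses iterated Lie brackets against $F$ and $e_0$, and the probabilistic chain of implications in Proposition~\ref{prop:small:eigenval:imp:large:sets}(ii), which handles exactly these two bracket operations. One must verify that the intermediate polynomial vector fields appearing along the bracket trees fit into the families $\mathcal{M}_k$ of \eqref{eq:kth:order:iterated:poly} -- or reduce to finite linear combinations of such, so that the iteration continues by linearity of the inner-product smallness -- so that (ii) genuinely applies at each step. This reduces to the same combinatorial bookkeeping already carried out in Section~\ref{sec:Hormander:Cond}, and since the bracket depth is uniformly bounded by $m(N)$, the composite exponent $q=q(N)>0$ remains strictly positive.
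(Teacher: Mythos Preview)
Your proposal is correct and follows the same route the paper indicates: iterate Proposition~\ref{prop:small:eigenval:imp:large:sets} starting from (i) along the finite bracket trees supplied by Theorem~\ref{prop:hor:cond}, intersect the good sets, and read off the coordinate smallness at $t=1$ where $\JJ^*_{1,1}=I$. The paper compresses all of this into a single sentence (``Iterating the chain of implications \eqref{eq:imp:iterative:bad:set} starting from \eqref{eq:initial:imp:good:set}\ldots Thus Theorem~\ref{prop:hor:cond} and Proposition~\ref{prop:small:eigenval:imp:large:sets} imply''), so your write-up is a faithful expansion, and your flagging of the bookkeeping needed to keep the intermediate fields within (linear combinations of) the $\mathcal{M}_k$ is exactly the point one must check to make the iteration go through.
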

We now infer Proposition~\ref{prop:mal:spec:bnd} from Corollary~\ref{cor:imp:plus:hor} as follows. Observe that for $\xi \in \TaN :=  \{ \xi : |P_N \xi| \geq \alpha |\xi|\}$
\begin{align*}
\alpha |\xi |^2 \leq | P_N \xi |^2 = \sum_{k = 0}^N \langle \xi, e_k \rangle^2.
\end{align*}
Therefore combining this bound with \eqref{eq:imp:to:quad:form} we infer that,
on the sets $\Omega_\epsilon$ given in \eqref{eq:final:good:sets}
we have that 
\begin{align*}
   \langle \MM_{0,1} \xi, \xi \rangle \geq \epsilon |\xi|^2 
\end{align*}
for every $\epsilon < \epsilon_1(N, \alpha)$ and each $\xi \in \TaN$.
This completes the proof of Proposition~\ref{prop:mal:spec:bnd}.

\subsection{Consequences of the Gradient Estimates}

We finally describe how Propositions~\ref{thm:irr:simple},~\ref{thm:grad:est:MSG} 
imply Theorem~\ref{thm:uniqueness:IM}.
In  \cite{HairerMattingly06, HairerMattingly2008} the authors show that in the general setting
of Markov semigroups on Banach spaces, 
the gradient bound in Proposition~\ref{thm:grad:est:MSG}, the irreducibility condition 
Proposition~\ref{thm:irr:simple}, and certain moment bounds satisfied by establishing \eqref{eq:exp:moment:decay:bnd}
imply the ergodicity and mixing properties of $\{P_t\}$ claimed in Theorem~\ref{thm:uniqueness:IM}.  The central limit theorem, (iii) 
follows from abstract results in \cite{KomorowskiWalczuk2012}.   Details of the application for the 
stochastic Navier-Stokes equations are given in these works and are precisely the same in our
situation. See also \cite{FoldesGlattHoltzRichardsThomann2013} where these results are shown to apply to a 
different concrete infinite dimensional stochastic system.

We prove the strong law of large numbers \eqref{eq:SLLN} following
the strategy taken in \cite{KuksinShirikian12}.
This requires some suitable modifications to the proof however since mixing occurs in a weaker
sense, \eqref{eq:mixing:wash:norm}, than in \cite{KuksinShirikian12} where only a non-degenerate 
stochastic forcing is considered.

We will consider, without loss of generality, that $\int \phi(u) d \mu(u) = 0$. 
The proof \eqref{eq:SLLN} relies on the stochastic process
\begin{align*}
  M_{T} = \int_0^\infty \left( \E(\phi(u(t, \IC))| \mathcal{F}_T) -  \E \phi(u(t, \IC))  \right) dt
\end{align*}
Observe that, with the Markov property,
\begin{align}
  M_T &= \int_0^T \phi(u(t,\IC)) dt + \int_0^\infty P_{t} \phi(u(T, \IC))dt - \int_0^\infty P_t\phi(\IC) dt \notag\\
          &:= \int_0^T \phi(u(t,\IC)) dt + R(u(T, \IC)) - R(\IC).
          \label{eq:MG:Decomposition}
\end{align}
We establish the convergence \eqref{eq:SLLN} using $M_T$ in two steps.  Firstly we 
show
\begin{align}
   \frac{ R(u(T,\IC)) - R(\IC)}{T} = \frac{1}{T}\left( \int_0^T \phi(u(t, \IC)) dt - M_T\right) \to 0  \quad a.s. 
   \label{eq:MG:Ave:diff}
\end{align}
and then we establish that
\begin{align}
   \frac{M_T}{T} \to 0 \quad a.s.
   \label{eq:weird:MG:to:zero}
\end{align}

For the first convergence, \eqref{eq:MG:Ave:diff}, we infer from \eqref{eq:mixing:wash:norm} that 
\begin{align*}
  \frac{R(u(T, \IC))}{T} \leq \frac{C \exp(\eta/2 | u(T, \IC)|^2)}{T}.
\end{align*}
To show that the later quantity goes to zero fix any $\delta > 0$, 
and observe that
\begin{align*}
  \sum_{N \geq 1} \Prb\left(  \frac{ \exp(\eta/2 | u(\delta N, \IC)|^2)}{\delta N} \geq  N^{-1/4} \right) 
  \leq \frac{1}{\epsilon^2 \delta^2} \sum_{N \geq 1} \frac{\E \exp(\eta | u(\delta N, \IC)|^2)}{N^{3/2}}.
\end{align*}
With the Borel-Cantelli lemma we infer that,
\begin{align*}
  \bigcup_{M= 1}^\infty \left\{\frac{ \exp(\eta/2 | u(\delta N, \IC)|^2)}{\delta N} <  \frac{1}{N^{1/4}}, \textrm{ for every } N \geq M \right\} 
\end{align*}
has measure one.   Since this holds for all $\delta > 0$ we infer the first convergence \eqref{eq:MG:Ave:diff}.

We turn to the second convergence \eqref{eq:weird:MG:to:zero} which we address with the strong law of large numbers for Martingales.  
Making use of \eqref{eq:mixing:wash:norm} we observe, for suitable $\gamma_1, \gamma_2$ that
\begin{align}
  \E R(u(T, \IC))^2  
  \leq& C\E \left( \int_0^\infty  \exp( - \gamma_1 t + \eta/2 | u(T, \IC)|^2) \| \phi\|_{\gamma_2} dt \right)^2
  \leq C \E \exp(\eta | u(T, \IC)|^2) \notag\\
  \leq& C \exp(\eta | \IC|^2)
  \label{eq:R:exp:bnd}
\end{align}
where $C$ does not depend on $T$ and where we have used \eqref{eq:exp:moment:decay:bnd}
for the final bound. Similar bounds apply for $R(\IC)$ for the same reasons.
With this bound in hand it is direct to verify that $\{M_{T}\}_{T \geq 0}$ is a square integrable, mean zero martingale.
It is therefore sufficient to show that for $\delta >0$, 
\begin{align}
\sum_{N \geq 1} \frac{\E (M_{\delta N} - M_{\delta (N-1)})^2}{N^2} < \infty,
\label{eq:mg:sum}
\end{align}
see for example \cite{KuksinShirikian12}.
Using the bound \eqref{eq:R:exp:bnd} we have
\begin{align}
  \E (M_{\delta N} - M_{\delta (N-1)})^2 
  =& \E \left(  \int_{\delta (N-1)}^{\delta N} \phi(u(t, \IC)) dt + R(u(\delta N, \IC)) - R(u(\delta (N-1), \IC)) \right)^2 \notag\\
  \leq& C \left(\delta \int_{\delta (N-1)}^{\delta N} \E \phi(u(t, \IC))^2 dt +  \exp(\eta | \IC|^2) \right),
  \label{eq:incre:bnd}
\end{align}
for a constant $C$ independent of $\delta$, $N$.
Now, since $\phi \in \mathcal{G}$ it is easy to see that $\phi^2 \in \mathcal{G}$; cf. \eqref{eq:exp:grow:space}.
We there infer from 
\begin{align}
  \E \phi^2(u(t, \IC)) \leq C + \int \phi^2(u) d \mu(u)
  \label{eq:mixing:bnd:phi:sq}
\end{align}
where the constant $C = C(\eta, c , \sigma, \phi)$ is independent of $t$.  Combining \eqref{eq:incre:bnd} and \eqref{eq:mixing:bnd:phi:sq}
we infer \eqref{eq:mg:sum} and hence, since $\delta > 0$ is arbitrary,  \eqref{eq:weird:MG:to:zero} follows.

\bigskip
\begin{center}
{\bf Acknowledgments}
\end{center}
We would like to thank the Institut Henri Poincar\'e where this work was conceived during a summer 
Research in Paris grant.  We would also like to acknowledge the Newton Institute at the University of Cambridge
where NEGH was a visitor during the final stages of writing.  We thank Marco Romito for the helpful feedback and references.  
This work was also partially supported under the grants NSF DMS-1207780 (SF), NSF DMS-1313272 (NEGH),
NSF DMS-1211828 (VV).


\begin{thebibliography}{FGHRT13}

\bibitem[AFS08]{CIME08}
S.~Albeverio, F.~Flandoli, and Y.G. Sinai.
\newblock {\em S{PDE} in Hydrodynamic: Recent Progress and Prospects}, volume
  1942 of {\em Lecture Notes in Mathematics}.
\newblock Springer-Verlag, Berlin, 2008.
\newblock Lectures given at the C.I.M.E. Summer School held in Cetraro, August
  29--September 3, 2005, Edited by Giuseppe Da Prato and Michael R{{\"o}}ckner.
  
\bibitem[BFM10]{BarbatoFlandoliMorandin2010}
D.~Barbato, F.~Flandoli, and F.~Morandin.
\newblock Uniqueness for a stochastic inviscid dyadic model.
\newblock {\em Proceedings of the American Mathematical Society},
  138(7):2607--2617, 2010.

\bibitem[BFM11]{BarbatoFlandoliMorandin2012}
D.~Barbato, F.~Flandoli, and F.~Morandin.
\newblock Anomalous dissipation in a stochastic inviscid dyadic model.
\newblock {\em The Annals of Applied Probability}, 21(6):2424--2446, 2011.

\bibitem[BMR11]{BarbatoMorandinRomito11}
D.~Barbato, F.~Morandin, and M.~Romito.
\newblock Smooth solutions for the dyadic model.
\newblock {\em Nonlinearity}, 24(11):3083, 2011.

\bibitem[Ben95]{Bensoussan1995}
A.~Bensoussan.
\newblock Stochastic {N}avier-{S}tokes equations.
\newblock {\em Acta Appl. Math.}, 38(3):267--304, 1995.

\bibitem[BT73]{BensoussanTemam}
A.~Bensoussan and R.~Temam.
\newblock \'{E}quations stochastiques du type {N}avier-{S}tokes.
\newblock {\em J. Functional Analysis}, 13:195--222, 1973.

\bibitem[BF12]{BessaihFerrario2012}
H.~Bessaih and B.~Ferrario.
\newblock Invariant {G}ibbs measures of the energy for shell models of
  turbulence: the inviscid and viscous cases.
\newblock {\em Nonlinearity}, 25(4):1075, 2012.

\bibitem[BFT10]{BessaihFlandoliTiti10}
H.~Bessaih, F.~Flandoli, and E.S. Titi.
\newblock Stochastic attractors for shell phenomenological models of
  turbulence.
\newblock {\em Journal of Statistical Physics}, 140(4):688--717, 2010.

\bibitem[BM09]{BessaihMillet2009}
H.~Bessaih and A.~Millet.
\newblock Large deviation principle and inviscid shell models.
\newblock {\em Electron. J. Probab.}, 2:1, 2009.

\bibitem[BDLS13]{DeLellisSzekelyhidiBuckmaster13}
T.~Buckmaster, C.~De~Lellis, and L.~Sz{{\'e}}kelyhidi, Jr.
\newblock Transporting microstructure and dissipative {E}uler flows.
\newblock {\em arXiv preprint arXiv:1302.2815}, 02 2013.

\bibitem[CCFS08]{CheskidovConstantinFriedlanderShvydkoy08}
A.~Cheskidov, P.~Constantin, S.~Friedlander, and R.~Shvydkoy.
\newblock Energy conservation and {O}nsager's conjecture for the {E}uler
  equations.
\newblock {\em Nonlinearity}, 21(6):1233--1252, 2008.

\bibitem[CF09]{CheskidovFriedlander09}
A.~Cheskidov and S.~Friedlander.
\newblock The vanishing viscosity limit for a dyadic model.
\newblock {\em Phys. D}, 238(8):783--787, 2009.

\bibitem[CFP07]{CheskidovFriedlanderPavlovic07}
A.~Cheskidov, S.~Friedlander, and N.~Pavlovi{{\'c}}.
\newblock Inviscid dyadic model of turbulence: the fixed point and {O}nsager's
  conjecture.
\newblock {\em J. Math. Phys.}, 48(6):065503, 16, 2007.

\bibitem[CFP10]{CheskidovFriedlanderPavlovic10}
A.~Cheskidov, S.~Friedlander, and N.~Pavlovi{{\'c}}.
\newblock An inviscid dyadic model of turbulence: the global attractor.
\newblock {\em Discrete Contin. Dyn. Syst.}, 26(3):781--794, 2010.

\bibitem[CS11]{CheskidovShvydkoy11}
A.~Cheskidov and R.~Shvydkoy.
\newblock A unified approach to regularity problems for the {3D}
  {N}avier-{S}tokes and {E}uler equations: the use of {K}olmogorov's
  dissipation range.
\newblock {\em arXiv preprint arXiv:1102.1944}, 2011.

\bibitem[CS12]{CheskidovShvydkoy12}
A.~Cheskidov and R.~Shvydkoy.
\newblock Euler equations and turbulence: analytical approach to intermittency.
\newblock {\em arXiv preprint arXiv:1202.1460}, 2012.

\bibitem[CET94]{ConstantinETiti94}
P.~Constantin, W.~E, and E.S. Titi.
\newblock Onsager's conjecture on the energy conservation for solutions of
  {E}uler's equation.
\newblock {\em Comm. Math. Phys.}, 165(1):207--209, 1994.

\bibitem[CFT85]{ConstantinFoiasTemam85}
P.~Constantin, C.~Foias, and R.~Temam.
\newblock {\em Attractors representing turbulent flows}, volume 314.
\newblock American Mathematical Soc., 1985.

\bibitem[CGHV13]{ConstantinGlattHoltzVicol2013}
P.~Constantin, N.~Glatt-Holtz, and V.~Vicol.
\newblock Unique ergodicity for fractionally dissipated, stochastically forced
  {2D} {E}uler equations.
\newblock {\em Comm. Math. Phys.}, 2013.
\newblock (to appear).

\bibitem[CLT07]{ConstantinLevantTiti07}
P.~Constantin, B.~Levant, and E.S. Titi.
\newblock Regularity of inviscid shell models of turbulence.
\newblock {\em Physical Review E}, 75(1):016304, 2007.

\bibitem[DPZ92]{ZabczykDaPrato1992}
G.~Da~Prato and J.~Zabczyk.
\newblock {\em Stochastic equations in infinite dimensions}, volume~44 of {\em
  Encyclopedia of Mathematics and its Applications}.
\newblock Cambridge University Press, Cambridge, 1992.

\bibitem[Deb13]{Debussche2013}
A.~Debussche.
\newblock Ergodicity results for the stochastic {N}avier-{S}tokes equations: An
  introduction.
\newblock In {\em Topics in Mathematical Fluid Mechanics}, Lecture Notes in
  Mathematics, pages 23--108. Springer Berlin Heidelberg, 2013.

\bibitem[DGHT11]{DebusscheGlattHoltzTemam1}
A.~Debussche, N.~Glatt-Holtz, and R.~Temam.
\newblock Local martingale and pathwise solutions for an abstract fluids model.
\newblock {\em Physica D}, 2011.
\newblock (to appear).

\bibitem[DLS13]{DeLellisSzekelyhidi13}
C.~De~Lellis and L.~Sz{{\'e}}kelyhidi, Jr.
\newblock Dissipative continuous {E}uler flows.
\newblock {\em Invent. Math.}, 193(2):377--407, 2013.

\bibitem[DN74]{DesnianskiiNovikov1974}
V.N. Desnianskii and E.A. Novikov.
\newblock Evolution of turbulence spectra toward a similarity regime.
\newblock {\em Akademiia Nauk SSSR, Izvestiia, Fizika Atmosfery i Okeana},
  10:127--136, 1974.

\bibitem[Doo48]{Doob1948}
J.L. Doob.
\newblock Asymptotic properties of {M}arkoff transition prababilities.
\newblock {\em Trans. Amer. Math. Soc.}, 63:393--421, 1948.

\bibitem[EKMS00]{EKhaninSinai00}
W.~E, K.~Khanin, A.~Mazel, and Y.~Sinai.
\newblock Invariant measures for {B}urgers equation with stochastic forcing.
\newblock {\em Ann. of Math. (2)}, 151(3):877--960, 2000.

\bibitem[EM01]{EMattingly2001}
W.~E and J.C. Mattingly.
\newblock Ergodicity for the {N}avier-{S}tokes equation with degenerate random
  forcing: finite-dimensional approximation.
\newblock {\em Comm. Pure Appl. Math.}, 54(11):1386--1402, 2001.

\bibitem[Eyi96]{Eyink96}
G.L. Eyink.
\newblock Exact results on stationary turbulence in {2D}: consequences of
  vorticity conservation.
\newblock {\em Physica D: Nonlinear Phenomena}, 91(1):97--142, 1996.

\bibitem[ES06]{EyinkSreeniviasan06}
G.L. Eyink and K.R. Sreenivasan.
\newblock Onsager and the theory of hydrodynamic turbulence.
\newblock {\em Rev. Modern Phys.}, 78(1):87--135, 2006.

\bibitem[FR08]{FlandoliRomito08}
F.~Flandoli and M.~Romito.
\newblock Markov selections for the 3d stochastic {N}avier-{S}tokes equations.
\newblock {\em Probability Theory and Related Fields}, 140(3-4):407--458, 2008.

\bibitem[FMRT01]{FoiasManleyRosaTemam01}
C.~Foias, O.~Manley, R.~Rosa, and R.~Temam.
\newblock {\em Navier-{S}tokes equations and turbulence}, volume~83 of {\em
  Encyclopedia of Mathematics and its Applications}.
\newblock Cambridge University Press, Cambridge, 2001.

\bibitem[FGHRT13]{FoldesGlattHoltzRichardsThomann2013}
J.~Foldes, N.~Glatt-Holtz, G.~Richards, and E.~Thomann.
\newblock Ergodic and mixing properties of the {B}oussinesq equations with a
  degenerate random forcing.
\newblock {\em arXiv preprint arXiv:1311.3620}, 11 2013.

\bibitem[FP04]{FriedlanderPavlovic04}
S.~Friedlander and N.~Pavlovi{{\'c}}.
\newblock Blowup in a three-dimensional vector model for the {E}uler equations.
\newblock {\em Comm. Pure Appl. Math.}, 57(6):705--725, 2004.

\bibitem[Fri95]{Frisch95}
U.~Frisch.
\newblock {\em Turbulence}.
\newblock Cambridge University Press, Cambridge, 1995.
\newblock The legacy of A. N. Kolmogorov.

\bibitem[GHSV13]{GlattHoltzSverakVicol2013}
N.~Glatt-Holtz, V.~Sverak, and V.~Vicol.
\newblock On inviscid limits for the stochastic {N}avier-{S}tokes equations and
  related models.
\newblock {\em arXiv:1302.0542}, 2013.


\bibitem[HM06]{HairerMattingly06}
M.~Hairer and J.C.~Mattingly.
\newblock Ergodicity of the 2{D} {N}avier-{S}tokes equations with degenerate
  stochastic forcing.
\newblock {\em Ann. of Math. (2)}, 164(3):993--1032, 2006.

\bibitem[HM08]{HairerMattingly2008}
M.~Hairer and J.C.~Mattingly.
\newblock Spectral gaps in {W}asserstein distances and the 2{D} stochastic
  {N}avier-{S}tokes equations.
\newblock {\em Ann. Probab.}, 36(6):2050--2091, 2008.

\bibitem[HM11]{HairerMattingly2011}
M.~Hairer and J.C.~Mattingly.
\newblock A theory of hypoellipticity and unique ergodicity for semilinear
  stochastic pdes.
\newblock {\em Electron. J. Probab.}, 16(23):658--738, 2011.

\bibitem[H67]{Hormander1967}
L.~H\"ormander.
\newblock Hypoelliptic second order differential equations.
\newblock {\em Acta Math.}, 119:147--171, 1967.

\bibitem[Ise12]{Isett12}
P.~Isett.
\newblock H{\"o}lder continuous euler flows in three dimensions with compact
  support in time.
\newblock {\em arXiv preprint arXiv:1211.4065}, 11 2012.

\bibitem[KP05]{KatzPavlovic05}
N.H.~Katz and N.~Pavlovi{{\'c}}.
\newblock Finite time blow-up for a dyadic model of the {E}uler equations.
\newblock {\em Trans. Amer. Math. Soc.}, 357(2):695--708 (electronic), 2005.

\bibitem[Km60]{Khasminskii1960}
R.Z.~Khas'~minskii.
\newblock Ergodic properties of recurrent diffusion processes and stabilization
  of the solution to the {C}auchy problem for parabolic equations.
\newblock {\em Theory of Probability and Its Applications}, 5(2):179--196, 1960.

\bibitem[KZ05]{KiselevZlatos05}
A.~Kiselev and A.~Zlato{\v{s}}.
\newblock On discrete models of the {E}uler equation.
\newblock {\em International Mathematics Research Notices}, 2005(38):2315--2339, 2005.

\bibitem[Kol41a]{Kolmogorov41a}
A.N.~Kolmogorov.
\newblock Local structure of turbulence in an incompressible fluid at very high
  {R}eynolds number.
\newblock {\em Dokl. Acad. Nauk SSSR}, 30(4):299--303, 1941.
\newblock Translated from the Russian by V. Levin, Turbulence and stochastic
  processes: Kolmogorov's ideas 50 years on.

\bibitem[Kol41b]{Kolmogorov41b}
A.N.~Kolmogorov.
\newblock {On degeneration of isotropic turbulence in an incompressible viscous
  liquid}.
\newblock In {\em Dokl. Akad. Nauk SSSR}, volume~31, pages 538--540, 1941.

\bibitem[KW12]{KomorowskiWalczuk2012}
T.~Komorowski and A.~Walczuk.
\newblock Central limit theorem for {M}arkov processes with spectral gap in the
  {W}asserstein metric.
\newblock {\em Stochastic Process. Appl.}, 122(5):2155--2184, 2012.

\bibitem[KS12]{KuksinShirikian12}
S.~Kuksin and A.~Shirikyan.
\newblock {\em Mathematics of Two-Dimensional Turbulence}.
\newblock Number 194 in Cambridge Tracts in Mathematics. Cambridge University
  Press, 2012.

\bibitem[MSVE07]{MattinglySuidanVanden07}
J.C.~Mattingly, T.~Suidan, and E.~Vanden-Eijnden.
\newblock Simple systems with anomalous dissipation and energy cascade.
\newblock {\em Communications in Mathematical Physics}, 276(1):189--220, 2007.

\bibitem[Nov65]{Novikov1965}
E.A.~Novikov.
\newblock Functionals and the random-force method in turbulence theory.
\newblock {\em Soviet Physics JETP}, 20:1290--1294, 1965.

\bibitem[Nua06]{Nualart2006}
D.~Nualart.
\newblock {\em The {M}alliavin calculus and related topics}.
\newblock Probability and its Applications (New York). Springer-Verlag, Berlin,
  second edition, 2006.

\bibitem[Ons49]{Onsager1949}
L.~Onsager.
\newblock Statistical hydrodynamics.
\newblock {\em Il Nuovo Cimento (1943-1954)}, 6:279--287, 1949.

\bibitem[Rob03]{Robert03}
R.~Robert.
\newblock Statistical hydrodynamics ({O}nsager revisited).
\newblock In {\em Handbook of mathematical fluid dynamics, {V}ol. {II}}, pages
  1--54. North-Holland, Amsterdam, 2003.

\bibitem[Rom04]{Romito2004}
M.~Romito.
\newblock Ergodicity of the finite dimensional approximation of the 3{D}
  {N}avier-{S}tokes equations forced by a degenerate noise.
\newblock {\em J. Statist. Phys.}, 114(1-2):155--177, 2004.

\bibitem[Rom11]{Romito11}
M.~Romito.
\newblock Uniqueness and blow-up for the noisy viscous dyadic model.
\newblock {\em arXiv preprint arXiv:1111.0536}, 2011.

\bibitem[Shv09]{Shvydkoy09}
R.~Shvydkoy.
\newblock On the energy of inviscid singular flows.
\newblock {\em Journal of Mathematical Analysis and Applications},
  349(2):583--595, 2009.

\bibitem[Tao14]{Tao14}
T.~Tao.
\newblock Finite time blowup for an averaged three-dimensional
  {N}avier-{S}tokes equation.
\newblock {\em arXiv preprint arXiv:1402.0290}, 02 2014.

\bibitem[Tem97]{Temam1997}
R.~Temam.
\newblock {\em Infinite-dimensional dynamical systems in mechanics and
  physics}, volume~68 of {\em Applied Mathematical Sciences}.
\newblock Springer-Verlag, New York, second edition, 1997.

\bibitem[VKF79]{VishikKomechFusikov1979}
M.I.~Vishik, A.I.~Komech, and A.V.~Fursikov.
\newblock Some mathematical problems of statistical hydromechanics.
\newblock {\em Uspekhi Mat. Nauk}, 34(5(209)):135--210, 256, 1979.

\end{thebibliography}

\end{document}